\numberwithin{equation}{section}
\let\doendproof\endproof
\renewcommand\endproof{~\hfill\qed\doendproof}
\spnewtheorem{assumptions}[theorem]{Assumptions}{\bfseries}{\rmfamily}
\spnewtheorem{notation}[theorem]{Notation}{\bfseries}{\rmfamily}
\spnewtheorem{observation}[theorem]{Observation}{\bfseries}{\rmfamily}
\spnewtheorem{defn}[theorem]{Definition}{\bfseries}{\rmfamily}
\spnewtheorem{expl}[theorem]{Example}{\bfseries}{\rmfamily}
\spnewtheorem{rem}[theorem]{Remark}{\bfseries}{\rmfamily}
\spnewtheorem{construction}[theorem]{Construction}{\bfseries}{\rmfamily}
\spnewtheorem{appsec}[theorem]{}{\bfseries}{\rmfamily}
\renewcommand{\S}{\mathbb{S}}
\newcommand{\K}{\mathds{K}}
\newcommand{\dash}{\mathord{-}}
\newcommand{\Set}{\mathbf{Set}}
\newcommand{\Pos}{\mathbf{Pos}}
\newcommand{\JSL}{\mathbf{JSL}}
\newcommand{\Vect}[1]{#1\text{-}\mathbf{Vec}}
\newcommand{\Alg}[1]{#1\text{-}\mathbf{Alg}}
\newcommand{\A}{\mathscr{A}}
\newcommand{\B}{\mathscr{B}}
\newcommand{\C}{\mathscr{C}}
\newcommand{\D}{\mathscr{D}}
\newcommand{\X}{\mathscr{X}}
\newcommand{\V}{\mathcal{V}}
\newcommand{\ol}{\overline}
\newcommand{\dnarrow}{\mathord{\downarrow}}
\renewcommand{\epsilon}{\varepsilon}
\newcommand{\id}{\mathit{id}}
\newcommand{\Id}{\mathsf{Id}}
\newcommand{\seq}{\subseteq}
\newcommand{\xra}{\xrightarrow}
\newcommand{\defeq}{\coloneqq}
\renewcommand{\o}{\cdot}
\renewcommand{\t}{\otimes}
\newcommand{\takeout}[1]{\empty}
\renewcommand{\phi}{\varphi}
\newcommand{\ra}{\rightarrow}
\newcommand{\Ra}{\Rightarrow}
\newcommand{\Lra}{\Leftrightarrow}
\newcommand{\BA}{\mathbf{BA}}
\newcommand{\DL}{\mathbf{DL}}
\newcommand{\Rec}[1]{\mathbf{Rec}(#1)}
\newcommand{\Dn}{\mathcal{D}}
\newcommand{\Mon}[1]{\mathbf{Mon}(#1)}
\newcommand{\SMod}[1]{\mathbf{Mod}(#1)}
\newcommand{\Mod}[1]{#1\text{-}\mathbf{Mod}}
\newcommand{\under}[1]{|#1|}
\newcommand{\epito}{\twoheadrightarrow}
\newcommand{\monoto}{\rightarrowtail}
\newcommand{\Pow}{\mathcal{P}}
\newcommand{\one}{\mathbf{1}}
\renewcommand{\L}{\mathbb{L}}
\renewcommand{\c}{\circ}
\title{Sch\"utzenberger Products in a Category}
\titlerunning{Sch\"utzenberger Products in a Category}
\author{Liang-Ting Chen\inst{1}\fnmsep\thanks{Liang-Ting Chen acknowledges support from AFOSR.} \and
  Henning Urbat\inst{2}\fnmsep\thanks{Henning Urbat acknowledges support from 
  DFG under project AD 187/2-1.}}
\authorrunning{L.-T.~Chen, H.~Urbat}
\institute{Department of Information and Computer Sciences \\ University of
  Hawaii at Manoa, Honolulu, HI, USA
  \and Institut f\"{u}r Theoretische Informatik \\ Technische Universit\"at
  Braunschweig, Germany}
\begin{document}
\maketitle

\begin{abstract}
The Sch\"utzenberger product of monoids is a key tool for the algebraic 
treatment of 
language concatenation. In this paper we generalize the Sch\"utzenberger product to the level of monoids in
an algebraic category $\D$, leading to a uniform view of the corresponding 
constructions for 
monoids (Sch\"utzenberger), ordered monoids (Pin), idempotent semirings 
(Kl\'ima and Pol\'ak) and algebras over a field (Reutenauer).
In addition, assuming that $\D$ is part of a Stone-type duality, we derive a 
characterization 
of the languages recognized by Sch\"utzenberger products.
\end{abstract}

\section{Introduction}
Since the early days of automata theory, it has been known that regular 
languages are precisely the languages 
recognized by finite monoids. This observation is the origin of 
algebraic language theory. One of the classical and ongoing challenges of this 
theory is 
the algebraic treatment of the concatenation of languages. 
The most important
tool for this purpose is the \emph{Sch\"utzenberger product} $M\diamond N$ 
of   
two monoids $M$ and $N$, introduced in \cite{sch65}. Its key 
property is that it recognizes all marked products of languages 
recognized by $M$ and $N$. Later, 
Reutenauer \cite{reutenauer79} showed 
that $M\diamond N$ is the ``smallest''  monoid with this property: any 
language recognized by $M\diamond N$ is a 
boolean combination of such marked products.

In the past decades, the original notion of language recognition by finite 
monoids has 
been refined to other algebraic structures, namely to ordered monoids 
by Pin~\cite{pin95}, to idempotent semirings by Pol\'ak~\cite{polak01}, and to 
associative algebras over a 
field 
by Reutenauer~\cite{reu80}. For all these structures, 
a Sch\"utzenberger 
product was introduced separately~\cite{pin03,klimapolak10,reu80}. 
Moreover,
Reutenauer's characterization of the languages recognized by Sch\"utzenberger 
products has been adapted to ordered monoids and idempotent semirings, 
replacing 
boolean combinations by positive boolean 
combinations~\cite{pin03} and finite unions~\cite{klimapolak10}, respectively.

This paper presents a unifying approach to Sch\"utzenberger products, 
covering the aforementioned constructions and results as special cases. Our 
starting point is the observation that all the algebraic structures appearing 
above (monoids, ordered monoids, idempotent semirings, and algebras over a 
field $\K$)  are 
\emph{monoids} interpreted in some variety $\D$ of algebras or 
ordered 
algebras, viz.\ $\D=$ sets, posets, semilattices, and $\K$-vector spaces, 
respectively. 
Next, we note that these categories $\D$ are related to the
category $\Mod{\S}$ of modules over some semiring $\S$. Indeed, semilattices and 
vector spaces are precisely modules over the two-element idempotent 
semiring
$\S=\{0,1\}$ and the field $\S=\K$, respectively. And every set 
or poset freely
\emph{generates} a semilattice (i.e.\ a module over $\{0,1\}$), viz. 
the semilattice
of finite subsets or finitely generated down-sets. Precisely speaking,
each of the above categories $\D$ admits a \emph{monoidal adjunction}
 \begin{equation}\label{eq:monoidaladj}
 \xymatrix@+1em{
   \Mod{\S} \ar@<1.2ex>[r]^-{U}
 \ar@{}[r]|-{\top} &
 \D 
 \ar@<1.2ex>[l]^-{F}}
 \end{equation}
for some semiring $\S$, where $U$ is a 
 forgetful functor and $F$ is a free construction. 

In this paper we introduce the Sch\"utzenberger 
product at the level of an 
abstract monoidal adjunction  \eqref{eq:monoidaladj}: for any two 
$\D$-monoids $M$ and 
$N$, we construct a $\D$-monoid $M\diamond N$ that recognizes all 
marked products of languages recognized by $M$ and $N$ (Theorem 
\ref{thm:schurec}), and prove that $M\diamond N$ 
is the ``smallest'' 
$\D$-monoid with this property (Theorem \ref{thm:universalprop}). 
Further, we derive a characterization of the languages recognized 
by $M\diamond N$ in
the spirit of Reutenauer's theorem \cite{reutenauer79}. To this end, we 
consider another variety $\C$ that is 
\emph{dual} to $\D$ on the level 
of finite 
algebras. For example, for $\D=$ sets we choose $\C=$ boolean algebras, since Stone's representation theorem gives a dual
equivalence between finite boolean algebras and finite sets.
We then prove that every language recognized by 
$M\diamond N$ is a 
``$\C$-algebraic combination'' of languages recognized by $M$ 
and $N$ and 
their marked products (Theorem \ref{thm:schurec2}). The explicit 
use of duality 
makes our proof conceptually different 
from the original ones.

By instantiating \eqref{eq:monoidaladj} to the proper adjunctions, we 
recover the Sch\"utzenberger product for monoids, ordered monoids, idempotent 
semirings and algebras over a field, and obtain a new Sch\"utzenberger 
product 
for algebras over a commutative semiring. Moreover, our Theorems 
\ref{thm:schurec} and \ref{thm:schurec2} specialize to the corresponding 
results
\cite{reutenauer79,pin03,klimapolak10} for (ordered) monoids and idempotent 
semirings. In the 
case of $\K$-algebras,
Theorem \ref{thm:schurec2} appears to be a new result. Apart from that, we 
believe that the main contribution of our paper is the identification of a categorical setting for language concatenation. We hope that the 
generality and the conceptual nature of our approach can contribute to an improved understanding
of the various ad hoc constructions and separate results appearing in the
literature.

\paragraph{Related work.} In recent years, categorical approaches to 
algebraic language theory have been a growing research topic. The present 
paper is a natural continuation of \cite{amu15}, where
we showed that the construction of syntactic monoids works at the level 
of  
$\D$-monoids in any commutative variety $\D$, allowing for a uniform treatment of 
syntactic (ordered) monoids, idempotent semirings and algebras over a 
field. The systematic use of duality in algebraic language theory 
originates in the work of Gehrke, 
Grigorieff, and Pin \cite{ggp08}, who interpreted Eilenberg's variety theorem 
in terms of Stone duality. In our papers
\cite{ammu14,ammu15,cu15} we extended their approach to an abstract 
Stone-type 
duality, leading to a uniform view of several Eilenberg-type theorems for 
regular 
languages. See also \cite{bcr15,uramoto16} for related 
duality-based work. Recently, Boja\'nczyk \cite{boj15} proposed to use 
\emph{monads} 
instead of monoids to get a categorical grasp on languages beyond finite 
words. By 
combining this idea with our duality framework, we established in 
\cite{camu16,uacm16} a variety theorem that covers 
most Eilenberg-type correspondences known in the literature, e.g. for
 languages 
of finite words, infinite words, words on linear orderings, trees, and cost 
functions.

\section{Preliminaries}
In this paper we study monoids and language recognition in algebraic categories.
The reader is assumed to be familiar with basic universal algebra and category
theory; see the Appendix for a toolkit. We call a
variety $\D$ of algebras or ordered algebras  \emph{commutative} if, for any
two algebras $A,B\in \D$, the set $[A,B]$ of morphisms from $A$ to $B$ forms an
algebra of $\D$ with operations taken pointwise in $B$. Our applications involve
the commutative varieties $\Set$  (sets), $\Pos$ (posets, as
ordered algebras without any operation), $\JSL$ (join-semilattices with $0$),
$\Vect{\K}$ (vector spaces over a field $\K$) and
$\Mod{\S}$ (modules over a commutative semiring $\S$ with $0, 1$). Note that $\JSL$ and 
$\Vect{\K}$ are special cases of $\Mod{\S}$ for $\S=\{0,1\}$, the two-element
semiring with $1+1=1$, and $\S=\K$, respectively.

\begin{notation}
Let $\A$, $\B$, $\C$, $\D$ always denote 
commutative varieties of algebras or ordered algebras. We write 
$\Psi=\Psi_\D\colon 
\Set\ra\D$ for the left adjoint to the forgetful functor 
$\under{\mathord{-}}\colon\D\ra\Set$; thus $\Psi X$ is the free algebra of $\D$ 
over~$X$. For simplicity, we assume that $X$ is a subset 
of $\under{\Psi X}$ and the universal map $X\monoto \under{\Psi X}$ is the inclusion. Denote by $\one_\D = \Psi 1$ the free 
one-generated algebra.
\end{notation}

\begin{expl}\label{ex:psi}
\begin{enumerate}[(1)]
\item For $\D=\Set$ or $\Pos$ we have $\Psi X = X$ (discretely ordered). 
\item For $\D = \JSL$ we get $\Psi X 
= (\Pow_f X,\cup)$, the semilattice of finite 
subsets of $X$.
\item For $\D=\Mod{\S}$ we have $\Psi X=
\S^{(X)}$, the $\S$-module of all finite-support
functions $X\ra \S$ with sum and scalar product defined pointwise. 
\end{enumerate}
\end{expl}

\begin{defn} Let $A,B,C\in \D$. By a 
\emph{bimorphism} from 
$A$, $B$ to 
$C$ is meant a function $f\colon \under{A}\times\under{B}\ra\under{C}$ such that the maps 
$f(a,\mathord{-})\colon \under{B}\ra \under{C}$ and $f(\mathord{-},b)\colon \under{A}\ra 
\under{C}$ carry morphisms of $\D$ for every $a\in\under{A}$ and $b\in 
\under{B}$. A \emph{tensor product} of $A$ and $B$ is a universal bimorphism 
$t_{A,B}\colon \under{A}\times \under{B} \ra \under{A\t B}$, in the sense that
for any bimorphism $f\colon \under{A}\times \under{B} \ra \under{C}$
there is a 
unique  $f'\colon A\t B \ra C$ in $\D$ with $f'\c t_{A,B} = f$. We denote by $a\t 
b$ the element $t_{A,B}(a,b)\in\under{A\t B}$.
\end{defn} 

\begin{expl}
In $\Set$ and $\Pos$ we have $A\t B= A\times B$.
In $\Mod{\S}$, $A\t B$ is the usual tensor product of $\S$-modules, 
and $t_{A,B}$ is the universal $\S$-bilinear map.
\end{expl}

\begin{rem}\label{rem:tensorproducts}
\begin{enumerate}[(1)]
\item Tensor products exist in any commutative variety $\D$, see \cite{bn76}.
\item $\t$ is associative and has unit 
$\one_\D$; that is, there are natural isomorphisms  
\[\alpha_{A,B,C}\colon (A\t B)\t C \cong A\t (B\t C),\quad \rho_A\colon A\t \one_\D 
\cong A,\quad \lambda_A\colon \one_\D\t A \cong A.\]
\item Given $f\colon A\to C$ and $g\colon B\to D$ in $\D$,  denote by $f\t g\colon A\t B\to C\t 
D$ the morphism induced by the bimorphism 
$\under{A}\times\under{B}\xra{f\times g} \under{C}\times\under{D} 
\xra{t_{C,D}}\under{C\t D}$.
\end{enumerate}
\end{rem}

\begin{defn}
  \label{def:dmonoid}
A \emph{$\D$-monoid} is a triple 
$(M, 1, \bullet)$ where $M$ is an object of $\D$ and $(\under{M}, 1,\bullet)$ 
is a monoid whose multiplication
$\under{M}\times\under{M}\xra{\bullet}\under{M}$ is a bimorphism of $\D$. A
\emph{morphism} $h\colon (M, 1_M,\bullet_M)\ra (N, 1_N,\bullet_N)$ of
$\D$-monoids is a morphism $h\colon M\ra N$ in $\D$ with $h(1_M)=1_N$ and 
$h(m\bullet_M m') = h(m)\bullet_N h(m')$ for $m,m'\in\under{M}$.
We denote
the category of $\D$-monoids by $\Mon{\D}$.
\end{defn}

\begin{expl}\label{ex:dmon}
Monoids in $\D=\Set$, $\Pos$, $\JSL$ and $\Mod{\S}$ are precisely 
monoids, ordered monoids, idempotent semirings, and associative algebras over 
$\S$.
\end{expl}

\begin{proposition}[see \cite{ammu14}]
The free $\D$-monoid on a set $\Sigma$ is carried by $\Psi 
\Sigma^*\in \D$, the free algebra in $\D$ on the set $\Sigma^*$ of finite 
words 
over $\Sigma$. Its multiplication extends the concatenation of words in 
$\Sigma^*$, and 
its 
unit is the empty word $\epsilon$.
\end{proposition}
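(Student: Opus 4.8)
The plan is to show that $\PS \in \D$, equipped with the multiplication extending word concatenation and with unit $\epsilon$, has the universal property of the free $\D$-monoid on $\Sigma$. Concretely, write $\iota\colon \Sigma\ra \under{\PS}$ for the composite of $\Sigma \hra \Sigma^*$ (the inclusion of one-letter words) with the universal map $\Sigma^* \monoto \under{\Psi\Sigma^*}$. I must verify two things: first, that $(\PS, \epsilon, \bullet)$ genuinely is a $\D$-monoid, i.e.\ that the concatenation of words extends to a well-defined multiplication $\PS \t \PS \ra \PS$ in $\D$; and second, that for every $\D$-monoid $(M, 1, \bullet_M)$ and every function $f\colon \Sigma \ra \under{M}$ there is a unique $\D$-monoid morphism $\hat f\colon \PS \ra M$ with $\hat f \circ \iota = f$.

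For the first part, I would build the multiplication in two stages using the adjunction $\Psi \dashv \under{\mathord-}$ together with the tensor product. The monoid structure on $\Sigma^*$ is a map $\Sigma^* \times \Sigma^* \ra \Sigma^*$; postcomposing with $\Sigma^* \monoto \under{\PS}$ and using freeness of $\Psi$ on each argument gives a bimorphism $\under{\PS} \times \under{\PS} \ra \under{\PS}$, hence by the universal property of the tensor product a morphism $\bullet\colon \PS \t \PS \ra \PS$ in $\D$. Associativity and unitality then hold because both sides of each required equation are morphisms out of a tensor power of $\PS$ (using Remark \ref{rem:tensorproducts}(2) to reassociate), and two such morphisms agree as soon as they agree after precomposition with the appropriate product of the generating maps $\Sigma^* \ra \under{\PS}$; on words the equations hold since $\Sigma^*$ is a monoid. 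This is the standard ``lift a monoid structure along a monoidal left adjoint'' argument, and $\Psi\colon \Set \ra \D$ is strong monoidal ($\Psi(X\times Y)\cong \Psi X \t \Psi Y$, $\Psi 1 \cong \one_\D$) precisely because $\t$ classifies bimorphisms and $\times$ classifies ``bifunctions''.

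For the universal property, given $f\colon \Sigma \ra \under{M}$, first extend $f$ to a monoid homomorphism $f^*\colon \Sigma^* \ra \under{M}$ by freeness of $\Sigma^*$ in $\mathbf{Mon}(\Set)$; then, since $\under M \in \D$, the adjunction $\Psi \dashv \under{\mathord-}$ yields a unique $\D$-morphism $\hat f\colon \PS \ra M$ with $\hat f \circ (\Sigma^* \monoto \under{\PS}) = f^*$ as functions. It remains to check that $\hat f$ is a $\D$-monoid morphism, i.e.\ $\hat f(\epsilon) = 1$ and $\hat f \circ \bullet = \bullet_M \circ (\hat f \t \hat f)$. The unit condition is immediate since $f^*(\epsilon)=1$. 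The multiplicativity square is an equation between two $\D$-morphisms $\PS \t \PS \ra M$; by the universal property of $\PS\t\PS$ it suffices to check it on elements $v \t w$ with $v,w \in \Sigma^*$, where it reduces to $f^*(vw) = f^*(v)\bullet_M f^*(w)$, true because $f^*$ is a monoid homomorphism. Uniqueness of $\hat f$ follows because any $\D$-monoid morphism $g\colon \PS \ra M$ with $g\circ\iota = f$ restricts on $\Sigma^*$ to a monoid homomorphism extending $f$, hence equals $f^*$ there by uniqueness in $\mathbf{Mon}(\Set)$, and then equals $\hat f$ as a $\D$-morphism by uniqueness in the adjunction $\Psi\dashv\under{\mathord-}$.

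The main obstacle, such as it is, lies in the first part: one must be careful that the bimorphism-to-tensor passage really produces a morphism in $\D$ and that associativity/unitality can be verified ``on generators'', which requires knowing that a bimorphism (resp.\ its induced tensor map) is determined by its values on generating elements. This is exactly the density/epimorphism property that $\Sigma^* \monoto \under{\Psi\Sigma^*}$ generates $\PS$ and that, correspondingly, the elements $v\t w$ ($v,w\in\Sigma^*$) generate $\PS\t\PS$; it is a routine but essential consequence of the universal properties of $\Psi$ and of $\t$. Everything else is bookkeeping with the adjunction and Remark \ref{rem:tensorproducts}.
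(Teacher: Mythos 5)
Your proof is correct, and since the paper itself gives no proof of this proposition (it is cited from \cite{ammu14}), there is nothing in the text to compare against; your argument is the standard one and is the route that reference takes. The only step you sketch rather than spell out --- passing from ``extend concatenation by freeness in each argument'' to an actual bimorphism --- is exactly where commutativity of $\D$ enters (via the correspondence between bimorphisms and morphisms into the hom-algebra $[B,C]$ of \ref{app:tensorproducts}(3)), and your reliance on the generators $v\t w$, $v,w\in\Sigma^*$, is justified by the isomorphism $\Psi X\t\Psi Y\cong\Psi(X\times Y)$ from Example \ref{ex:monoidalfunc}(1), so the argument is sound as written.
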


\begin{expl}
\begin{enumerate}[(1)]
\item In $\D=\Set$ or $\Pos$ we have $\Psi \Sigma^* = \Sigma^*$ (discretely 
ordered). 
\item In $\D=\JSL$ we have $\Psi\Sigma^* = \Pow_f 
\Sigma^*$, 
the idempotent semiring of all finite languages over $\Sigma$ w.r.t. union 
and  
concatenation of languages.

\item In $\D=\SMod{\S}$ we get $\Psi\Sigma^* = \S[\Sigma]$, the $\S$-algebra 
 of all polynomials $\Sigma_{i=1}^n c(w_i)w_i$ (equivalently, finite-support functions 
 $c\colon\Sigma^*\to \S$) w.r.t. the usual sum, scalar product and multiplication of polynomials.
\end{enumerate}
\end{expl}

\begin{rem}\label{rem:moncat}
Since the multiplication $\bullet\colon \under{M}\times\under{M}\to\under{M}$ of a 
$\D$-monoid $(M,1,\bullet)$ forms a bimorphism, it 
 corresponds to a morphism $\mu_M\colon M\t M\to M$ in $\D$, mapping $m\t 
 m'\in \under{M\t M}$ to $m\bullet m'\in \under{M}$. Likewise, the 
 unit 
$1\in\under{M}$ corresponds to the morphism $\iota_M\colon \one_\D \to M$ 
sending the generator of $\one_\D$ to $1$. We can thus represent 
a 
$\D$-monoid 
$(M,1,\bullet)$ as the 
triple $(M,\iota_M,\mu_M)$.
\end{rem}

\begin{rem}\label{rem:tensormonoid}
For any two $\D$-monoids $M$ and $N$, 
the tensor 
product 
$M\t N$ in $\D$ carries a $\D$-monoid structure with unit 
$\one_\D \xra{\cong} \one_\D\t \one_\D \xra{\iota_M\t \iota_N} M\t N$
and multiplication
$ (M\t N)\t (M\t N) \xra{\cong} (M\t M)\t 
(N\t N) \xra{\mu_M\t \mu_N} M\t N$,
see e.g. \cite{porst08}. Equivalently, the unit of $M\t N$ is the 
element 
$1_M\t 1_N$, and the multiplication is determined by $(m\t 
n)\bullet
(m'\t n') = (m\bullet_M m')\t (n\bullet_N n')$.
\end{rem}

\begin{defn}\label{def:monoidalfunc}
A \emph{monoidal functor} $(G,\theta)\colon\C\to\D$ is a functor $G\colon \C\to
\D$ with a morphism $\theta_1\colon \one_\D \to G\one_\C$ and morphisms
$\theta_{A,B}\colon GA\t GB \to G(A\t B)$ natural in $A,B\in \C$ such that the
following squares commute (omitting indices): \vspace{-0.4cm}
\begin{center}
\begin{tabular}[x]{cc}
\parbox{0.6\textwidth}{
$
\xymatrix@C-1em@R-1.5em{
(GA\t GB)\t GC \ar[r]^\alpha \ar[d]_{\theta\t GC} & GA \t 
(GB\t 
GC) \ar[d]^{GA\t \theta}\\
G(A\t B) \t GC \ar[d]_{\theta} & GA \t G(B\t C) 
\ar[d]^\theta \\
G((A\t B)\t C) \ar[r]_{G\alpha}& G(A\t (B\t C))
}
$
}
&
\parbox{0.4\textwidth}{
$
\xymatrix@R-1.5em{
GA\t \one_\D \ar[r]^-{GA\t \theta} \ar[d]_-{\rho} & GA\t G\one_\C 
\ar[d]^-\theta\\
GA & G(A\t \one_\C) \ar[l]_-{G\rho}
}
$\\
$\xymatrix@R-1.5em{
\one_\D\t GA \ar[r]^-{\theta\t GA} \ar[d]_-{\lambda} & G\one_\C\t GA 
\ar[d]^-\theta\\
GA & G(\one_\C\t A) \ar[l]_-{G\lambda}
}
$
}
\end{tabular}
\end{center}
Given another monoidal functor $(G',\theta')\colon \C\to \D$, a natural 
transformation $\phi\colon G\to G'$ is called \emph{monoidal} if the following 
diagrams commute:
\[  
\xymatrix@C+1em@R-1.5em{
GA\t GB \ar[r]^{\phi_A\t \phi_B} \ar[d]_{\theta} & G'A\t G'B 
\ar[d]^{\theta'}\\
G(A\t B) \ar[r]_{\phi_{A\t B}} & G'(A\t B)
}\qquad
\xymatrix@R-1.5em{
& \one_\D \ar[dl]_\theta \ar[dr]^{\theta'} & \\
G\one_\C \ar[rr]_{\phi_{\one_\C}} && G'\one_\C
}
\]
\end{defn}

\begin{expl}\label{ex:monoidalfunc}
\begin{enumerate}[(1)]
\item The functor 
$\under{\dash}\colon \D\to\Set$ is monoidal w.r.t. the universal map
$1\monoto \under{\one_\D}=\under{\Psi 1}$ and the bimorphisms
$t_{A,B}\colon  
\under{A}\times\under{B}\to \under{A\t B}$. Its left adjoint $\Psi\colon  
\Set\to\D$ 
is also monoidal: there 
is a natural \emph{isomorphism} $\theta_{X,Y}\colon  
\Psi X \t \Psi Y \cong \Psi(X\times Y)$ with $\theta_{X,Y}^{-1}(x,y)=x\t 
y$ for $(x,y)\in X\times Y$. Together with $\theta_1 = \id\colon  \one_\D\to 
\Psi 
1$, this makes $\Psi$ a monoidal functor.
\item In particular, the functors
$\under{\dash}\colon  \JSL\to\Set$ and $\Pow_f\colon  \Set\to\JSL$ (see 
Example \ref{ex:psi}(2)) are 
monoidal w.r.t. the morphisms chosen as in (1).
\item The forgetful functor $U\colon  \JSL\to\Pos$  has a left adjoint
  $\Dn_f\colon  
\Pos\to\JSL$ constructed as 
follows. For any 
poset $A$ and 
$X_0\seq A$ denote by
$\mathord{\dnarrow} X_0 := 
\{\,a\in A: 
a\leq x \text{ for some $x\in X_0$}\,\}$ the down-set generated by $X_0$. 
Then $\Dn_f$ maps a poset $A$ to 
$\Dn_f(A) \defeq \{\,X\seq A : X=\dnarrow X_0 \text{ for some finite $X_0\seq 
A$}\,\},
$ the semilattice (w.r.t. union) of finitely generated down-sets of $A$, and a 
monotone map $h\colon A\to B$ to 
the semilattice morphism
$\Dn_f(h) \colon \Dn_f(A)\to \Dn_f(B)$
with $\Dn_f(h)(X) = \dnarrow h[X]$. Both $U$ and $\Dn_f$ carry monoidal 
functors; the required morphisms, see Definition \ref{def:monoidalfunc}, are 
chosen in analogy to $\under{\dash}$ and $\Pow_f$ in (2).
\item As a trivial example, the identity functor $\Id\colon \D\to\D$ is monoidal 
w.r.t. the identity 
morphisms $\id\colon \one_\D\to \Id(\one_\D)$ and $\id\colon \Id(A)\t\Id(B)\to \Id(A\t 
B)$.
\end{enumerate}  
\end{expl}
The importance of monoidal functors is that they preserve monoid structures:

\begin{lemma}\label{lem:monlifting}
Let $(G,\theta)\colon \C\to \D$ be a monoidal functor. Then $G$ lifts to the 
functor $\ol G\colon \Mon{\C}\to 
\Mon{\D}$ mapping a $\C$-monoid $(M,\iota,\mu)$ to the $\D$-monoid 
\vspace{-0.2cm}
\[ (GM,\ \one_\D\xra{\theta} G\one_\C \xra{G\iota} GM,\ GM\t GM \xra{\theta} 
G(M\t 
M)\xra{G\mu} GM), \]
and a $\C$-monoid morphism $h$ to $Gh$. 
\end{lemma}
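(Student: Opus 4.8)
The plan is to use the equivalent description of $\D$-monoids as triples $(M,\iota_M,\mu_M)$ from Remark~\ref{rem:moncat}. Set $\iota'\defeq G\iota\c\theta_1\colon\one_\D\to GM$ and $\mu'\defeq G\mu\c\theta_{M,M}\colon GM\t GM\to GM$. Then showing that $(GM,\iota',\mu')$ is a $\D$-monoid reduces to verifying the associativity square and the two unit triangles for these maps. Each of these follows by pasting together three kinds of subdiagrams: the corresponding coherence diagram of $(G,\theta)$ from Definition~\ref{def:monoidalfunc}, naturality squares of $\theta_{-,-}$, and the image under $G$ of the associativity (resp.\ unit) law of the $\C$-monoid $(M,\iota,\mu)$.

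Concretely, for associativity I would compare the two legs $\mu'\c(\mu'\t GM)$ and $\mu'\c(GM\t\mu')\c\alpha_{GM,GM,GM}$. Unfolding $\mu'=G\mu\c\theta_{M,M}$, this square decomposes into (i) the associativity coherence diagram of $(G,\theta)$ at $(M,M,M)$; (ii) two naturality squares of $\theta$, one for the morphism $\mu\colon M\t M\to M$ in each of the two arguments of $\theta_{-,-}$; and (iii) $G$ applied to the $\C$-monoid associativity law $\mu\c(\mu\t M)=\mu\c(M\t\mu)\c\alpha_{M,M,M}$. For the unit triangles, say on the left, one pastes the left unit coherence square of $(G,\theta)$, the naturality square of $\theta$ for $\iota\colon\one_\C\to M$, and $G$ applied to $\mu\c(\iota\t M)=\lambda_M$; the right-hand triangle is symmetric.

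For the morphism part, given a $\C$-monoid morphism $h\colon M\to N$ I would show $Gh\colon GM\to GN$ is a $\D$-monoid morphism. Compatibility with units amounts to $Gh\c G\iota_M\c\theta_1 = G\iota_N\c\theta_1$, which is just $G$ applied to $h\c\iota_M=\iota_N$, precomposed with $\theta_1$. Compatibility with multiplications, $Gh\c G\mu_M\c\theta_{M,M} = G\mu_N\c\theta_{N,N}\c(Gh\t Gh)$, follows by pasting the naturality identity $\theta_{N,N}\c(Gh\t Gh)=G(h\t h)\c\theta_{M,M}$ with $G$ applied to $h\c\mu_M=\mu_N\c(h\t h)$. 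Finally, $\ol G$ is a functor since it agrees with $G$ on underlying $\D$-objects and $\D$-morphisms, and $G$ preserves identities and composition.

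The only step requiring genuine work is the associativity diagram chase, which is somewhat lengthy because the associativity coherence diagram in Definition~\ref{def:monoidalfunc} is a hexagonal cell rather than a single square; but once the constituent subdiagrams are identified the chase is entirely mechanical, so I anticipate no real obstacle.
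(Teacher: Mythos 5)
Your proposal is correct and follows essentially the same route as the paper's proof (Lemma~\ref{lem:monlifting2} in the appendix): the associativity square for $(GM,\iota',\mu')$ is decomposed into exactly the same pieces --- the coherence hexagon of $(G,\theta)$, two naturality squares of $\theta$, and $G$ applied to the associativity law of $(M,\iota,\mu)$ --- and the morphism part likewise pastes the naturality of $\theta$ with $G$ applied to the $\C$-monoid morphism laws. No further comment is needed.
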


\begin{expl} 
\begin{enumerate}[(1)]
\item $\Pow_f\colon \Set\to\JSL$ lifts to the functor 
$\ol\Pow_f\colon \Mon{\Set}\to \Mon{\JSL}$ that maps a monoid $M$ to the  semiring 
$\ol\Pow_f M$ of finite subsets of $M$, with union as addition, and
multiplication 
$XY=\{\,xy: x\in Y,\,y\in Y\,\}$.
\item $\Dn_f\colon \Pos\to\JSL$ lifts to $\ol\Dn_f\colon 
\Mon{\Pos}\to\Mon{\JSL}$, mapping an ordered monoid $M$ to the semiring 
$\ol\Dn_f(M)$ of finitely generated down-sets of $M$, with
union as addition, and 
multiplication $XY = \dnarrow\{\,xy: x\in X,\,y\in Y\,\}$.
\end{enumerate}
\end{expl}

\begin{lemma}\label{lem:moncomp}
Let $(G,\theta)\colon \A\to \B$ and $(H,\sigma)\colon
\B\to\C$ be monoidal functors. Then the composite $HG\colon \A\to \C$ 
is a 
monoidal functor w.r.t. to
$H(\theta_1)\c \sigma_1\colon \one_\C\to HG(\one_\A)$ and $H(\theta_{A,B})\c 
\sigma_{GA,GB}\colon HGA\t HGB \to HG(A\t B)$.
\end{lemma}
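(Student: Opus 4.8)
The statement is purely a coherence fact about monoidal functors, so the proof will be a direct verification that $(HG, H\theta \cdot \sigma)$ satisfies the three axioms of Definition~\ref{def:monoidalfunc}, using that $(G,\theta)$ and $(H,\sigma)$ each satisfy them. First I would record the definitions of the structure morphisms for the composite, namely
\[ \sigma\!\cdot\! H\theta)_1 \;=\; \one_\C \xra{\sigma_1} H\one_\B \xra{H\theta_1} HG\one_\A \]
and, for objects $A,B\in\A$,
\[ (\sigma\!\cdot\! H\theta)_{A,B} \;=\; HGA\t HGB \xra{\sigma_{GA,GB}} H(GA\t GB) \xra{H\theta_{A,B}} HG(A\t B), \]
and note that naturality of the composite in $A,B$ is immediate since it is a pasting of two natural transformations (naturality of $\sigma$ in its arguments and of $\theta$ in $A,B$, with $H$ functorial).

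For the associativity hexagon of $HG$, the plan is a standard ``paste two hexagons'' argument. One starts from the big square for $HG$ with corners $(HGA\t HGB)\t HGC$ etc., and inserts the intermediate objects $H(GA\t GB)\t HGC$, $H\big((GA\t GB)\t GC\big)$, $HG(A\t B)\t HGC$, and their mirror images on the right-hand branch. The resulting diagram decomposes into: (i) two naturality squares for $\sigma$ against the morphisms $\theta_{A,B}\t GC$ and $GA\t\theta_{B,C}$; (ii) the associativity hexagon for $(H,\sigma)$ applied to the objects $GA,GB,GC\in\B$; (iii) the image under $H$ of the associativity hexagon for $(G,\theta)$ applied to $A,B,C\in\A$, using that $H$ preserves commuting diagrams and $H\alpha^\B_{GA,GB,GC}$ relates to the correct associator. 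The only subtlety is lining up the associator $\alpha$ correctly: since $H$ is merely monoidal (not strict), the associator on the $\D$-side that appears at the top of the $HG$-hexagon is $\alpha^\D_{HGA,HGB,HGC}$, and one uses precisely the $H$-hexagon to trade it for $H$ applied to $\alpha^\B$; this is the step where one must be careful about which associators live in which category.

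The two unit triangles are handled the same way but are easier: for the right-unit law, expand the triangle for $HG$ by inserting $H(GA\t\one_\B)$ and $HGA\t H\one_\B$, then decompose into a naturality square for $\sigma_{GA,-}$ against $\theta_1$, the right-unit triangle for $(H,\sigma)$ at the object $GA$, and the $H$-image of the right-unit triangle for $(G,\theta)$ at $A$; the left-unit law is symmetric.

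The main obstacle is not conceptual but bookkeeping: keeping track of three layers of associators/unitors ($\alpha^\A,\alpha^\B,\alpha^\D$ and likewise for $\rho,\lambda$) and making sure each sub-diagram is invoked at the right objects. In our setting every $\D$ in question is a commutative variety, so in fact the monoidal structures are symmetric and the relevant coherence isomorphisms are canonical, which keeps the diagram chase manageable. No genuinely new idea is needed; the lemma is the observation that monoidal functors compose, recorded here so that later we may freely compose the monoidal functors $U$, $F$, $\Psi$, $\Dn_f$ and their liftings.
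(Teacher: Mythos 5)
Your proposal is correct and follows essentially the same route as the paper's proof: define the composite structure maps, note naturality is automatic, and verify the associativity hexagon by decomposing it into two naturality squares for $\sigma$ (resp.\ $\theta$), the hexagon for $(H,\sigma)$ at $GA,GB,GC$, and the $H$-image of the hexagon for $(G,\theta)$ at $A,B,C$, with the unit triangles handled analogously. No substantive difference from the paper's argument.
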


\begin{defn}\label{def:monoidaladj}
A \emph{monoidal adjunction} between $\C$ and $\D$ is an adjunction 
$F\dashv U\colon \C\to\D$ such that $U$ and $F$ are monoidal 
functors and the unit $\eta\colon \Id_\D\to UF$ and counit $\epsilon\colon FU\to \Id_\C$ 
are monoidal 
natural 
transformations.
\end{defn}

\begin{expl}\label{ex:monoidaladj}
$\Id\dashv \Id\colon \D\to\D$, $\Dn_f\dashv U\colon \JSL\to\Pos$ and 
$\Psi\dashv 
\under{\dash}\colon\D\to\Set$ are monoidal adjunctions. We call the latter the 
\emph{monoidal adjunction of $\D$}.
\end{expl}

\begin{rem}
If $(H\dashv V\colon \C\to \B,\eta',\epsilon')$ and $(G\dashv 
U\colon \B\to \A,
\eta, \epsilon)$ are monoidal adjunctions, so is the composite adjunction 
$(HG\dashv UV\colon \C\to \A,\,U\eta' G\c \eta,\,\epsilon'\c H\epsilon V)$. 
Here $HG$ and $UV$ are the composites of Lemma \ref{lem:moncomp}.

\end{rem}


\begin{defn}
 A monoidal adjunction $F\dashv U\colon 
\C\to\D$ is called a \emph{concrete monoidal adjunction} if its composite with 
the 
monoidal adjunction of $\D$ is the monoidal adjunction of $\C$.
\end{defn}

\section{Languages and Algebraic Recognition}

In this section we set the scene for our categorical approach to 
Sch\"utzenberger products. For the rest of this paper let us fix a commutative 
variety $\D$ of algebras or ordered algebras, a commutative semiring 
$\S=(S,+,\o,0,1)$, and a concrete 
monoidal adjunction $F\dashv U\colon \Mod{\S}\to \D$ with unit $\eta\colon \Id\to UF$. 
Thus we have the 
diagram of 
functors below, where $\ol U$ and $\ol F$ are the lifted functors, see
Lemma \ref{lem:monlifting}, the
vertical functors are the forgetful functors, and $\Psi$ and $\S^{(\dash)}$ are the left 
adjoints to the forgetful functors of $\D$ and $\Mod{\S}$, see Example \ref{ex:psi}.
\vspace{-0.2cm}
\[
\xymatrix@R-1.5em{
\Alg{\S} \ar[d] \ar@<0.5ex>[rr]^{\ol{U}} && \Mon{\D} 
\ar@<0.5ex>[ll]^{\ol F} \ar[d] \\
 \Mod{\S} \ar@<0.5ex>[rr]^U \ar@<-1.0ex>[dr]_{\under{\dash}} & & \D 
 \ar@<0.5ex>[ll]^F  
\ar@<2.0ex>[dl]^{\under{\dash}} \\
& \Set \ar@<-1.0ex>[ur]^{\Psi} \ar[ul]_<<<<{\S^{(\dash)}}  & 
}
\]

\begin{expl}\label{ex:adjunctions}
In our applications we will choose the concrete monoidal adjunctions listed 
below. (The third and last column will be explained 
later.)
\small
\begin{center}
\begin{tabular}{|c|c|c|c|c|l|c|}
\hline
&$\S$ & $\C$ & $\D$ & $\xymatrix{\Mod{\S} 
\ar@<0.5ex>[r]^<<<<<{U} &
\D 
\ar@<0.5ex>[l]^<<<<<{F}}$ & $\D$-monoids & $M\diamond N$ 
carried by\\
\hline
1&$\{0,1\}$ &$\BA$ & $\Set$ & $\xymatrix{\JSL 
\ar@<0.5ex>[r]^<<<<<{\under{\dash}} &
\Set 
\ar@<0.5ex>[l]^<<<<<{\Pow_f}}$ & monoids & $M\times \Pow_f(M\times 
N)\times N$ \\
2&$\{0,1\}$ & $\DL$ & $\Pos$ & $\xymatrix{\JSL \ar@<0.5ex>[r]^<<<<<U & \Pos 
\ar@<0.5ex>[l]^<<<<<{\Dn_f}}$  & ord. monoids & $M\times \Dn_f(M\times 
N)\times N$ \\
3&$\{0,1\}$ & $\JSL$ & $\JSL$ & $\xymatrix{\JSL 
\ar@<0.5ex>[r]^<<<<<{\Id} & \JSL
\ar@<0.5ex>[l]^<<<<<{\Id}}$  & id. semirings & $M\times (M\ast N)\times 
N$\\
4&$\K$ & $\Vect{\K}$ &  $\Vect{\K}$ & $\xymatrix{\Vect{\K} 
\ar@<0.5ex>[r]^<<<<<{\Id} & \Vect{\K} 
\ar@<0.5ex>[l]^<<<<<{\Id}}$  & $\K$-algebras & $M\times (M\t N)\t N$\\
5&$\S$ & ? & $\Mod{\S}$ & $\xymatrix{\Mod{\S} 
\ar@<0.5ex>[r]^<<<<<{\Id} & \Mod{\S} 
\ar@<0.5ex>[l]^<<<<<{\Id}}$  & $\S$-algebras & $M\times (M\ast N)\t N$\\
\hline
\end{tabular}
\end{center}
\normalsize
\end{expl}

\begin{notation}
We can view the semiring $\S$ as (i) an $\S$-algebra $\S_{\mathbf{Alg}}\in
\Alg{\S}$ with scalar product given by the multiplication of $\S$, (ii) a
$\D$-monoid $\S_{\mathbf{Mon}}\in \Mon{\D}$ (by applying $\ol U$ to
$\S_{\mathbf{Alg}}$), (iii) an $\S$-module $\S_{\mathbf{Mod}}\in \Mod{\S}$ (by
applying the forgetful functor to $\S_{\mathbf{Alg}}$) and (iv) an object
$\S_\D$ of $\D$ (by applying $U$ to $\S_{\mathbf{Mod}}$). The $\D$-monoid
$\S_\mathbf{Mon}$ is carried by the object $\S_\D$, and its multiplication is a
morphism of $\D$ that we denote by $\sigma\colon \S_\D \t \S_\D \to \S_\D$. For ease of notation we will usually drop the indices and simply write $\S$ for $\S_\D$, $\S_{\mathbf{Mod}}$, etc.
\end{notation}

\begin{defn}\label{def:recog}
\begin{enumerate}[(1)]
\item A \emph{language} (a.k.a. a 
\emph{formal power series}) over a finite alphabet $\Sigma$ is a map $L\colon 
\Sigma^*\to S$. Denote by $L_\D\colon \Psi\Sigma^*\to \S$ the adjoint 
transpose 
of 
$L$ w.r.t. the adjunction $\Psi\dashv \under{\dash}\colon \D\to\Set$.
 A
$\D$-monoid morphism $f\colon \Psi\Sigma^*\to M$ \emph{recognizes} $L$ if there 
is a morphism $p\colon M\to \S$ in 
$\D$ with $L_\D =p\c f$. In this case, we also say that $M$ \emph{recognizes} 
$L$ (via 
$f$ and $p$).
\item The \emph{marked Cauchy product} of two languages $K,L\colon \Sigma^*\to S$ 
w.r.t. a letter $a\in\Sigma$ is the language 
$KaL\colon \Sigma^*\to S$ with $(KaL)(u) = \sum_{u=vaw} K(v)\o L(w)$.
\end{enumerate}
\end{defn}
For $\S=\{0,1\}$, a language $L\colon \Sigma^*\to \{0,1\}$ corresponds to a classical language $L\seq \Sigma^*$ by taking the preimage of $1$. Under this identification, we have
$KaL = \{\,vaw: v\in K,\,w\in L\,\}$. Our concept of language recognition by 
$\D$-monoids originates in \cite{amu15} and specializes to several related 
notions 
from 
the literature:

\begin{expl}\label{ex:monrec}
\begin{enumerate}[(1)]
\item $\D = \Set$ with $\S=\{0,1\}$:  a map $p\colon M\to \{0,1\}$ 
corresponds 
to a 
subset $p^{-1}[1]\seq M$. Thus a monoid 
morphism $f\colon \Sigma^*\ra M$ recognizes the language $L\seq \Sigma^*$ iff $L$ 
is the 
preimage under $f$ of some subset of $M$. This is the classical notion of 
language recognition by a monoid, see e.g.\ \cite{pin15}. 
\item $\D = \Pos$ with $\S_\Pos=\{0<1\}$: given an ordered 
monoid $M$, a monotone map $p\colon M\ra\{0,1\}$ defines an upper set 
$p^{-1}[1]\seq M$. Hence a monoid morphism $f\colon \Sigma^*\ra M$ recognizes 
$L\seq \Sigma^*$ iff 
$L$ is the preimage under $f$ of some upper set of $M$. This notion of 
recognition is 
due to Pin~\cite{pin95}.
\item $\D=\JSL$ with $\S_\JSL=\{0<1\}$: for 
any idempotent semiring $M$, a semilattice morphism $p\colon M\ra \{0,1\}$ 
defines an ideal $I=p^{-1}[0]$, i.e. 
a nonempty down-set closed under joins. Hence a language $L\seq \Sigma$ is 
recognized by a 
semiring morphism $f\colon \Pow_f \Sigma^* \ra M$ via $p$ iff $\ol L = 
\Sigma^*\cap 
f^{-1}[I]$. Here we identify $\Sigma^*$ with the set of all singleton languages 
$\{w\}$, $w\in \Sigma^*$. This is the concept of language recognition by 
idempotent semirings introduced 
by Pol\'ak \cite{polak01}.
\item $\D=\Mod{\S}$: given an $\S$-algebra $M$, a formal power series $L\colon 
\Sigma^*\to S$ is 
recognized by 
$f\colon \S[\Sigma]\to 
M$ 
via $p\colon M\ra \S$ iff $L_{\Mod{\S}}= p\c f$. This notion of recognition is due 
to Reutenauer~\cite{reu80}. If $\S$ is a commutative ring, the 
power series 
recognizable 
by $\S$-algebras of finite type (i.e. $\S$-algebras whose underlying 
$\S$-module is finitely generated) are precisely rational power series.
\end{enumerate}
\end{expl}

\section{The Sch\"utzenberger Product}
We are ready to introduce the Sch\"utzenberger product for 
$\D$-monoids. Fix two $\D$-monoids
$(M,1,\bullet)$ and $(N,1,\bullet)$, and write $xy$ for $x\bullet y$. 
Our goal is to construct a $\D$-monoid $M\diamond N$ that recognizes 
all 
marked products of languages recognized by $M$ and $N$, and is the 
``smallest'' such $\D$-monoid 
(Theorems \ref{thm:schurec}, \ref{thm:universalprop}, 
\ref{thm:schurec2}).
\vspace{0.05cm}
\begin{construction} As a preliminary step, we define a 
$\D$-monoid $M\ast N$ as fol-
\noindent\parbox{0.77\textwidth}{
lows. Call a family $\{\,f_i\colon A\to B_i\,\}_{i\in I}$ in 
$\D$ \emph{separating} if the morphism $f \colon A\to \prod_i B_i$ with $f(a) 
= (f_i(a))_{i\in I}$ is 
injective (resp.
order-reflecting when $\D$ is a variety of ordered algebras).  Any family 
$\{f_i\}$ yields a separating family $\{\,f_i'\colon A'\to B_i\,\}_{i\in 
I}$ 
by 
factorizing
$f = m\c \pi$ with $\pi$ surjective and $m$ injective
(resp. order-reflecting), and setting $f_i' \defeq p_i\c m$, where $p_i$ is  
\vspace{0.05cm}
}
\parbox{0.23\textwidth}{
  $
  \xymatrix@R-1.5em@C-1.5em{
  A \ar@/^4ex/[rr]^{f_i}\ar@{->>}[r]^\pi \ar[dr]_{f}  & A' 
  \ar[r]^{f_i'} 
  \ar@{>->}[d]^-{m} & B_i\\
  & \prod_i B_i \ar[ur]_{p_i} &}$ 
}
the projection. Now consider the family of all morphisms 
$\sigma\c (p\t q)\colon M\t N\to \S$, where $p\colon M\to\S$ and $q\colon 
N\to \S$ are arbitrary morphisms in $\D$. Applying the above construction to 
this family $\{\,\sigma\c (p\t q)\,\}_{p,q}$ gives an algebra $M\ast N$ in 
$\D$, a surjective morphism $\pi\colon M\t 
N\epito M\ast 
N$, and a separating family $\{\,p\ast 
q\colon M\ast N\to \S\,\}_{p,q}$, making the following diagram commute for all 
$p$ and $q$:
\begin{equation}\label{eq:s1s2}
\begin{gathered}
\xymatrix@R-2.0em{
& \S\otimes \S \ar[dr]^\sigma &\\
M\otimes N \ar[ur]^{p\otimes q} \ar@{->>}[r]_\pi  & {M\ast 
N} 
\ar[r]_<<<<<{p\ast q}  & \S\\
}
\end{gathered}
\end{equation}
\end{construction}

\begin{notation} For any $m\in \under{M}$ and $n\in\under{N}$, we write $m\ast 
n$ for the element
$\pi(m\t n)\in\under{M\ast N}$.
\end{notation}

\begin{lemma}\label{lem:mastn}
There exists a (unique) $\D$-monoid structure on $M\ast N$ such that $\pi: M\t 
N\epito M\ast N$ is a 
$\D$-monoid morphism. The multiplication is determined by $(m\ast n)\bullet 
(m'\ast n') = (mm')\ast (nn')$, and the unit is $1\ast 1$.
\end{lemma}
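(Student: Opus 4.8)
The plan is to transport the tensor-monoid structure on $M\t N$ (Remark~\ref{rem:tensormonoid}) along the surjection $\pi\colon M\t N\epito M\ast N$, and then argue that the transported structure lands in $\D$ by exploiting the separating family $\{p\ast q\}$. In the language of Remark~\ref{rem:moncat}, the $\D$-monoid $M\t N$ is a triple $(M\t N,\iota_{M\t N},\mu_{M\t N})$; I want to produce $\iota\colon\one_\D\to M\ast N$ and $\mu\colon (M\ast N)\t(M\ast N)\to M\ast N$ in $\D$ making $\pi$ a morphism, i.e.\ $\iota=\pi\c\iota_{M\t N}$ and $\mu\c(\pi\t\pi)=\pi\c\mu_{M\t N}$. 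The unit is immediate: set $\iota\defeq\pi\c\iota_{M\t N}$, whose image of the generator is $\pi(1_M\t 1_N)=1\ast 1$ by the notation just introduced. The multiplication is the crux, because $\pi\t\pi$ need not be surjective and we cannot simply "divide through".

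First I would establish that $\mu$ is well-defined \emph{as a function}. Because $\pi$ is surjective and its underlying map is a bimorphism-corresponding quotient, $\pi\t\pi\colon (M\t N)\t(M\t N)\to (M\ast N)\t(M\ast N)$ is still surjective on underlying sets in a commutative variety — more concretely, every element of $\under{(M\ast N)\t(M\ast N)}$ is a finite $\D$-combination of elements $(m\ast n)\t(m'\ast n')$, and the diagram forces the only possible value, namely $(mm')\ast(nn')$. The honest way to get well-definedness is: show that $\ker\pi$ (as a congruence on $M\t N$) is respected by $\mu_{M\t N}$ on both sides, so that $\mu_{M\t N}$ descends through the reflexive coequalizer defining $M\ast N$. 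To see this, note $x\mathrel{\ker\pi}x'$ iff $\sigma\c(p\t q)(x)=\sigma\c(p\t q)(x')$ for all $p,q$ — this is exactly what the separating family buys us, since $\pi$ itself is the quotient by the intersection of the kernels of the maps $\sigma\c(p\t q)$. Then I would check that for any such $x,x'$ and any $y$, the products $x\bullet y$ and $x'\bullet y$ again agree under every $\sigma\c(p\t q)$; this is where the explicit formula $(m\t n)\bullet(m'\t n')=(mm')\t(nn')$ of Remark~\ref{rem:tensormonoid} together with the fact that $p$ and $q$ are $\D$-morphisms (hence compatible with $\bullet_M$, $\bullet_N$) does the work. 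Symmetrically for $y\bullet x$.

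Once $\mu$ exists as a function with $\mu\c(\pi\t\pi)=\pi\c\mu_{M\t N}$, I would verify it is a morphism of $\D$ — equivalently, by the definition of $M\ast N$ via the separating family, that each composite $(p\ast q)\c\mu$ is a $\D$-morphism. Unwinding~\eqref{eq:s1s2}, $(p\ast q)\c\mu\c(\pi\t\pi)=(p\ast q)\c\pi\c\mu_{M\t N}=\sigma\c(p\t q)\c\mu_{M\t N}$, and the right-hand side is a composite of $\D$-morphisms; since $\pi\t\pi$ is a surjective $\D$-morphism (a regular epi in $\D$, hence an effective descent for the property "carries a $\D$-morphism"), it follows that $(p\ast q)\c\mu$ carries a $\D$-morphism, and the separating property then forces $\mu$ itself to be a $\D$-morphism. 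The monoid axioms (associativity, unitality) for $(M\ast N,\iota,\mu)$ transfer along the surjection $\pi$ from the corresponding axioms for $M\t N$: e.g.\ both legs of the associativity square, precomposed with the surjection $\pi\t\pi\t\pi$, agree because they do for $M\t N$. Finally, uniqueness is automatic: any $\D$-monoid structure making $\pi$ a morphism must satisfy $\iota=\pi\c\iota_{M\t N}$ and $\mu\c(\pi\t\pi)=\pi\c\mu_{M\t N}$, and surjectivity of $\pi$ and $\pi\t\pi$ pins these down.

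\textbf{Main obstacle.} The delicate point is the descent step — justifying that a $\D$-morphism out of $(M\ast N)\t(M\ast N)$ may be certified by testing against the separating family $\{p\ast q\}$, i.e.\ that $\{(p\ast q)\t(p'\ast q')\}$ (or its image under $\sigma$-type maps) is still separating on the double tensor. This needs that $\mathord{-}\t\mathord{-}$ interacts well with separating families in a commutative variety; I would either cite the behaviour of $\t$ under surjections/regular epis from \cite{bn76}, or, more elementarily, observe that it suffices to reduce to the generating elements $(m\ast n)\t(m'\ast n')$ and use that $\pi\t\pi$ is surjective so that well-definedness plus the formula already determine $\mu$ uniquely, sidestepping an abstract descent argument. $\qed$
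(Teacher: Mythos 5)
Your overall plan coincides with the paper's: quotient the tensor monoid $M\t N$ by the kernel (pre)order of $\pi$, verify that this is a monoid congruence by testing against the separating family $\{\sigma\c(p\t q)\}_{p,q}$, and reduce to generators $m\t n$. The one place where your stated justification would fail is the crux. You claim that $x\bullet y$ and $x'\bullet y$ agree under every $\sigma\c(p\t q)$ because ``$p$ and $q$ are $\D$-morphisms (hence compatible with $\bullet_M$, $\bullet_N$)''. But a morphism $p\colon M\to\S$ in $\D$ is \emph{not} compatible with the monoid multiplication -- it is not a $\D$-monoid morphism -- so this gives no relation between $\sigma\c(p\t q)(x\bullet y)$ and the values of the test maps on $x$. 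The correct mechanism, which is the actual content of the paper's proof, is that the test family is \emph{closed under precomposition with translations}: since $\bullet_M$ is a bimorphism, $\dash\bullet m\colon M\to M$ is a morphism of $\D$, hence $p\c(\dash\bullet m)\in\D(M,\S)$ is again a member of the indexing family, and one has $(p\c(\dash\bullet m))\t(q\c(\dash\bullet n))=(p\t q)\c(\dash\bullet(m\t n))$ on $M\t N$. Thus $\sigma\c(p\t q)(x\bullet(m\t n))=\sigma\c(p'\t q')(x)$ for the new test pair $p'=p\c(\dash\bullet m)$, $q'=q\c(\dash\bullet n)$, and agreement of $x,x'$ under \emph{all} test maps yields agreement of $x\bullet(m\t n)$ and $x'\bullet(m\t n)$. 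Without this observation the step does not go through; with it, your plan is exactly the paper's.

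Two smaller points. For ordered varieties the kernel equivalence relation is not enough: you must work with the kernel preorder $x\preceq y$ iff $\pi(x)\leq\pi(y)$ and show it is a \emph{stable preorder} also with respect to $\bullet$ on both sides, so that the induced multiplication is monotone; the paper does exactly this via the appendix on stable preorders. And the obstacle you flag about $\pi\t\pi$ and separating families on the double tensor is genuinely avoidable, as you suspect: the paper never forms $(M\ast N)\t(M\ast N)$ at this stage, but defines $\pi(x)\bullet\pi(y):=\pi(x\bullet y)$ set-theoretically, which is well defined and a bimorphism once the preorder is stable, so no descent argument for tensors is needed.
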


\begin{expl}\label{ex:jointlym}
 For $\D=\Set$, $\Pos$ or $\Vect{\K}$, the family 
 $\{\,\sigma\c (p\t q)\,\}_{p,q}$ is 
already
separating, and therefore $M\ast N = M\t N$ and $p\ast q = \sigma\c 
(p\t q)$. For $\D=\JSL$ and in case $M$ and $N$ are \emph{finite} idempotent 
semirings, we can 
describe the idempotent semiring $M\ast N$ as 
follows. For any subset $X\seq M\times N$, let $[X]\seq M\times N$ consist of 
those elements $(m,n)\in M\times N$ such that, for all ideals $I\seq M$ and 
$J\seq N$ with  $m\not\in I$ and  $n\not\in 
J$, there exists some $(x,y)\in X$ with $x\not\in I$ and 
$y\not\in J$. This gives us the closure operator $X\mapsto [X]$ on the power set
of $M\times N$ in~\cite{klimapolak10}. One can show that $M\ast N$ is isomorphic to the idempotent semiring
of all closed 
subsets of $M\times N$, with sum and product defined by $[X]\vee[Y] = [X\cup 
Y]$ and
$[X][Y]=[XY]$, where $XY=\{\,xy: x\in Y,\,y\in Y\,\}$.
\end{expl}

\begin{defn}
The \emph{Sch\"utzenberger product} of $M$ and $N$ is the $\D$-monoid 
$M\diamond N$ carried 
by the product $M\times UF(M\ast N)\times N$ in $\D$ and equipped with the 
following monoid structure: representing elements $(m,a,n)\in 
\under{M}\times\under{F(M\ast N)}\times \under{N}$ as upper triangular matrices
$
 \begin{pmatrix}
 m & a\\
 0 &  n
 \end{pmatrix},
$
the multiplication and unit are given by
 \[
 \begin{pmatrix}
 m & a\\
 0 &  n
 \end{pmatrix}
 \begin{pmatrix}
 m' & a'\\
 0 &  n'
 \end{pmatrix}
 =
 \begin{pmatrix}
 mm'~~ & \eta(m\ast 1)\o a'+a\o \eta (1\ast n')\\
 0 &  nn'
 \end{pmatrix}
\quad\text{and}\quad
 \begin{pmatrix}
 1 & 0\\
 0 &  1
 \end{pmatrix}.
 \]
 Here $\eta\colon M\ast N\to UF(M\ast N)$ is the universal map, and the sum,
 product and $0$ in the upper right components are taken in the $\S$-algebra $\ol 
 F(M\ast N)$.
\end{defn}

\begin{lemma}\label{lem:schuetzwelldef}
$M\diamond N$ is a well-defined $\D$-monoid, and the product projections 
$\pi_M\colon 
M\diamond N\to M$ and $\pi_N\colon M\diamond N \to N$ are $\D$-monoid morphisms.
\end{lemma}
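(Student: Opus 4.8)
The plan is to view $M\diamond N$ as the ``upper-triangular monoid'' attached to a pair of commuting actions on an $\S$-algebra, and to split the verification into (i) the monoid axioms at the level of underlying sets, (ii) the fact that the multiplication is a bimorphism of $\D$, and (iii) the statement about the projections. Write $B\defeq \ol F(M\ast N)$ for the $\S$-algebra whose underlying object of $\D$ is $UF(M\ast N)$, and recall the universal map $\eta\colon M\ast N\to UF(M\ast N)$. Being a monoidal natural transformation of a monoidal adjunction, $\eta$ is in fact a morphism of $\D$-monoids from $M\ast N$ to $\ol U B$; in particular $\eta(1\ast 1)=1_B$ and $\eta(xy)=\eta(x)\o\eta(y)$, where $\o$ denotes the multiplication of $B$. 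Using $\eta$ I define a left action $m\cdot a\defeq \eta(m\ast 1)\o a$ of $\under M$ on $\under B$ and a right action $a\cdot n\defeq a\o\eta(1\ast n)$ of $\under N$ on $\under B$, so that the multiplication in the definition of $M\diamond N$ is exactly the triangular-matrix multiplication associated with these two actions.

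First I would check the monoid axioms. By Lemma~\ref{lem:mastn}, $1\ast 1$ is the unit of $M\ast N$ and $(m\ast 1)(m'\ast 1)=(mm')\ast 1$, so applying the monoid morphism $\eta$ shows that $m\mapsto \eta(m\ast 1)$ is a monoid homomorphism $\under M\to(\under B,\o,1_B)$; hence $1\cdot a=a$ and $m\cdot(m'\cdot a)=(mm')\cdot a$, and symmetrically for the right action of $\under N$. Commutativity of the two actions, $(m\cdot a)\cdot n = m\cdot(a\cdot n)$, is associativity of $\o$. Combined with distributivity of $\o$ over $+$ and the fact that $0$ is absorbing for $\o$, these identities make the associativity and unit laws for the triangular multiplication on $\under M\times\under B\times\under N$ a routine computation, obtained by expanding both bracketings of a product of three matrices.

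Next I would show that $\bullet$ is a bimorphism of $\D$, which by Definition~\ref{def:dmonoid} is all that remains for $M\diamond N$ to be a $\D$-monoid (its unit picks out $(1,0,1)\in\under{M\diamond N}$, and every element of an underlying set determines a $\D$-morphism out of $\one_\D$, so the unit is automatically a morphism of $\D$). Fix a matrix $(m,a,n)$. Since a $\D$-morphism into the product $M\times UF(M\ast N)\times N$ is the same as a triple of $\D$-morphisms into the factors, I check that $(m',a',n')\mapsto (m,a,n)\bullet(m',a',n')$ is componentwise a $\D$-morphism. Its $M$- and $N$-components $m'\mapsto mm'$ and $n'\mapsto nn'$ are $\D$-morphisms because $\bullet_M$ and $\bullet_N$ are bimorphisms. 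Its middle component $(m',a',n')\mapsto \eta(m\ast 1)\o a'+a\o\eta(1\ast n')$ factors as the projection to $UF(M\ast N)\times N$, followed by $\ell\times(r\c\eta\c j)$ where $\ell(x)=\eta(m\ast 1)\o x$, $r(x)=a\o x$ and $j\colon N\to M\ast N$ sends $n'$ to $1\ast n'=\pi(t_{M,N}(1,n'))$, followed by the pointwise addition $UF(M\ast N)\times UF(M\ast N)\to UF(M\ast N)$. Each of these is a $\D$-morphism: $\ell$ and $r$ are images under $U$ of $\Mod\S$-morphisms, since the multiplication of $B$ is a bimorphism of $\Mod\S$; $j$ is the composite of the $\D$-morphism $t_{M,N}(1,\dash)$ with $\pi$; and the pointwise addition is a $\D$-morphism because addition $B\oplus B\to B$ is $\S$-linear and $U$, being a right adjoint, preserves finite products. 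A symmetric argument handles the other variable, so $\bullet$ is a bimorphism.

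Finally, $\pi_M$ and $\pi_N$ are product projections, hence $\D$-morphisms, and they send $(1,0,1)$ to $1_M$ and $1_N$ respectively and commute with $\bullet$ because the $M$- and $N$-components of the triangular product are exactly $\bullet_M$ and $\bullet_N$. I expect the only step needing genuine care to be the middle component in (ii): it is precisely the preservation of finite products by the right adjoint $U$ that upgrades the pointwise $\S$-algebra operations (addition, and multiplication by a fixed element) from mere functions to morphisms of $\D$; the associativity bookkeeping in (i) and the projection claims are routine.
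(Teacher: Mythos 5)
Your proposal is correct and follows essentially the same route as the paper: the paper also realizes $M\diamond N$ as the upper-triangular monoid on $M\times UF(M\ast N)\times N$ built from the two $\D$-monoid morphisms $m\mapsto\eta(m\ast 1)$ and $n\mapsto\eta(1\ast n)$ (it merely factors the triangular construction out as a separate general lemma, Lemma~\ref{lem:afg}, for an arbitrary $\S$-algebra $A$ and morphisms $f,g$ into $\ol U A$), and it likewise reduces the bimorphism property to the fact that the sum and multiplication of the $\S$-algebra $\ol F(M\ast N)$ carry $\D$-morphisms on $UF(M\ast N)$ via Remark~\ref{rem:plusmorph}. The only cosmetic difference is that the paper checks preservation of each operation $\gamma$ of the signature directly, whereas you factor the one-variable-fixed multiplication through known $\D$-morphisms; both are valid.
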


\begin{expl}
For the categories and adjunctions of Example \ref{ex:adjunctions}, we recover 
four notions of Sch\"utzenberger products known in the literature, and obtain 
a new Sch\"utzenberger product for $\S$-algebras:
\begin{enumerate}[(1)]
\item $\D=\Set$: given monoids $M$ and $N$, the monoid $M\diamond N$ 
is carried by the set $M\times \Pow_f (M\times 
N)\times 
N$, with multiplication and unit
\[
 \begin{pmatrix}
 m & X\\
 0 &  n
 \end{pmatrix}
 \begin{pmatrix}
 m' & X'\\
 0 &  n'
 \end{pmatrix}
 =
 \begin{pmatrix}
 mm'~~ & mX'\cup Xn'\\
 0 &  nn'
 \end{pmatrix}
 \quad\text{and}\quad
\begin{pmatrix}
1 & \emptyset\\
0 &  1
\end{pmatrix},
 \]
where $mX' = \{\,(my,z): (y,z)\in X'\,\}$ and $Xn' = \{\,(y,zn'): 
(y,z)\in X\,\}$. This is 
the original construction of Sch\"utzenberger \cite{sch65}.
\item $\D=\Pos$: for ordered monoids $M$ and $N$, the ordered monoid 
$M\diamond N$ is carried by the poset $M\times \Dn_f(M\times 
N)\times N$ with multiplication and unit
\[
 \begin{pmatrix}
 m & X\\
 0 &  n
 \end{pmatrix}
 \begin{pmatrix}
 m' & X'\\
 0 &  n'
 \end{pmatrix}
 =
 \begin{pmatrix}
 mm'~~ & \dnarrow(mX'\cup Xn')\\
 0 &  nn'
 \end{pmatrix}
 \quad\text{and}\quad
\begin{pmatrix}
1 & \emptyset\\
0 &  1
\end{pmatrix}.
 \]
This construction is due to Pin \cite{pin03}.
\item $\D=\JSL$: given idempotent semirings $M$ and $N$, the idempotent 
semiring $M\diamond N$ is carried by the semilattice $M\times (M\ast 
N)\times N$. If $M$ and $N$ are finite, $M\ast N$ is the 
idempotent semiring of closed subsets of 
$M\times N$ by Example \ref{ex:jointlym}, and the multiplication and unit of 
$M\diamond N$ are given by 
\[
 \begin{pmatrix}
 m & X\\
 0 &  n
 \end{pmatrix}
 \begin{pmatrix}
 m' & X'\\
 0 &  n'
 \end{pmatrix}
 =
 \begin{pmatrix}
 mm'~~ & [mX'\cup Xn']\\
 0 &  nn'
 \end{pmatrix}
 \quad\text{and}\quad
\begin{pmatrix}
1 & \emptyset\\
0 &  1
\end{pmatrix}.
 \]
For the finite case, this construction is due to Kl\'ima and Pol\'ak 
\cite{klimapolak10}.
\item $\D=\Vect{\K}$: given $\K$-algebras $M$ and $N$, the $\K$-algebra 
$M\diamond N$ is carried by the vector space $M\times (M\t N)\times 
N$ with multiplication and unit
\[
 \begin{pmatrix}
 m & z\\
 0 &  n
 \end{pmatrix}
 \begin{pmatrix}
 m' & z'\\
 0 &  n'
 \end{pmatrix}
 =
 \begin{pmatrix}
 mm'~~ & mz'+zn'\\
 0 &  nn'
 \end{pmatrix}
 \quad\text{and}\quad
\begin{pmatrix}
1~ & ~0\t 0\\
0~ &  ~1
\end{pmatrix},
 \]
 where $mz' = (mm_0)\t n_0$ for $z'=m_0\t n_0$, and extending via bilinearity 
 for 
 arbitrary $z$; similarly for $zn'$. 
This construction is due to Reutenauer \cite{reu80}.
\item $\D=\Mod{\S}$: given $\S$-algebras $M$ and $N$, the $\S$-algebra 
$M\diamond N$ is carried by the $\S$-module $M\times (M\ast N)\times N$ with 
multiplication and unit
\[
 \begin{pmatrix}
 m & z\\
 0 &  n
 \end{pmatrix}
 \begin{pmatrix}
 m' & z'\\
 0 &  n'
 \end{pmatrix}
 =
 \begin{pmatrix}
 mm'~~ & mz'+zn'\\
 0 &  nn'
 \end{pmatrix}
 \quad\text{and}\quad
\begin{pmatrix}
1~ & ~0\ast 0\\
0~ & ~1
\end{pmatrix},
 \]
 where $mz' = (mm_0)\ast n_0$ for $z'=m_0\ast n_0$, and similarly for $zn'$. 
   This example specializes to (3) and (4) by taking $\S=\{0,1\}$ and $\S=\K$, respectively, 
   but appears to be new construction for other semirings $\S$.
\end{enumerate}
\end{expl}
The following theorem gives the key property of $M\diamond N$.
\begin{theorem}\label{thm:schurec}
Let $K,L\colon \Sigma^*\to S$ be languages recognized by $M$ and $N$,
respectively. Then 
$M\diamond N$ 
recognizes the languages $K$, $L$ and $K a 
L$ for all $a\in\Sigma$.
\end{theorem}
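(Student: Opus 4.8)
The plan is to exhibit explicit recognizing data for each of the three languages $K$, $L$, $KaL$, using the product projections of $M\diamond N$ and the universal/unit maps at hand. Suppose $K$ is recognized by $M$ via a $\D$-monoid morphism $f\colon \Psi\Sigma^*\to M$ and a $\D$-morphism $p\colon M\to\S$, so that $K_\D=p\c f$; likewise $L$ is recognized by $N$ via $g\colon \Psi\Sigma^*\to N$ and $q\colon N\to\S$. For $K$ itself, I would simply use $\pi_M\colon M\diamond N\to M$ (a $\D$-monoid morphism by Lemma \ref{lem:schuetzwelldef}) together with the fact that, because $M\diamond N$ is a product in $\D$ with $M$ as a factor, there is a $\D$-monoid morphism $h\colon \Psi\Sigma^*\to M\diamond N$ with $\pi_M\c h=f$ and $\pi_N\c h=g$ — indeed, $h$ is the morphism into the product induced by $f$, $g$ and a suitable ``middle'' component (see below). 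Then $p\c\pi_M\colon M\diamond N\to\S$ satisfies $(p\c\pi_M)\c h=p\c f=K_\D$, so $M\diamond N$ recognizes $K$; symmetrically it recognizes $L$ via $q\c\pi_N$.

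The substantive case is $KaL$. First I must pin down the middle component of $h\colon \Psi\Sigma^*\to M\diamond N$. Writing elements of $M\diamond N$ as upper triangular matrices $\begin{pmatrix} m & c\\ 0 & n\end{pmatrix}$, I define $h$ on a word $w\in\Sigma^*$ (extending freely, using that $\Psi\Sigma^*$ is the free $\D$-monoid) by $h(w)=\begin{pmatrix} f(w) & c(w)\\ 0 & g(w)\end{pmatrix}$, where $c(\epsilon)=0$, $c(b)=0$ for $b\neq a$, $c(a)=\eta(1\ast 1)$, and $c$ is forced on longer words by the multiplication rule of $M\diamond N$. An induction on word length, using that rule, shows $c(w)=\sum_{w=uav}\eta\big(f(u)\ast 1\big)\o\eta\big(1\ast g(v)\big)$ in the $\S$-algebra $\ol F(M\ast N)$; here the sum ranges over all factorizations of $w$ marked by the letter $a$. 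The key point is that $\eta$ is a $\D$-monoid morphism (being the unit of the concrete monoidal adjunction, lifted via $\ol F$, $\ol U$), so $\eta(f(u)\ast 1)\o\eta(1\ast g(v))=\eta\big((f(u)\ast 1)\bullet(1\ast g(v))\big)=\eta\big(f(u)\ast g(v)\big)$ by Lemma \ref{lem:mastn}; thus $c(w)=\eta\Big(\sum_{w=uav} f(u)\ast g(v)\Big)$.

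It remains to produce a $\D$-morphism $r\colon M\diamond N\to\S$ with $r\c h=(KaL)_\D$. I would take $r\begin{pmatrix} m & c\\ 0 & n\end{pmatrix}=\hat r(c)$, where $\hat r\colon UF(M\ast N)\to\S$ is the adjoint transpose of the $\D$-morphism $p\ast q\colon M\ast N\to\S$ under $F\dashv U$ — equivalently $\hat r=U(\widetilde{p\ast q})$ for the $\S$-module map $\widetilde{p\ast q}\colon F(M\ast N)\to\S$ corresponding to $p\ast q$, so that $\hat r\c\eta=p\ast q$. Such $r$ is a $\D$-morphism since it is the composite of the product projection onto $UF(M\ast N)$ with $\hat r$. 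Then, using that $\hat r$ is an $\S$-module morphism hence additive, $r(h(w))=\hat r\big(c(w)\big)=\sum_{w=uav}\hat r\big(\eta(f(u)\ast g(v))\big)=\sum_{w=uav}(p\ast q)\big(f(u)\ast g(v)\big)=\sum_{w=uav}\sigma\big(p(f(u))\t q(g(v))\big)$ by the defining square \eqref{eq:s1s2} of $p\ast q$; and $\sigma$ is the multiplication of $\S_{\mathbf{Mon}}$, so this equals $\sum_{w=uav} K(u)\o L(v)=(KaL)(w)$. Since $r\c h$ and $(KaL)_\D$ are $\D$-morphisms out of $\Psi\Sigma^*$ that agree on the generating set $\Sigma^*$, they coincide, so $M\diamond N$ recognizes $KaL$ via $h$ and $r$.

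The main obstacle is the middle-component computation: getting the inductive formula $c(w)=\eta\big(\sum_{w=uav} f(u)\ast g(v)\big)$ right and justifying each rewriting step — in particular that $\eta$ lifts to a $\D$-monoid morphism so that products of $\eta$-images are $\eta$-images of products in $M\ast N$, that $\hat r$ is additive because it comes from an $\S$-module map, and that all the freeness/adjointness arguments (extending $h$ over $\Psi\Sigma^*$, and concluding $r\c h=(KaL)_\D$ from agreement on $\Sigma^*$) are legitimate. Everything else is bookkeeping with the matrix multiplication and the commuting square \eqref{eq:s1s2}.
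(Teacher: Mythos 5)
Your proposal is correct and follows essentially the same route as the paper: define a $\D$-monoid morphism into $M\diamond N$ on letters with middle component $\eta(1\ast 1)$ at $a$ and $0$ elsewhere, derive the formula for the middle component by induction on word length (the paper isolates this as Lemma~\ref{lem:reutformula}), and recognize $KaL$ via $U(\ol{p\ast q})\c \pi_{MN}$ using diagram~\eqref{eq:s1s2}. One cosmetic slip: the sum in $c(w)=\sum_{w=uav}\eta(f(u)\ast g(v))$ lives in the $\S$-algebra $\ol F(M\ast N)$ and cannot be moved inside $\eta$, since the object $M\ast N$ of $\D$ carries no addition in general; your subsequent computation uses the correct form anyway, so nothing breaks.
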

Next, we aim to show that $M\diamond 
N$ is the ``smallest'' 
$\D$-monoid satisfying the statement of the above theorem. This 
requires 
further assumptions 
on 
our setting.

\begin{notation}
Recall from \eqref{eq:s1s2} the morphism $p\ast q\colon M\ast N\to \S$. We denote 
its adjoint transpose w.r.t. the adjunction $F\dashv U$ by  $\ol{p\ast 
q}\colon F(M\ast N)\to \S$.
\end{notation}

\begin{assumptions}\label{asm:dual}
From now on, suppose that:
\begin{enumerate}[(i)]
\item $\D$ is locally finite, i.e. every finitely generated algebra of $\D$ is 
finite.
\item Epimorphisms in $\D$ and $\Mod{\S}$ are surjective.
\item $\D(M,\S)$, $\D(N,\S)$, and 
$\{\,U(\ol{p\ast q})\colon UF(M\ast N)\to \S\,\}_{p\colon M\to 
\S,\,q\colon N\to \S}$
 are separating families of morphisms in $\D$.
\item There is a locally finite variety $\C$ of algebras such that
the full subcategories $\C_f$ and $\D_f$ on \emph{finite} algebras are dually 
equivalent. We denote the equivalence functor by $E\colon 
\D_f^{op}\simeq\C_f$.
\item The semiring $\S$ is finite, and $E(\S)\cong \one_\C$.
\end{enumerate}
\end{assumptions}
Let us indicate the intuition behind our assumptions.
First, (i) and (ii) imply that $M\diamond N$ 
is finite 
if $M$ 
and $N$ are. This is important, as one is usually interested in 
language 
recognition by \emph{finite} $\D$-monoids. (iii) expresses that the semiring $\S$ 
has enough structure to separate elements of $M$, $N$ and $UF(M\ast N)$, 
the three components of the Sch\"utzenberger product $M\diamond N$, by suitable
morphisms into $\S$. This 
 technical condition on $\S$ is the crucial 
ingredient 
for proving the ``smallness'' of $M\diamond N$ (Theorem 
\ref{thm:universalprop}). Finally, the variety $\C$ in (iv) and (v) 
 will be used to determine, via duality, the algebraic operations to express 
 languages 
recognized by $M\diamond N$ in terms of languages recognized by $M$ and $N$ 
(Theorem \ref{thm:schurec2}).

\begin{expl}\label{ex:predualcats} The categories and adjunctions 
of Example \ref{ex:adjunctions}(1)-(4) satisfy our assumptions. Here we 
briefly 
sketch the dualities; see \cite{ammu14,ammu15} for details.
\begin{enumerate}[(1)]
\item For $\D=\Set$, choose $\C=\BA$ (boolean algebras). Stone duality 
\cite{Johnstone1982} gives a dual equivalence $E\colon \Set_f^{op}\simeq \BA_f$ 
mapping a finite set to the boolean 
algebra of all subsets.
\item For $\D=\Pos$, choose $\C = \DL$ (distributive 
lattices with $0,1$). Birkhoff duality \cite{birkhoff37} gives a dual 
equivalence $E\colon \Pos_f^{op}\simeq\DL_f$ mapping a finite poset to 
the 
lattice of all down-sets.
\item For $\D=\JSL$, choose $\C = \JSL$. The 
dual equivalence $E\colon \JSL_f^{op}\simeq\JSL_f$ maps a finite semilattice 
$(X,\vee)$ to its opposite 
semilattice $(X,\wedge)$, see \cite{Johnstone1982}. 
\item For $\D=\Vect{\K}$, $\K$ a \emph{finite} field, choose 
$\C=\Vect{\K}$. The dual equivalence $E\colon 
\Vect{\K}_f\simeq\Vect{\K}_f^{op}$ 
maps
a space $X$ to its dual space $X^*=\hom(X,\K)$.
\end{enumerate}
\end{expl}

\begin{notation}\label{not:lmnf}
For any $\D$-monoid morphism $f\colon \Psi\Sigma^* \to M\diamond N$, put
\vspace{-0.2cm}
\[
  \L_{M,N}(f) := \{\,K, L, KaL \mid \text{$a\in\Sigma$, $\pi_M\c f$ recognizes 
  $K$,  $\pi_N\c
    f$ recognizes $L$}\,\} 
\]
\end{notation}

\begin{theorem}\label{thm:universalprop}
Let $f\colon \Psi\Sigma^*\to M\diamond N$ and $e\colon \Psi\Sigma^*\epito 
P$ be two $\D$-monoid morphisms. If $e$ is surjective and recognizes all 
languages in $\L_{M,N}(f)$, then there exists a 
unique
$\D$-monoid morphism $h\colon P\to M\diamond N$ with $h\c e = f$.
\end{theorem}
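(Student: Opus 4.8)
The key observation is that $M\diamond N$ is a finite $\D$-monoid (by Assumptions \ref{asm:dual}(i)--(ii)) that recognizes every language in $\L_{M,N}(f)$ — this is precisely Theorem \ref{thm:schurec} applied to $\pi_M\c f$ and $\pi_N\c f$, together with the fact that the two projections $\pi_M,\pi_N$ are $\D$-monoid morphisms (Lemma \ref{lem:schuetzwelldef}). So the statement is a \emph{minimality} assertion: among all surjective quotients of $\Psi\Sigma^*$ recognizing $\L_{M,N}(f)$, the one factoring through $f$ is the largest, i.e. any other such quotient $e$ factors uniquely through $f$. Uniqueness of $h$ is immediate once it exists, since $e$ is epic. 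So the entire content is existence of $h$.

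First I would observe that the morphism $h$, if it exists, is forced componentwise: writing $f = \langle f_M, f_A, f_N\rangle$ with $f_M = \pi_M\c f\colon \Psi\Sigma^*\to M$, $f_N = \pi_N\c f\colon \Psi\Sigma^*\to N$, and $f_A\colon \Psi\Sigma^*\to UF(M\ast N)$ the middle component, the map $h\colon P\to M\times UF(M\ast N)\times N$ must have components $h_M, h_A, h_N$ with $h_M\c e = f_M$, etc. So the plan is to construct each component separately and then check that $\langle h_M,h_A,h_N\rangle$ is a $\D$-monoid morphism. For $h_M$: since $e$ recognizes $\L_{M,N}(f)$, in particular it recognizes every language recognized by $f_M$; by a syntactic/factorization argument (the quotient $e$ is ``below'' $f_M$ in the recognition preorder), and using that $\D(M,\S)$ is a separating family (Assumption \ref{asm:dual}(iii)), one gets a unique $\D$-morphism $h_M\colon P\to M$ with $h_M\c e = f_M$; the separating condition is what forces $h_M$ to exist and to be a $\D$-morphism rather than merely a function. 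Symmetrically for $h_N$ using $\D(N,\S)$ separating.

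The genuinely new component is $h_A\colon P\to UF(M\ast N)$. Here I would use Assumption \ref{asm:dual}(iii), namely that $\{\,U(\ol{p\ast q})\colon UF(M\ast N)\to\S\,\}_{p,q}$ is separating. So it suffices to produce, for each $p\colon M\to\S$ and $q\colon N\to\S$, a morphism $P\to\S$ that composes correctly, and then check these cohere into a single map landing in $UF(M\ast N)$ (using that a separating family lets one detect morphisms into the product, and $UF(M\ast N)$ sits as a subobject of $\prod_{p,q}\S$). The point is to express $U(\ol{p\ast q})\c f_A\colon \Psi\Sigma^*\to\S$ as a language (formal power series) belonging to — or at least recognizable from — the languages in $\L_{M,N}(f)$: this is where the marked Cauchy products $KaL$ enter, because the middle component of $M\diamond N$ is built exactly so that $U(\ol{p\ast q})\c f_A$ reads off a sum of such products $KaL$ with $K$ recognized via $p\c f_M$ and $L$ via $q\c f_N$. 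Once each such language is seen to be recognized by $e$ (by hypothesis, since it lies in the $\C$-algebraic span handled by $\L_{M,N}(f)$ — more precisely each $KaL\in\L_{M,N}(f)$ directly), one gets $\D$-morphisms $P\to\S$, hence $h_A\colon P\to UF(M\ast N)$ by the separating property.

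Finally I would verify that $h := \langle h_M,h_A,h_N\rangle$ preserves unit and multiplication. Unit-preservation is a direct computation comparing $h(1_P)$ with the unit matrix $\left(\begin{smallmatrix}1&0\\0&1\end{smallmatrix}\right)$, using $h_M(1)=1$, $h_N(1)=1$ and that $f_A$ sends $\epsilon$ to the zero of $\ol F(M\ast N)$. Multiplicativity reduces, via surjectivity of $e$, to checking the identity on elements of the form $e(w)$, $w\in\Sigma^*$, where it unwinds to the matrix-multiplication formula and the fact that $f$ and the three component maps are already multiplicative; the off-diagonal term forces one to match $\eta(h_M(-)\ast 1)\cdot(-) + (-)\cdot\eta(1\ast h_N(-))$ against what $f_A$ does, which is exactly the content of how $KaL$ was encoded. \textbf{The main obstacle} I anticipate is the construction of $h_A$: one must show that the composites $U(\ol{p\ast q})\c f_A$ are \emph{simultaneously} recognized by $e$ in a way that is natural/coherent in $(p,q)$, so that the induced maps $P\to\S$ glue to a single morphism into the subobject $UF(M\ast N)\hookrightarrow\prod_{p,q}\S$ — i.e. the separating family of Assumption \ref{asm:dual}(iii) must be used not just to get morphisms but to reconstruct the target object up to the relevant factorization, and checking that the glued map is a $\D$-morphism (not just set-theoretic) is the delicate point.
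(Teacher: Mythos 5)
Your proposal follows essentially the same route as the paper: uniqueness from $e$ being epi, existence via the separating family of Assumption \ref{asm:dual}(iii) plus diagonal fill-in, with the crux being that $U(\ol{p\ast q})\c f_{MN}$ decomposes into marked products of languages recognized by $\pi_M\c f$ and $\pi_N\c f$ (the paper's Lemma \ref{lem:reutformula} combined with the fact that $F(M\ast N)$ is generated by the elements $\eta(m\ast n)$). Two small points where the paper is tighter: $U(\ol{p\ast q})\c f_{MN}$ is a \emph{linear combination} $\sum\lambda_j^a(L_M^{a,j}aL_N^{a,j})$ of languages in $\L_{M,N}(f)$, not one of them directly, so one also needs that the languages recognized by $e$ are closed under sums and scalar products (Lemma \ref{lem:recclosed}); and the final verification that $h$ is a $\D$-monoid morphism is automatic from $h\c e=f$ with $e$ a surjective monoid morphism, so your componentwise multiplicativity computation is not needed.
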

Using our duality framework, this theorem can be rephrased in terms of 
language operations.
Recall that $E(\S)\cong\one_\C$ by Assumption \ref{asm:dual}(v). Putting $O_\C 
\defeq
E(\one_\D)$, we obtain a bijection
$i\colon S \cong \D(\one_\D, \S) \cong \C(E(\S),E(\one_\D)) \cong \C(\one_\C, 
O_\C) \cong 
\under{O_\C}$.

\begin{defn}
For any $n$-ary operation symbol $\gamma$ in the signature of $\C$ and 
languages $L_1,\ldots, L_n\colon \Sigma^*\to S$, the language 
$\ol\gamma(L_1,\ldots,L_n)\colon \Sigma^*\to S$ is given by
$\ol\gamma(L_1,\ldots,L_n)(u) \defeq i^{-1}(\,\gamma^{O_\C}(\,i(L_1u), \ldots, i(L_n u)\,)\,)$.
The operations $\ol\gamma$ are called the \emph{$\C$-algebraic operations} on 
the set of languages 
over $\Sigma$.
\end{defn}

\begin{expl} $O_\BA\cong\{0,1\}$ is the two-element boolean algebra, and the 
$\BA$-algebraic operations are precisely the boolean operations (union, 
intersection, complement, $\emptyset$, $\Sigma^*$) on languages. For example, 
the operation symbol $\vee$ induces the language operation $(K\,\ol\vee\,L)(u) = 
K(u)\vee L(v)$ corresponding to the union of languages.  Similarly, for $\C=\DL$ we get 
union, intersection, $\emptyset$, $\Sigma^*$, for $\C=\JSL$ we get union and 
$\emptyset$, and for $\C=\Vect{\K}$ we get sum, scalar product and $\emptyset$.
\end{expl}
All our constructions and results so far apply to arbitrary $\D$-monoids. 
However, in the following theorem we need to restrict to \emph{finite} 
$\D$-monoids. Recall that the \emph{derivatives} of a language 
$L\colon\Sigma^*\to S$ 
are the languages $a^{-1}L,\,L{a^{-1}}\colon \Sigma^*\to S$ (where 
$a\in\Sigma$) 
defined by 
$(a^{-1}L)(u) = L(au)$ and $(La^{-1})(u) = L(ua)$.

\begin{theorem}\label{thm:schurec2}
Let $M$ and $N$ be finite $\D$-monoids and $f\colon \Psi\Sigma^*\to M\diamond 
N$ be a $\D$-monoid morphism.  Then every 
language
recognized by $f$ lies in the closure of $\L_{M,N}(f)$ 
under the $\C$-algebraic operations and derivatives.
\end{theorem}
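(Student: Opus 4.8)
The plan is to derive the statement from the universal property of $M\diamond N$ (Theorem~\ref{thm:universalprop}) combined with the duality underlying Assumptions~\ref{asm:dual}. Write $\mathcal{K}:=\L_{M,N}(f)$ and let $\ol{\mathcal{K}}$ be its closure under the $\C$-algebraic operations and derivatives; for a surjective $\D$-monoid morphism $e\colon\Psi\Sigma^*\epito P$ with $P$ finite write $\Rec{e}$ for the set of languages recognized by $e$. The crux is to produce such an $e$ with $\Rec{e}=\ol{\mathcal{K}}$. Granting this, $e$ recognizes every language in $\mathcal{K}$, so Theorem~\ref{thm:universalprop} yields a $\D$-monoid morphism $h\colon P\to M\diamond N$ with $h\c e=f$; then any $\Lambda$ recognized by $f$, say $\Lambda_\D=p\c f$ with $p\colon M\diamond N\to\S$ in $\D$, satisfies $\Lambda_\D=(p\c h)\c e$, hence $\Lambda\in\Rec{e}=\ol{\mathcal{K}}$, which is exactly the claim.

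The first ingredient is a \emph{local duality dictionary}: for any finite $\D$-monoid $P$ and surjection $e\colon\Psi\Sigma^*\epito P$, the set $\Rec{e}$ is finite and closed under derivatives and the $\C$-algebraic operations. Finiteness is clear, as $\Rec{e}$ is the image of the finite set $\D(P,\S)$ under $q\mapsto q\c e$. Closure under derivatives holds because $a^{-1}(q\c e)$ and $(q\c e)a^{-1}$ are recognized by $e$ via $q$ post-composed with left, resp.\ right, multiplication by $e(a)$ in $P$, which are morphisms of $\D$ because the multiplication of $P$ is a bimorphism. Closure under the $\C$-algebraic operations is where the duality enters: from $E\colon\D_f^{\op}\simeq\C_f$, the isomorphism $E(\S)\cong\one_\C$ of Assumption~\ref{asm:dual}(v), and $\C(\one_\C,\dash)\cong\under{\dash}$ one gets a natural bijection $\D(P,\S)\cong\under{E(P)}$, and a short computation dualizing the evaluation morphisms $\one_\D\to P$ at words shows that, after composing with $e$, this bijection turns the $\C$-algebra structure of $E(P)$ into the $\C$-algebraic operations on languages. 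This is the ``local'' form of the dualities of \cite{ammu14,ammu15,cu15}, which I would cite wherever possible.

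Next I would check that $\ol{\mathcal{K}}$ is a finite, recognizable set of languages. The set $\mathcal{K}$ is finite: finitely many languages are recognized by the finite $\D$-monoid $M$ via $\pi_M\c f$, likewise for $N$ via $\pi_N\c f$, and there are finitely many letters. By Theorem~\ref{thm:schurec} together with Assumptions~\ref{asm:dual}(i),(ii) — which make $M\diamond N$ finite — each language in $\mathcal{K}$ is recognized by some finite $\D$-monoid; collecting the finitely many witnessing morphisms into one surjection $e^{*}\colon\Psi\Sigma^*\epito P^{*}$ onto a finite $\D$-monoid (take the product of these morphisms and pass to its surjective image) gives $\mathcal{K}\seq\Rec{e^{*}}$. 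Since $\Rec{e^{*}}$ is closed under the $\C$-algebraic operations and derivatives by the dictionary, $\mathcal{K}\seq\ol{\mathcal{K}}\seq\Rec{e^{*}}$, so $\ol{\mathcal{K}}$ is finite and recognizable. The local variety theorem of \cite{ammu14,ammu15,cu15} (for the fixed finite alphabet $\Sigma$) then gives a surjective $\D$-monoid morphism $e\colon\Psi\Sigma^*\epito P$ with $P$ finite and $\Rec{e}=\ol{\mathcal{K}}$; equivalently, one may take $e$ to be the syntactic $\D$-monoid morphism of the recognizable set $\ol{\mathcal{K}}$ and verify the equality via the dictionary. Combined with the first paragraph, this proves the theorem.

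The main obstacle is conceptual rather than computational. Theorem~\ref{thm:universalprop}, which we may assume, is precisely where the marked-product structure of $M\diamond N$ is exploited, so what remains is ``only'' to realize $\ol{\mathcal{K}}$ as $\Rec{e}$ for a single finite quotient $e$. The delicate point is obtaining $\Rec{e}$ \emph{equal to} $\ol{\mathcal{K}}$ rather than merely containing it — this is exactly the content of the local variety theorem, and it is where Assumptions~\ref{asm:dual}(iv),(v) are essential, ensuring that the $\C$-algebraic operations ``see'' every morphism of $\D$ into $\S$. If that theorem is not available in exactly the form required, I would instead carry out this step by hand via the syntactic $\D$-monoid construction for a recognizable set of languages, which is the only nontrivial bookkeeping left.
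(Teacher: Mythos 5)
Your proposal is correct and follows essentially the same route as the paper: show the closure of $\L_{M,N}(f)$ is a finite set of recognizable languages, invoke the Local Variety Theorem of \cite{ammu14} to realize it exactly as the languages of a single finite quotient $e\colon\Psi\Sigma^*\epito P$, apply Theorem~\ref{thm:universalprop} to factor $f$ through $e$, and conclude. The only (harmless) variation is how finiteness of the closure is obtained — you bound it by $\Rec{e^*}$ for a finite quotient recognizing all of $\L_{M,N}(f)$, whereas the paper first closes under derivatives (finitely many, since the recognizers are finite) and then uses local finiteness of $\C$.
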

Our proof uses the \emph{Local Variety Theorem} of \cite{ammu14}: for any 
finite set 
 $\V$ of recognizable languages closed under $\C$-algebraic 
operations and derivatives, there is a finite $\D$-monoid recognizing 
precisely the languages of $\V$.
Coincidentally, for each of our categories of Example 
\ref{ex:adjunctions}(1)-(4) it suffices to take the closure of 
$\L_{M,N}(f)$ 
under $\C$-algebraic operations, as this set is already derivative-closed. For 
example, for $\C=\Vect{\K}$ we have 
$a^{-1}(KaL)=(a^{-1}K)aL+K(\epsilon)L$, i.e. $a^{-1}(KaL)$ is a linear 
combination of 
languages in $\L_{M,N}(f)$ and thus lies in the closure of $\L_{M,N}(f)$
under $\Vect{\K}$-operations. For $\D=\Set$, $\Pos$ and $\JSL$, Theorem 
\ref{thm:schurec2} 
then gives

\begin{corollary}[Reutenauer \cite{reutenauer79}, Pin \cite{pin03}, Kl\'ima and Pol\'ak \cite{klimapolak10}]
Let $M$ and $N$ be finite monoids [ordered monoids, idempotent semirings]. 
Then any language recognized by the Sch\"utzenberger product $M\diamond N$ is 
a boolean combination [positive boolean combination, finite union] of 
languages of the form $K$, $L$ and $KaL$, where $K$ is recognized by $M$, $L$ 
is recognized by $N$, and $a\in\Sigma$.
\end{corollary}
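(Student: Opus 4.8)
The plan is to deduce the corollary from Theorem~\ref{thm:schurec2} by specialising to the three settings of Example~\ref{ex:adjunctions}(1)--(3) and then \emph{eliminating the derivatives} from the closure it provides. For $\C=\BA$, $\DL$ and $\JSL$ the $\C$-algebraic operations are, respectively, all boolean operations, the positive boolean operations, and finite unions together with $\emptyset$, as computed in the example following the definition of the $\C$-algebraic operations; hence once derivatives are discarded, Theorem~\ref{thm:schurec2} yields precisely the asserted boolean combination, positive boolean combination and finite union. Now every $\C$-algebraic operation acts \emph{pointwise} on a word $u$, depending only on the values $L_j(u)$, whereas a derivative merely reindexes the argument word ($a^{-1}L\colon u\mapsto L(au)$, $La^{-1}\colon u\mapsto L(ua)$); consequently each derivative commutes with every $\C$-algebraic operation. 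It follows that the closure $C$ of $\L_{M,N}(f)$ under the $\C$-algebraic operations is automatically closed under derivatives as soon as the derivatives of the \emph{generators} of $\L_{M,N}(f)$ already lie in $C$, and in that case $C$ coincides with the closure under $\C$-algebraic operations and derivatives appearing in Theorem~\ref{thm:schurec2}.

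It therefore remains to treat the three kinds of generators. For a language $K$ recognised by $\pi_M\c f$ via a morphism $p\colon M\to\S$, I would recognise $a^{-1}K$ through the \emph{same} morphism $\pi_M\c f$, now composed with $m\mapsto p((\pi_M\c f)(a)\bullet m)$; this is a morphism of $\D$ because left multiplication by a fixed element of the $\D$-monoid $M$ is a $\D$-morphism (the multiplication being a bimorphism), and it recognises $a^{-1}K$ since $\pi_M\c f$ is a monoid morphism, so $(\pi_M\c f)(au)=(\pi_M\c f)(a)\bullet(\pi_M\c f)(u)$. Right derivatives, and the generators recognised by $\pi_N\c f$, are symmetric; thus the derivatives of $K$ and of $L$ remain in $\L_{M,N}(f)$. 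For a marked product $KaL$ I would use the elementary derivative identities obtained by splitting a factorisation $u=vaw$ according to whether the distinguished letter agrees with the derivative letter and whether the outer factor is empty: for $b\neq a$ one gets $b^{-1}(KaL)=(b^{-1}K)aL$ and $(KaL)b^{-1}=Ka(Lb^{-1})$, while for the distinguished letter $a^{-1}(KaL)=(a^{-1}K)aL$ augmented by $L$ when $\varepsilon\in K$, and $(KaL)a^{-1}=Ka(La^{-1})$ augmented by $K$ when $\varepsilon\in L$. Since $b^{-1}K$ is again recognised by $M$ and $Lb^{-1}$ by $N$, each right-hand side is a finite union of generators of $\L_{M,N}(f)$ (possibly with the constant $\emptyset$), an operation available for all three choices of $\C$; hence the derivatives of the generators lie in $C$, which establishes the derivative-closure of $C$.

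Combining the two steps, $C$ equals the closure under $\C$-algebraic operations \emph{and} derivatives, so Theorem~\ref{thm:schurec2} shows that every language recognised by $f$ lies in $C$, i.e.\ is a $\C$-algebraic combination of the generators $K$, $L$, $KaL$; reading off the concrete meaning of the $\C$-algebraic operations for $\C=\BA$, $\DL$ and $\JSL$ then gives the boolean combination, positive boolean combination and finite union claimed by the corollary. The step I expect to be the main obstacle is the bookkeeping for the derivatives of $KaL$: one must verify the four decomposition identities by a careful case analysis on the factorisations $u=vaw$ and confirm that the unit-membership contributions (the extra copies of $L$ or $K$) are absorbed using only the operations permitted by the \emph{weakest} of the three varieties, namely $\JSL$, whose sole operations are union and $\emptyset$. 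Everything else reduces to the pointwise commutation of derivatives with $\C$-algebraic operations and to the standard stability of recognisable languages under derivatives.
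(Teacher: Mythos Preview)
Your proposal is correct and follows essentially the same route as the paper: invoke Theorem~\ref{thm:schurec2}, then eliminate the derivative closure by showing that derivatives commute with the pointwise $\C$-algebraic operations and that the derivatives of the generators $K$, $L$, $KaL$ already lie in the $\C$-algebraic closure of $\L_{M,N}(f)$, using precisely the decomposition identities you list for $b^{-1}(KaL)$ and $(KaL)b^{-1}$. The paper records exactly these identities (in the remark following the corollary) and makes the same observation that for $\S=\{0,1\}$ the extra $K(\varepsilon)L$ term reduces to a case distinction absorbed by finite union, so your anticipated ``main obstacle'' is handled just as you sketch.
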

For $\D=\Vect{\K}$, we obtain a new result for formal power series:

\begin{corollary}
Let $M$ and $N$ be finite algebras over a finite field $\K$. Then any language 
recognized by 
$M\diamond N$ is a linear combination of power series of the form $K$, 
$L$ and $KaL$, where $K$ is recognized by $M$, $L$ is recognized by $N$, and 
$a\in\Sigma$.
\end{corollary}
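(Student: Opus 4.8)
The plan is to obtain this corollary as the special case $\D=\C=\Vect{\K}$ of Theorem~\ref{thm:schurec2}, and then to verify that in this setting the derivative operations contribute nothing beyond $\K$-linear combinations. By Examples~\ref{ex:adjunctions}(4) and~\ref{ex:predualcats}(4) we are in the case $\D=\C=\Vect{\K}$ with $\S=\K$, and the $\Vect{\K}$-algebraic operations on languages are precisely sum, scalar multiplication and the zero series. Hence the closure of any set of languages under the $\Vect{\K}$-algebraic operations is exactly its $\K$-linear span. Theorem~\ref{thm:schurec2} thus says that every language recognized by $f$ lies in the closure of $\L_{M,N}(f)$ under $\K$-linear combinations \emph{and} derivatives. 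So it suffices to prove that the $\K$-linear span of $\L_{M,N}(f)$ is already closed under the derivative operations $L\mapsto a^{-1}L$ and $L\mapsto La^{-1}$: the smallest set containing $\L_{M,N}(f)$ and closed under both families then coincides with that span, which is precisely the set of finite $\K$-linear combinations of languages of the form $K$, $L$ and $KaL$ as in the statement.

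Since derivatives are $\K$-linear, it is enough to compute the derivatives of the three kinds of generators of $\L_{M,N}(f)$ and check that each lands back in the span. First I would record the standard fact that derivatives of a recognized language are recognized by the same $\D$-monoid: if $K$ is recognized via $\pi_M\c f$ and some $p\colon M\to\S$, then $a^{-1}K$ is recognized via the same morphism $\pi_M\c f$ together with $p'\defeq p\c\bigl((\pi_M\c f)(a)\bullet\dash\bigr)$, and symmetrically for $Ka^{-1}$. Here $(\pi_M\c f)(a)\bullet\dash$ is a morphism of $\D$ because the multiplication of $M$ is a bimorphism, so left and right translations are morphisms; one checks on the basis $\Sigma^*$ that $p'\c(\pi_M\c f)$ is the transpose of $a^{-1}K$. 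Consequently $a^{-1}K$ and $Ka^{-1}$ are again recognized by $\pi_M\c f$ and hence belong to $\L_{M,N}(f)$, and likewise $a^{-1}L$ and $La^{-1}$ belong to $\L_{M,N}(f)$.

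The remaining, and only genuinely computational, step concerns the marked products $KaL$. For a letter $b$ I would establish, directly from the defining formula $(KaL)(u)=\sum_{u=vaw}K(v)\o L(w)$, the identities
\[
 b^{-1}(KaL)=(b^{-1}K)aL \quad (b\neq a), \qquad
 a^{-1}(KaL)=(a^{-1}K)aL+K(\epsilon)\cdot L,
\]
\[
 (KaL)b^{-1}=Ka(Lb^{-1}) \quad (b\neq a), \qquad
 (KaL)a^{-1}=Ka(La^{-1})+L(\epsilon)\cdot K,
\]
obtained by splitting the sum over decompositions $bu=vaw$ (resp. $ua=vaw$) according to whether the distinguished $a$ is the prepended (resp. appended) letter. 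By the previous paragraph $b^{-1}K$ and $Lb^{-1}$ are recognized by $M$ and $N$ respectively, so the terms $(b^{-1}K)aL$, $(a^{-1}K)aL$, $Ka(Lb^{-1})$ and $Ka(La^{-1})$ are again marked products lying in $\L_{M,N}(f)$, while the correction terms $K(\epsilon)\cdot L$ and $L(\epsilon)\cdot K$ are scalar multiples of generators. Hence every derivative of $KaL$ is a $\K$-linear combination of elements of $\L_{M,N}(f)$, so the span is derivative-closed and the corollary follows.

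The main obstacle is precisely this marked-product derivative computation: unlike the boolean cases, the $\K$-algebra identities carry the scalar correction terms $K(\epsilon)$ and $L(\epsilon)$ arising from the empty factor in a decomposition, and one must track carefully which decompositions $u=vaw$ survive after prepending or appending a letter. Everything else is routine: the identification of the $\Vect{\K}$-operations with the linear span, the closure of recognition under translations, and the fact that all occurring linear combinations are finite (as $\K$ and $M,N$ are finite, the sets $\D(M,\S)$ and $\D(N,\S)$ are finite, so $\L_{M,N}(f)$ is finite).
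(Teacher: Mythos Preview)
Your proposal is correct and follows essentially the same approach as the paper: specialize Theorem~\ref{thm:schurec2} to $\D=\C=\Vect{\K}$, identify the $\C$-algebraic operations with $\K$-linear combinations, and then eliminate the derivative closure by verifying the identities $b^{-1}(KaL)=(b^{-1}K)aL$ for $b\neq a$ and $a^{-1}(KaL)=(a^{-1}K)aL+K(\epsilon)\cdot L$ (and their right-sided analogues), together with the observation that derivatives of $K$ and $L$ are again recognized by $\pi_M\c f$ and $\pi_N\c f$. The paper sketches exactly this argument in the paragraph preceding the corollaries and in the final remark of the appendix.
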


\section{Conclusions and Future Work}
We presented a uniform approach to Sch\"utzenberger products 
for various algebraic structures. Our categorical framework encompasses all 
known instances of Sch\"utzenberger products in the 
setting 
of regular 
languages. Two related constructions are the 
Sch\"utzenberger products for \emph{$\omega$-semigroups} \cite{carton93} 
(dealing with 
$\infty$-languages), and for \emph{boolean spaces with 
internal 
monoids} \cite{gpr16} (dealing with non-regular  
languages). Neither of 
these structures are monoids in the categorical sense, and thus are not 
covered by our present setting. The use of monads as in 
\cite{boj15,camu16,uacm16} might pave the way to extending the scope of our 
work. 

Since our main focus in the present paper was to establish the categorical 
setting, we restricted to \emph{binary} Sch\"utzenberger products $M\diamond 
N$. For (ordered) monoids and semirings, a non-trivial $n$-ary generalization 
of the Sch\"utzenberger product
is known \cite{straubing81,pin03,polak01}, and we aim to adapt our results to arbitrary $n$.

\bibliographystyle{splncs03}
\bibliography{ourpapers,coalgebra}

\newpage
\appendix
\noindent This appendix provides all omitted
proofs, as well as additional details for our examples. We start with a 
review of concepts from universal 
algebra and 
category theory.

\section{Algebraic Toolkit}

\begin{appsec}\label{app:varieties}\textbf{Varieties of algebras.}
Fix a finitary signature~$\Gamma$, i.e. a set of
operation symbols with finite arities. A 
\emph{$\Gamma$-algebra} is a 
set $A$
equipped with an operation $\gamma^A: A^n\to A$ for each $n$-ary 
$\gamma\in\Gamma$, and a \emph{morphism} of $\Gamma$-algebras 
is a map preserving these operations. \emph{Quotients} and 
\emph{subalgebras} of $\Gamma$-algebras are represented by 
surjective resp. injective morphisms. A \emph{variety of algebras} 
is 
a class of $\Gamma$-algebras closure under quotients, subalgebras, 
and products. Equivalently, a variety is class of $\Gamma$-algebras 
specified by equations $s=t$ between 
$\Gamma$-terms.
\end{appsec}

\begin{appsec}\label{app:orderedvarieties}\textbf{Varieties of ordered algebras.}
An \emph{ordered $\Gamma$-algebra} is a 
poset $A$ equipped with a monotone operation $\gamma^A: A^n\to A$ for each 
$n$-ary $\gamma\in\Gamma$, and a \emph{morphism} of 
ordered $\Gamma$-algebras is a monotone map preserving these operations. 
\emph{Quotients} of ordered algebras are 
represented by 
surjective morphisms, and \emph{subalgebras} by order-reflecting morphisms $m$ 
(i.e. $mx\leq my$ iff $x\leq y$). A \emph{variety of ordered 
algebras} 
is 
a class of ordered $\Gamma$-algebras closed under quotients, subalgebras, 
and products. Equivalently, a variety is class of ordered $\Gamma$-algebras 
specified by inequations $s\leq t$ between 
$\Gamma$-terms. 
\end{appsec}
In the following, let $\D$ always denote a variety of algebras or ordered algebras.

\begin{appsec}\label{app:commvarieties}\textbf{Commutative varieties.}
A variety $\D$ of algebras or ordered algebras is \emph{commutative} if, for 
any 
$A\in\D$ and any 
$n$-operation 
symbol $\gamma\in\Gamma$, the corresponding operation 
$\gamma^A: \under{A}^n\to \under{A}$ carries a morphism $\gamma^A: A^n\to A$ 
of $\D$. Equivalently, for any two 
algebras $A,B\in \D$, the set $[A,B]$ of morphisms from $A$ to $B$ forms an 
algebra of $\D$ under the pointwise $\Gamma$-operations, i.e. 
$[A,B]$ carries a 
subalgebra of $B^{\under{A}}$, the $\under{A}$-fold power of $B$. 
\end{appsec}

\begin{appsec}\label{app:congruences}\textbf{Congruences and stable preorders.} 
\begin{enumerate}[(1)]
\item A \emph{congruence} on a $\Gamma$-algebra $A$ is an 
equivalence relation $\equiv$ on $A$ such that for all $n$-ary operations 
$\gamma\in\Gamma$ and elements $a_1,\ldots,a_n,b_1,\ldots,b_n\in A$, 
\[a_i\equiv b_i\,\, (i=1,\ldots,n) \quad\text{implies}\quad 
\gamma^A(a_1,\ldots,a_n)\equiv\gamma^A(b_1,\ldots, b_n).\]
The set $A/\mathord{\equiv}$ of equivalence classes carries a 
$\Gamma$-algebra structure defined by
\[ \gamma^{A/\mathord{\equiv}}([a_1],\ldots,[a_n]) := 
[\gamma^A(a_1,\ldots,a_n)],\]
and the projection map $\pi: A\epito A/\mathord{\equiv}$, $a\mapsto [a]$, is a 
surjective morphism of $\Gamma$-algebras.
\item Let $(A,\leq)$ be an ordered $\Gamma$-algebra. A \emph{stable preorder}  
on $A$ is a preorder $\preceq$ on $A$ such that
(i) $a\leq b$ implies $a\preceq b$, and (ii) for all $n$-ary operations 
$\gamma\in\Gamma$ and elements $a_1,\ldots,a_n,b_1,\ldots,b_n\in A$, 
\[a_i\preceq b_i\,\, (i=1,\ldots,n) \quad\text{implies}\quad 
\gamma^A(a_1,\ldots,a_n)\preceq\gamma^A(b_1,\ldots, b_n).\]
For any stable preorder, the equivalence relation $\equiv\,:=\,\preceq \cap 
\succeq$ forms a congruence on $A$ in the sense of (1), and 
$A/\mathord{\equiv}$ becomes an ordered $\Gamma$-algebra by setting 
$[a]\leq[a']$ iff $a\leq a'$. We write $A/\mathord{\preceq}$ for this ordered 
algebra. The projection map $\pi: A\epito A/\preceq$, $a\mapsto [a]$, 
is a 
surjective morphism of ordered $\Gamma$-algebras.
\end{enumerate}
\end{appsec}

\begin{appsec}\label{app:jointlyinj}\textbf{Separating families.}
A family $\{f_i: A\to B_i\}_{i\in I}$ of morphisms in $\D$ is 
\emph{separating} if the 
  morphism 
  $f: A\to \prod_i B_i$ with $f(a) = 
  (f_i(a))_{i\in I}$ is injective (resp. order-reflecting if $\D$ is a variety 
  of ordered algebras). Equivalently, for any 
  two elements $a,a'\in A$ with $a\neq a'$ (resp. $a\not\leq a'$), there 
  exists an $i\in I$ with $f_i(a)\neq f_i(a')$ (resp. $f_i(a)\not\leq 
  f_i(a')$).
  Suppose that, for each $i\in I$, another separating family $\{\,g_{i,j}: 
  B_i\to C_{i,j}\,\}_{j\in J_i}$ is given. Then the combined family 
  $\{\,g_{i,j}\c f_i: A\to C_{i,j}\,\}_{i\in I, j\in J_i}$ is also separating.
\end{appsec}

\begin{appsec}\label{app:factsystems}\textbf{Factorization systems.}
Any variety $\D$ of algebras or ordered algebras has the factorization system 
of surjective 
and injective (resp. order-reflecting) morphisms. This means that (i) any 
morphism $h: A\to B$ has a factorization $h=m\c e$ with $e$ surjective and $m$ 
injective (resp. order-reflecting), and (ii) the \emph{diagonal fill-in} 
property holds: given a commutative square as displayed below with $e$ surjective and 
$m$ injective (order-reflecting), there is a unique morphism $d$ making both 
triangles commutative: 
\[
\xymatrix{
D \ar@{->>}[r]^e \ar[d]_g & C \ar[d]^{h} \ar@{-->}[dl]_{d}\\
A \ar[r]_{m} & B_i
}
\]  
The diagonal fill-in property generalizes to families of morphisms: suppose 
that $e$, $g$, 
$h_i$ and 
$m_i$ ($i\in I$) are morphisms with $h_i\c e=m_i\c g$ for all $i$. 
If $e$ is surjective and the family $\{\,m_i\,\}_{i\in I}$ is separating, then there exists a unique morphism 
$d$ with $d\c e = g$ and $m_i\c d=h_i$ for all $i$:
\[
\xymatrix{
D \ar@{->>}[r]^e \ar[d]_g & C \ar[d]^{h_i} \ar@{-->}[dl]_{d}\\
A \ar[r]_{m_i} & B_i
}
\] 
\end{appsec}

\begin{appsec}\label{app:tensorproducts}\textbf{Tensor products.}
Let $\D$ be \emph{commutative} variety of algebras or ordered algebras
\begin{enumerate}[(1)]
\item Let $A,B,C\in \D$. By a 
\emph{bimorphism} from 
$A$, $B$ to 
$C$ is meant a function $f: \under{A}\times\under{B}\ra\under{C}$ such that 
the maps 
$f(a,\mathord{-}): \under{B}\ra \under{C}$ and $f(\mathord{-},b): \under{A}\ra 
\under{C}$ carry morphisms of $\D$ for every $a\in\under{A}$ and $b\in 
\under{B}$. A \emph{tensor product} of $A$ and $B$ is a universal bimorphism 
$t_{A,B}: \under{A}\times \under{B} \ra \under{A\t B}$, i.e.
for any bimorphism $f: \under{A}\times \under{B} \ra \under{C}$
there is a 
unique  $f': A\t B \ra C$ in $\D$ with $f'\c t_{A,B} = f$. We write $a\t b$ for the element $t_{A,B}(a,b)\in\under{A\t B}$.
\item Tensor products 
exist in any variety $\D$. Let us indicate how to construct them 
in the case where $\D$ is a variety of ordered algebras. Given $A,B\in \D$, 
form the free algebra 
$\Psi(\under{A}\times\under{B})$ in $\D$ generated by the set
$\under{A}\times\under{B}$. (For simplicity, we assume that 
$\under{A}\times\under{B}$ is a subset of $\Psi(\under{A}\times\under{B})$.) 
Form the smallest stable preorder $\preceq$ on
$\Psi(\under{A}\times\under{B})$ containing all inequations of the form
\begin{align*}
(a,\gamma(b_1,\ldots,b_n)) &\preceq \gamma((a,b_1),\ldots,(a,b_n))\\
\gamma((a,b_1),\ldots,(a,b_n)) &\preceq (a,\gamma(b_1,\ldots,b_n))\\
(\gamma(a_1,\ldots,a_n),b) &\preceq (\gamma((a_1,b),\ldots,(a_n,b))\\ 
(\gamma((a_1,b),\ldots,(a_n,b))&\preceq  (\gamma(a_1,\ldots,a_n),b) 
\end{align*}
where $\gamma\in \Gamma$ is an $n$-ary operation symbol, $a,a_1,\ldots,a_n\in 
A$ and $b,b_1,\ldots,b_n\in B$. Then the tensor product of $A$ and $B$ is 
given by \[A\t B := 
\Psi(\under{A}\times\under{B})/\mathord{\preceq}\] and the universal bimorphism \[t_{A,B} := 
(\,\under{A}\times\under{B}\monoto \Psi(\under{A}\times\under{B}) \xra{\pi} 
A\t B =
\Psi(\under{A}\times\under{B})/\mathord{\preceq}\,),\]
the composite of the inclusion map and the projection. In particular, $A\t B$ 
is generated by the elements $a\t b$ with $a\in\under{A}$ and $b\in\under{B}$.
The construction of 
$A\t B$ for unordered algebras is analogous: just replace inequations and 
stable 
preorders by equations and 
congruences.
\item For any $A,B,C\in \D$ there is a natural bijective correspondence between (i) 
morphisms from $A\t B$ to $C$, (ii) bimorphisms from $A,B$ to $C$, and (iii) 
morphisms 
from $A$ to $[B,C]$. Indeed, the correspondence of (i) and (ii) follows from 
the universal property of the tensor product. Further, any bimorphism $f: 
\under{A}\times \under{B}\to \under{C}$ defines a morphism
\[ \lambda f: A\to [B,C],\quad (\lambda f)(a)(b) = f(a,b), \]
and the map $f\mapsto \lambda f$ gives the bijective correpondence between 
(ii) 
and 
(iii).
\item Up to isomorphism, $\t$ is associative, commutative and has unit $\one_\D$.
More precisely, for any three objects 
$A,B,C\in \D$ there are natural isomorphisms $\alpha_{A,B,C}$, $\sigma_{A,B}$,
$\rho_A$ and $\lambda_A$ making the 
following squares commute:
\[ 
\xymatrix{
\under{(A\t B)\t C} \ar[r]^{\alpha_{A,B,C}} & \under{A\t (B\t C)}\\
(\under{A}\times \under{B})\times \under{C} \ar[u]^{t_{A\t B, C} 
\c(t_{A,B}\times C)} \ar[r]_{\alpha'_{A,B,C}} & \under{A} 
\times (\under{B}\times \under{C}) \ar[u]_{t_{A,B\t C} \c(A\times t_{B,C})}
}
\]
\[ 
\xymatrix{
\under{A\t B} \ar[r]^{\sigma_{A,B}} & \under{B\t A}\\
\under{A}\times \under{B} \ar[u]^{t_{A,B}} \ar[r]_{\sigma'_{A,B}} & 
\under{B}\times 
\under{A} \ar[u]_{t_{B,A}}
}\quad
\xymatrix{
\under{A}\t \under{\one_\D} \ar[r]^{\rho_A} & \under{A}\\
\under{A}\times \under{\one_\D} \ar[u]^{t_{A,\one_\D}} \ar[ur]_{\pi_A} &
}
\quad
\xymatrix{
\under{\one_\D\t A} \ar[r]^{\lambda_A} & \under{A}\\
 \under{\one_\D}\times \under{A} \ar[ur]_{\pi_A'} \ar[u]^{t_{\one_\D,A}} &
}
\]
where $\alpha'$ and $\sigma'$ are the canonical bijections, and $\pi_A$ and 
$\pi_A'$ are the projection maps. We shall often omit indices and write
$t$ for $t_{A,B}$, $\alpha$ for $\alpha_{A,B,C}$, etc.
\item A \emph{$\D$-monoid} $(M,\iota,\mu)$ is 
an object $M\in \D$ equipped with two morphisms $\iota: \one_\D\to 
M$ and $\mu: M\t M\to M$ such that the following diagrams commute:
\[
  \vcenter{
\xymatrix@C-2.5em{
  & (M\t M) \t M \ar[rr]^{\alpha} \ar[ld]_-{\mu \t M} & & M \t (M \t M)
  \ar[rd]^{M \t \mu} \\
  M \t M \ar[rrd]_{\mu} & & & & M \t M  \ar[lld]^{\mu} \\
  & & M 
}
}\quad
\vcenter{
\xymatrix{
  M \t \one_\D \ar[rd]^{\lambda} \ar[d]_{M \t \iota} \\
  M \t M \ar[r]^{\mu} & M\\
  \one_\D \t M \ar[ru]_{\rho} \ar[u]^{\iota  \t M}
}
}
\]
A \emph{morphism} between $\D$-monoids $(M,\iota_M,\mu_M)$ and $(N,\iota_N,\mu_N)$ is 
a morphism 
$h: M\to N$ in $\D$ such that the following square commutes:
\[
\xymatrix@C+1em{
M\t M \ar[r]^>>>>>>>{\mu_M} \ar[d]_{h\t h} & M \ar[d]^h & \ar[dl]^{\iota_N} \ar[l]_{\iota_M}\one_\D \\
N\t N \ar[r]_>>>>>>>{\mu_N} & N &
}
\]
Due to $\t$ representing bimorphisms, the notion of $\D$-monoids and their morphisms given here is equivalent to the set-theoretic one of Definition \ref{def:dmonoid}. 
\end{enumerate}
\end{appsec}

\section{Categorical Toolkit}
We assume familiarity with basic concepts from category theory, like 
categories, 
functors,  
natural transformations, and (co-)limits (see e.g. \cite{macl}). However, we 
recall here some definitions and facts concerning adjunctions.

\begin{appsec}\label{app:adjunctions}\textbf{Adjunctions.}
Let $U: \A\to \X$ be a functor between categories $\A$ and $\X$. Suppose that there 
exists,  for each 
$X\in\X$,  an object 
$FX\in\A$ and a morphism $\eta_X: X\to U(FX)$ in $\X$ with the following universal 
property: for any morphism $f: A\to UB$ in $\X$ with $B\in\B$, there is a 
unique $\ol f: FX\to B$ in $\B$ (called the \emph{adjoint transpose} of $f$) 
with $U(\ol f)\c \eta_X = f$. In this case the 
object map $X\mapsto FX$ extends uniquely to a functor $F: \X\to\A$ such that 
$\eta: \Id_\X \to UF$ becomes a natural transformation, and we say that the
functors $U$ and $F$ form an \emph{adjunction} (commonly denoted by 
$F\dashv U: 
\A\to\X$). The functor $U$ is the \emph{right adjoint}, $F$ the 
\emph{left adjoint}, and the natural transformation $\eta$ the \emph{unit} of 
the adjunction. $\eta$ 
induces 
another natural transformation $\epsilon: FU \to \Id_\A$ with components 
$\epsilon_A := \ol{\id_{UA}}$, called the \emph{counit} of the adjunction. The 
universal property gives rise to an isomorphism $\A(FX,A)\cong \X(X,UA)$ 
natural in $X\in \X$ and $A\in A$. An important fact about adjunctions is that 
right adjoints preserve limits, and dually left adjoints preserve colimits 
(and 
thus, in particular, epimorphisms).

A typical source of adjunctions are free constructions in algebra. For any 
variety 
$\D$ of algebras or ordered algebras, the forgetful functor $\under{\dash}: 
\D\to\Set$ (mapping an algebra to its underlying set) has the left adjoint 
$\Psi: \Set\to\D$ that maps a set $X$ to the free algebra $\Psi X$ in $\D$ 
generated by $X$. The unit $\eta_X: X\to \under{\Psi X}$ is the inclusion of 
generators. The freeness of $\Psi X$ amounts exactly to the universal property 
of 
$\eta_X$.
\end{appsec}

\begin{appsec}\label{app:compadjunctions}\textbf{Composition of adjunctions.} The \emph{composite} of two 
adjunctions \[(F\dashv U: \A\to \B,\eta,\epsilon)\quad\text{and}\quad(G\dashv 
V: \B\to \X, 
\eta', \epsilon')\] is the adjunction $FG\dashv VU: \A\to \X$ with unit 
$X\xra{\eta'_X} VGX \xra{V\eta_{GX}} VUFGX$ ($X\in \X$) and counit $FGVUA 
\xra{F\epsilon'_{UA}} FUA \xra{\epsilon_A} A$ ($A\in \A$).
\end{appsec}

\begin{appsec}\label{app:yoneda}\textbf{Yoneda lemma (weak form).} Let $\A$ be a category and $A\in \A$. The \emph{hom-functor} $\A(A,\dash): \A\to\Set$ maps an object $B\in \A$ to the set $\A(A,B)$ of morphisms from $A$ to $B$, and a morphism $f: B\to B'$ to the function
\[ \A(A,f): \A(A,B)\to \A(A,B'),\quad g\mapsto f\c g. \]
The hom-functor determines objects of $\A$ up to isomorphism: any natural isomorphism $\theta: 
\A(A,\dash)\cong \A(A',\dash)$ with $A,A'\in  \A$ yields an isomorphism 
$\theta_A(\id_A): A'\cong A$.
\end{appsec}

\section{Monoidal Adjunctions}

Here we present some well-known facts about monoidal functors and adjunctions. 
Since these facts appear scattered throughout the literature or are folklore 
in category theory, we sketch the proofs for some statements for the 
convenience of the reader. In the following, let $\A$, $\B$, $\C$, $\D$ be 
commutative varieties of (ordered) algebras; we remark that all concepts 
treated in this section can be introduced in  a more general form for monoidal 
categories, see e.g. \cite{macl}.

Recall from Definition \ref{def:monoidalfunc} the notion of a \emph{monoidal functor} and a \emph{monoidal natural transformation}. Recall also from \ref{app:tensorproducts} that $\D$-monoids $(M,1,\bullet)$ (see Definition \ref{def:dmonoid}) can be represented as triples $(M,\iota,\mu)$.

\begin{lemma}\label{lem:monlifting2}
Any monoidal functor $(G,\theta): \A\to \B$ lifts to a functor $\ol G: 
\Mon{\A}\to 
\Mon{\B}$ such that the following diagram commutes, where $U_\A$ and $U_\B$ 
are the forgetful functors:
\[
\xymatrix{
\Mon{\A}  \ar[d]_{U_\A} \ar[r]^{\ol G} & \Mon{\B} \ar[d]^{U_\B} \\
\A \ar[r]_G & \B
} 
\]
Explicitly, $\ol G$ maps an $\A$-monoid $(M,\iota,\mu)$ to the $\B$-monoid 
\[
  (GM,\ \one_\B\xra{\theta} G\one_\A \xra{G\iota} GM,\ GM\t GM \xra{\theta} 
  G(M\t M)\xra{G\mu}
  GM),
\]
and an $\A$-monoid morphism $f$ to $Gf$. Moreover, any monoidal natural 
transformation $\phi: (G,\theta)\to (H,\sigma)$ yields a natural 
transformation $\ol\phi: \ol G\to \ol H$ with components
\[ \ol\phi_{(M,\iota,\mu)} = \phi_M: \ol G(M,\iota,\mu) \to \ol H(M,\iota,\mu). \]
\end{lemma}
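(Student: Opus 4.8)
The plan is to prove each assertion by elementary diagram chases that combine functoriality of $G$, naturality of the structure maps $\theta$, and the coherence conditions from Definition \ref{def:monoidalfunc}. Throughout, write $\ol G(M,\iota,\mu) = (GM,\iota',\mu')$ with $\iota' = G\iota\c\theta_1$ and $\mu' = G\mu\c\theta_{M,M}$.

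First I would check that $(GM,\iota',\mu')$ is a $\B$-monoid, i.e.\ that the associativity and unit diagrams of \ref{app:tensorproducts}(5) commute. For associativity, one expands both $\mu'\c(\mu'\t GM)$ and $\mu'\c(GM\t\mu')\c\alpha$ using the definition of $\mu'$; applying naturality of $\theta$ (once in each argument, at the morphisms $\mu\t M$ and $M\t\mu$) rewrites both composites in the form $G\mu\c G(\mathord{-})\c\theta_{M\t M,M}\c(\theta_{M,M}\t GM)$, and the associativity coherence diagram of $(G,\theta)$ identifies the $\B$-side associator with $G\alpha$. What then remains is exactly $G$ applied to the associativity axiom of $(M,\iota,\mu)$. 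The unit laws are handled the same way, using the two unit coherence diagrams of $(G,\theta)$, naturality of $\theta$, and $G$ of the unit axiom for $(M,\iota,\mu)$.

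Next I would verify that $\ol G$ is a functor commuting with the forgetful functors. Given an $\A$-monoid morphism $h\colon(M,\iota_M,\mu_M)\to(N,\iota_N,\mu_N)$, I would show $Gh$ is a $\B$-monoid morphism $\ol GM\to\ol GN$: the multiplication square follows by pasting the naturality square of $\theta$ at $h\t h$ onto $G$ of the square that makes $h$ an $\A$-monoid morphism, and the unit equation is just $G$ applied to $h\c\iota_M=\iota_N$. Preservation of identities and composites is immediate since $\ol G$ acts as $G$ on underlying morphisms, and $U_\B\c\ol G = G\c U_\A$ holds by construction.

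Finally, for a monoidal natural transformation $\phi\colon(G,\theta)\to(H,\sigma)$ I would check that each component $\phi_M\colon GM\to HM$ is a morphism of $\B$-monoids $\ol GM\to\ol HM$: substituting the definitions of the two multiplications reduces the required square to the first (quadrilateral) monoidal-transformation square of $\phi$ at $(M,M)$ pasted with naturality of $\phi$ at $\mu\colon M\t M\to M$, and similarly the unit equation combines the second (triangular) monoidal square of $\phi$ with naturality of $\phi$ at $\iota$. Naturality of $\ol\phi$ in the $\A$-monoid argument is then just naturality of $\phi$ on underlying objects, since $\ol\phi_{(M,\iota,\mu)}=\phi_M$ and $\ol G$, $\ol H$ act as $G$, $H$ on morphisms. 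I expect no genuine obstacle here: the only point requiring care is orienting the associativity coherence diagram of $(G,\theta)$ correctly relative to the associators on the two sides in the associativity check — everything else is routine bookkeeping.
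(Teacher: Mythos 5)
Your proposal is correct and follows essentially the same route as the paper's proof: the associativity check decomposes into the same four cells (the coherence square of $(G,\theta)$, the associativity axiom of $(M,\iota,\mu)$, and two naturality squares of $\theta$), and the morphism and monoidal-transformation verifications use exactly the same pastings. No gaps.
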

\begin{proof}
  \begin{enumerate}
    \item It is straightforward to show that $(GM, G\iota \c \theta, G\mu \c
      \theta_{M, M})$ is a $\B$-monoid. For example, associativity is 
      established by the commutative diagram below. Part $(1)$ commutes since 
      $G$ is monoidal; $(2)$ commutes since $(M,
      \iota, \mu)$ is a monoid; $(3)$ and $(4)$ commute because $\theta_{A, 
      B}\colon
      GA \t GB \to G(A \t B)$ is natural in $A$ and $B$.
      \[
        \xymatrix{
        & (GM\t GM)\t GM \ar[rr]^{\alpha_{GM, GM, GM}} \ar@{}[rrdd]|{(1)}
        \ar[d]_{\theta_{M, M}\t GM} & & GM \t (GM\t GM) \ar[d]^{GM\t \theta_{M,
            M}}\\
        & G(M\t M) \t GM \ar[d]_{\theta_{M\t M, M}} \ar@/_2pc/[ld]_-{G\mu \t GM} &
        & GM \t G(M\t M) \ar[d]^{\theta_{M, M\t M}} \ar@/^2pc/[rd]^-{GM \t G\mu}
        \\1
        GM \t GM \ar@{}[r]|-{(3)} \ar@/_2pc/[rd]_{\theta_{M, M}} & G((M\t M)\t
        M) \ar[d]_{G(\mu \t M)} \ar[rr]_{G\alpha_{M, M, M}}& \ar@{}[dd]|{(2)} &
        G(M\t (M\t M)) \ar[d]^{G(M \t \mu)} \ar@{}[r]|-{(4)} & GM \t GM
        \ar@/^2pc/[ld]^{\theta_{M, M}}\\
        & G(M \t M) \ar[rd]_{G\mu} & & G(M\t M) \ar[ld]^{G\mu} \\
        & & GM
        }
      \] 
      The unit laws
      follow in a similar way.

      Assume that $f\colon (M,\iota_M, \mu_M) \to (N, \iota_N, \mu_N)$ is an $\A$-monoid morphism.
      Then the following diagram commutes:
      \[
        \xymatrix{
          GM \t GM \ar[r]^{\theta_{M, M}} \ar[d]_{Gf \t Gf} & G(M\t M) \ar[r]^-{G\mu_M}
          \ar[d]_{G(f\t f)} & GM \ar[d]^{Gf} \\
          GN \t GN \ar[r]_{\theta_{N, N}} & G(N\t N) \ar[r]_-{G\mu_N} & GN
        }
      \]
      where the right square commute because $f$ is a $\D$-monoid morphism and 
      the left
      square commutes as $\theta_{M, M}$ is natural. Together with the 
      corresponding diagram for the preservation of the unit, this shows that 
      $Gf: \ol G(M,\iota_M,\mu_M)\to \ol G(N,\iota_M,\mu_N)$ is a $\B$-monoid 
      morphism.
    \item To show that every monoidal natural transformation $\varphi$ 
      lifts to a natural transformation from $\ol G$ to $\ol H$, it suffices to
      show that $\varphi_M$ is a $\B$-monoid morphism for every $\A$-monoid 
      $(M, \iota,
      \mu)$. The preservation of the multiplication follows from the following diagram:
      \[
        \xymatrix@C+2em{
          GM \t GM \ar[r]^{\varphi_M \t \varphi_M} \ar[d]_{\theta_{M, M}}
          & HM \t HM \ar[d]^{\sigma_{M, M}} \\
          G(M \t M) \ar[r]^{\varphi_{M \t M}} \ar[d]_{G\mu} & H(M \t M) \ar[d]^{H\mu}
          \\
          GM \ar[r]_{\varphi_M} & HM 
        }
      \]
      where the upper square uses that $\phi$ is a monoidal natural
      transformation and the lower square is the naturality of~$\varphi$. Similarly for the preservation of the unit.
  \end{enumerate}
\end{proof}

\begin{lemma}\label{lem:moncompose}
For any two monoidal functors $(G,\theta)\colon \A\to \B$ and $(H,\sigma)\colon
\B\to\C$ the composite $HG$ becomes a monoidal functor via
\begin{align*}
  (H\theta\c\sigma)_{A, B} & =
  H(GA)\t H(GB) \xra{\sigma_{A, B}} H(GA\t GB) \xra{H\theta_{GA, GB}} HG(A\t B)
  \\
  G\theta \c \sigma & = \one\C  \xra{\sigma} H\one_\B \xra{G\theta} HG\one_\A.
\end{align*}
\end{lemma}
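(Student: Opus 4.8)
The plan is to verify directly that the pair of morphisms
$(H\theta\c\sigma,\ G\theta\c\sigma)$ defined in the statement satisfies the three coherence conditions from Definition~\ref{def:monoidalfunc}, namely the associativity pentagon and the two unit triangles. All three diagrams are pasted together from instances of (a) the coherence diagrams for $(G,\theta)\colon\A\to\B$, (b) the coherence diagrams for $(H,\sigma)\colon\B\to\C$, and (c) the naturality squares for $\sigma$, with the functor $H$ applied to the $\A$-level diagrams where needed. First I would record that $H$ applied to the coherence diagram of $(G,\theta)$ is again commutative (functors preserve commuting diagrams), so $H$ of the associativity pentagon for $\theta$ and $H$ of the unit triangles for $\theta$ are available.

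For the associativity condition, the plan is to build the big hexagon for $HG$ by inserting a column of $H$-images of $\B$-objects between the outer $\C$-level vertices. Starting from $(HGA\t HGB)\t HGC$, one first applies $\sigma$ twice (in the appropriate order dictated by the definition of the composite constraint) to land in $H((GA\t GB)\t GC)$; here the relevant square is an instance of the associativity pentagon for $(H,\sigma)$ together with a naturality square of $\sigma$ along $\theta_{GA,GB}$. Then one applies $H(\theta_{GA,GB}\t GC)$ followed by $H\theta_{GA\t GB,C}$ to reach $HG((A\t B)\t C)$; the corresponding square commutes because it is $H$ applied to a naturality square of $\theta$. Chaining with the $H$-image of the associativity pentagon for $\theta$, and the symmetric decomposition on the right-hand side of the hexagon, closes the diagram. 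The two unit triangles are handled the same way but are shorter: one uses a unit triangle for $\sigma$, a naturality square of $\sigma$ along $\theta_1$, and the $H$-image of the corresponding unit triangle for $\theta$.

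I would present this as two or three pasted diagrams (one per coherence law, with the two unit laws possibly combined), labelling each region as ``coherence of $\theta$'', ``coherence of $\sigma$'', ``$H$ applied to coherence of $\theta$'', or ``naturality of $\sigma$'', exactly in the style already used for Lemma~\ref{lem:monlifting2} in the excerpt. No genuinely new idea is needed beyond bookkeeping; this is the standard fact that monoidal functors compose.

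The main obstacle is purely notational: keeping the bracketing of iterated tensors and the order of the two $\sigma$-applications consistent across the whole hexagon, so that the pasted regions actually share edges. In particular one must be careful that the composite constraint $(H\theta\c\sigma)_{A,B,C}$ obtained by first tensoring $A\t B$ and then with $C$ agrees, via the pentagon for $(H,\sigma)$ and naturality of $\sigma$, with the one obtained by first tensoring $B\t C$; getting that middle square right is the only place where some thought is required, and it is resolved precisely by one instance of the $\sigma$-pentagon plus one $\sigma$-naturality square. Everything else is routine diagram chasing, so I would not grind through it in full detail.
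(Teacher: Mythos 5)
Your proposal matches the paper's own proof: the paper likewise verifies the coherence diagrams of Definition~\ref{def:monoidalfunc} directly, pasting together the coherence hexagon for $\sigma$, the $H$-image of the coherence hexagon for $\theta$, and naturality squares of $\sigma$ along the components of $\theta$ (plus a one-line remark that the composite constraint is natural). The middle region you single out is resolved there in exactly the way you describe, so there is no gap.
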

\begin{proof}
  The naturality of $(H\theta \c \sigma)_{A, B}$ follows from the naturality of
  $\sigma_{A, B}$ and~$H\theta_{GA, GB}$. It remains to verify the diagrams in Definition \ref{def:monoidalfunc}. For example, for the left diagram we get
  \[
    \xymatrix@+4em@ru{
      & (HGA \t HGB) \t HGC \ar[r]^{\alpha_{HGA, HGB, HGC}}
      \ar[d]_{\sigma_{GA, GB} \t HGC} \ar@{}[rdd]|{(1)}
      & HGA \t (HGB \t HGC) \ar[d]^{HGA \t \sigma_{HGB, HGC}} \\
       & H(GA \t GB) \t HGC \ar[d]^{\sigma_{GA\t GB, GC}}
      \ar[ld]_{H\theta_{A, B} \t HGC}
      & HGA \t H(GB \t GC) \ar[d]^{\sigma_{GA, GB \t GC}}
      \ar[rd]^{HA \t H\theta_{B, C}} \\
      HG(A \t B) \t HGC \ar[rd]_{\theta_{GA\t GB, GC}}
      \ar@{}[r]|{(3)}
      & H((GA \t GB) \t GC) \ar[r]_{H\alpha_{GA, GB, GC}} \ar[d]_{H(\theta_{A,
          B} \t GC)}\ar@{}[rdd]|{(2)}
      & H(GA \t (GB \t GC)) \ar[d]^{H(GA \t \theta_{B, C}}  
      \ar@{}[r]|{(4)}
      & HGA \t HG(B \t C) \ar[ld]^{\theta_{GA, G(B\t C)}} 
       \\
      & H(G(A \t B) \t GC) \ar[d]_{H\theta_{A \t B, C}}
      & H(GA \t G(B \t C)) \ar[d]_{H\theta_{A, B \t C}}
      &  
      \\
      & HG((A \t B) \t C) \ar[r]_{HG\alpha_{A, B, C}} & HG(A \t (B \t C))
    }
  \]
  In $(1)$ we use that $H$ is monoidal, in $(2)$ that $G$ is monoidal, in (3) 
  the naturality of $\theta$, and in (4) the naturality of $\sigma$. Similarly 
  for the other two diagrams in Definition \ref{def:monoidalfunc}.
\end{proof}



%

\begin{expl}\label{ex:upsimonoidal}
\begin{enumerate}[(1)]
\item
For any commutative variety $\D$ of algebras or ordered algebras, the forgetful functor $\under{\dash}: \D\to \Set$ is a monoidal 
functor w.r.t. the maps
\[ i: 1\monoto \under{\one_\D} = \under{\Psi 1}\quad\text{and}\quad t_{A,B}: \under{A}\times\under{B}\to \under{A\t B}\]
where $i$ is the inclusion of the generator.
Indeed, by Definition \ref{def:monoidalfunc} we need to verify that the following diagrams commute for all $A,B\in \D$:
\begin{center}
\begin{tabular}[x]{cc}
\parbox{0.6\textwidth}{
$
\xymatrix@C-1em@R-1.5em{
(\under{A}\times \under{B})\times \under{C} \ar[r]^{\alpha'} \ar[d]_{t \times \under{C}} & \under{A} \times 
(\under{B}\times
\under{C}) \ar[d]^{\under{A}\times t}\\
\under{A\t B} \times \under{C} \ar[d]_{t} & \under{A} \times \under{B\t C} 
\ar[d]^t \\
\under{(A\t B)\t C} \ar[r]_{\alpha}& \under{A\t (B\t C)}
}
$
}
&
\parbox{0.4\textwidth}{
$
\xymatrix@R-1.5em{
\under{A}\times 1 \ar[r]^<<<<<{\under{A}\times i} \ar[d]_{\rho'} & \under{A}\times \under{\one_\D} 
\ar[d]^t\\
\under{A} & \under{A\t \one_\D} \ar[l]_<<<<<<<{\rho}
}
$\\
$\xymatrix@R-1.5em{
1\times\under{A} \ar[r]^<<<<<{i\times \under{A}} \ar[d]_{\lambda'} & \under{\one_\D}\times \under{A} 
\ar[d]^t\\
\under{A} & \under{\one_\D\t A} \ar[l]_<<<<<<<{\lambda}
}
$
}
\end{tabular}
\end{center}

But this follows immediately from the definitions of $t$, $\alpha$, $\rho$, $\lambda$, see \ref{app:tensorproducts}. For example, both legs of the left hand diagram map an element $((a,b),c)\in (\under{A}\times\under{B})\times \under{C}$ to $a\t (b\t c)$. Similarly for the other two diagrams.
\item The left adjoint $\Psi: \Set\to\D$ to $\under{\dash}: \D\to \Set$ is also monoidal. Indeed, observe that for any two sets $X$ and 
$Y$ we have the following bijections (natural in $A\in\D$), cf. \ref{app:tensorproducts}(3):
\begin{align*}
 \D(\Psi(X\times Y), A) &\cong \Set(X\times Y, \under{A})\\
 &\cong \Set(X,\under{A}^Y) \\
 &\cong \Set(X, |[\Psi Y, A]|)\\
 &\cong \D(\Psi X, [\Psi Y, A]) \\
 &\cong \D(\Psi X\t \Psi Y, A).
\end{align*}
The Yoneda lemma, see \ref{app:yoneda}, gives a natural isomorphism $\Psi(X\times Y)\cong \Psi X\t \Psi Y$, mapping a pair $(x,y)\in X\times Y$ to 
$x\t y\in \under{\Psi X\t \Psi Y}$. Its inverse $\theta_{X,Y}:  \Psi X\t \Psi 
Y \cong \Psi(X\times Y)$ together with the morphism $\theta = \id: 
\one_\D\to \Psi 1$, makes $\Psi$ a monoidal functor, i.e. the following diagrams commute for all sets $X,Y,Z$: 
\begin{center}
\begin{tabular}[x]{cc}
\parbox{0.6\textwidth}{
$
\xymatrix@C-1em@R-1.5em{
(\Psi{X}\t \Psi{Y})\t \Psi{Z} \ar[r]^{\alpha} \ar[d]_{\theta \times \Psi{Z}} & \Psi{X} \t 
(\Psi{Y}\t
\Psi{Z}) \ar[d]^{\Psi{X}\t \theta}\\
\Psi(X\times Y) \t \Psi{Z} \ar[d]_{\theta} & \Psi{X} \t \Psi(Y\times Z) 
\ar[d]^\theta \\
\Psi((X\times Y)\times Z) \ar[r]_{\Psi\alpha'}& \Psi(X\times (Y\times Z))
}
$
}
&
\parbox{0.4\textwidth}{
$
\xymatrix@R-1.5em{
\Psi{X}\t \one_\D \ar[r]^<<<<<{\Psi{X}\t \id} \ar[d]_{\rho} & \Psi{X}\t \Psi 1
\ar[d]^\theta\\
\Psi{X} & \Psi(X\times 1) \ar[l]_<<<<<<<{\Psi\rho'}
}
$\\
$
\xymatrix@R-1.5em{
\one_\D\t \Psi X \ar[r]^<<<<<{\id\t \Psi X} \ar[d]_{\lambda} & \Psi 1\t \Psi X
\ar[d]^\theta\\
\Psi{X} & \Psi(1\times X) \ar[l]_<<<<<<<{\Psi\lambda'}
}
$
}
\end{tabular}
\end{center}
This follows directly from the definitions of $\theta$, $\alpha$, $\lambda$, 
$\rho$. For example, both legs of the left-hand diagram map an element $(x\t 
y)\t z)\in \under{(\Psi X\t \Psi Y)\t \Psi Z}$ (with $x\in X$, $y\in Y$, $z\in 
Z$) to $(x,(y,z)) \in X\times (Y\times Z)\seq \Psi(X\times(Y\times Z))$. Since 
the elements $(x\t y)\t z$ generate the algebra ${(\Psi X\t \Psi Y)\t \Psi 
Z}$, see \ref{app:tensorproducts}, this shows that the diagram commutes. 
Similarly for the other two diagrams.
\item The adjunction $\Psi\dashv \under{\dash}:\D\to\Set$ is monoidal (see 
Definition \ref{def:monoidaladj}). To see this, we need to show that the unit 
$\eta: \Id_\Set\to \under{\Psi}$ and the counit $\epsilon: \Psi\c 
\under{\dash} \to  \Id_\D $ are monoidal natural transformations. We only 
prove 
that $\epsilon$ is monoidal, since the proof for $\eta$ is similar. Note first 
that by $\Psi\c \under{\dash}: \D\to\D$ is meant the composite monoidal 
functor 
in the sense of Lemma \ref{lem:moncomp}. Thus the associated morphisms are
\[ \Psi\under{A}\t \Psi\under{B} \xra{\theta_{\under{A},\under{B}}} 
\Psi(\under{A}\times \under{B}) \xra{\Psi t_{A,B}} \Psi\under{A\t B} \]
and 
\[ \one_\D \xra{\id=\theta_1} \Psi 1 \xra{\Psi i} \Psi\under{\one_\D}. \]
To show that $\epsilon$ is a monoidal natural transformation, we need to 
verify that the following 
two 
diagrams commute, cf. Definition \ref{def:monoidalfunc}.
\[
\xymatrix{
\Psi{\under{A}}\t\Psi\under{B} \ar[r]^>>>>{\epsilon_A\t \epsilon_B} 
\ar[d]_{\Psi{\under{A},\under{B}}} & A\t B \ar@{=}[dd]\\
\Psi(\under{A}\times\under{B}) \ar[d]_{\theta t_{A,B}} & \\
\Psi\under{A\t B} \ar[r]_{\epsilon_{A\t B}} & A\t B
}\qquad\qquad
\xymatrix{
& \one_\D \ar[dl]_{\id} \ar@{=}[ddr] &\\
\Psi 1 \ar[d]_{\Psi i} & &\\
\Psi\under{\one_\D} \ar[rr]_{\epsilon_{\one_\D}} && \one_\D
}
\]
But this follows immediately from the definitions of $\epsilon$, $\theta$ and 
$i$. Both legs of the left diagram map an element $a\t b\in 
\Psi\under{A}\t\Psi\under{B}$ (with $a\in \under{A}$ and $b\in\under{B}$) to 
$a\t b$. Since $\Psi\under{A}\t\Psi\under{B}$ is generated by these elements, 
this shows that the left diagram commutes. In the right diagram, both legs map 
the generator of $\one_\D$ to itself, and thus both legs are the identity 
morphism.
\end{enumerate}
\end{expl}
\section{Proofs}

\begin{notation}
Throughout this section, we assume the categorical setting introduced at the 
beginning of Section 3, i.e. we have a concrete monoidal adjunction $F\dashv 
U: \Mod{\S}\to\D$ with unit $\eta:\Id \to UF$.  We denote the forgetful 
functors by $\under{\dash}_\S: \Mod{\S}\to\Set$ and $\under{\dash}_\D: \D 
\to\Set$, but usually drop the indices. For a morphism $f$ in $\D$ or 
$\Mod{\S}$ we write
$f$ for $\under{f}$, i.e. there is no notational distinction between a 
morphism and its underlying function. This is legimitate because the forgetful 
functors are faithful, i.e. every function $\under{A}\to\under{B}$ 
carries at most one morphism from $A$ to $B$. The morphisms 
witnessing that $U$ and $F$ are monoidal 
functors are denoted by
\[ \theta^U: \one_\D \to U\S,\quad \theta^U_{A,B}: UA\t UB\to U(A\t B), \]
and
\[  \theta^F: \S\to F(\one_\D),\quad \theta^F_{X,Y}: FX\t FY\to F(X\t Y).\]
Here $\S$ is viewed as the free $\S$-module on one generator. Finally, we fix two $\D$-monoids $(M,1,\bullet)$ and $(N,1,\bullet)$. We usually write $xy$ for $x\bullet y$.
\end{notation}

\begin{rem}\label{rem:plusmorph}
Our assumption that $F\dashv U: \Mod{\S}\to \D$ is a concrete monoidal adjunction has the following consequences:
\begin{enumerate}[(1)]
\item An $\S$-module $A$ and the corresponding algebra $UA$ in $\D$ have the 
same underlying set, i.e. $\under{A}_\S = \under{UA}_\D$. Likewise, an 
$\S$-module morphism $f:A\to B$ and the 
corresponding morphism 
$Uf: UA\to UB$ in $\D$ have the same underlying function.
\item Every bimorphism from $A,B$ to $C$ in $\Mod{\S}$ is also a bimorphism from $UA,UB$ to $UC$ in $\D$. To see this, suppose that $f: \under{A}_\S\t\under{B}_\S\to \under{C}_\S$ is a bimorphism of $\Mod{\S}$, and let $f': A\t B\to C$ be the corresponding $\S$-linear map. Consider the diagram below:
\[
\xymatrix{
\under{UA\t UB}_\D \ar[r]^{\theta^U} &  \under{U(A\t B)}_\D \ar[r]^<<<<<<{Uf'} \ar@{=}[d] & \under{UC}_\D \ar@{=}[d]\\
& \under{A\t B}_\S \ar[r]^{f'}   & \under{C}_\S\\
\under{UA}_\D\times\under{UB}_\D \ar[uu]^{t_\D} \ar@{=}[r] & \under{A}_\S\times\under{B}_\S \ar[u]^{t_\S}  \ar[ur]_{f} &
}
\]
Here $t_\D$ and $t_\S$ denote the universal bimorphisms.
The left part commutes because the monoidal functor $\under{\dash}_\S$ is the 
composite of the monoidal functors $U$ and $\under{\dash}_\D$ (as $U$ is part 
of a concrete monoidal adjunction); cf. Lemma \ref{lem:moncompose} and Example 
\ref{ex:monoidalfunc}(1). The upper right square commutes because $f'$ and 
$Uf'$ have the same underlying function, see (1). The triangle is the 
definition of $f'$. It follows that $f: \under{UA}_\D\times 
\under{UB}_\D\to\under{UC}_\D$ is a bimorphism from $UA,UB$ to $UC$ in $\D$, 
being the composite of the $\D$-bimorphism $t_\D$ with the $\D$-morphism 
$Uf'\c \theta^U$.
\item As a consequence of (2), for any $\S$-algebra $A$ the multiplication 
$\under{A}\times\under{A}\xra{\o}\under{A}$ is a bimorphism from $UA, UA$ to 
$UA$. Morever, the sum $\under{A}\times \under{A}\xra{+} \under{A}$ 
carries a morphism $+: A\times A\to A$ in $\Mod{\S}$ because the latter is a 
commutative variety. Applying $U$, we obtain a morphism $+: UA\times UA\to UA$ 
in $\D$. Analogously for the scalar product $\lambda\o\dash: 
A\to A$ $(\lambda\in S)$.
\end{enumerate}
\end{rem}

\begin{rem}
For any two morphisms $p: M\to \S$ and $q: N\to \S$ in $\D$ and elements $m\in 
\under{M}$ and $n\in\under{N}$, we have 
\begin{equation}\label{eq:tensorformula}(p\c (\dash\bullet m))\t (q\c 
(\dash\bullet n)) = (p\t q)\c (\dash\bullet (m\t n))
\end{equation}
where $\dash\bullet m: M\to M$, $\dash\bullet n: N\to N$ and $\dash\bullet 
(m\t n): M\t N\to M\t N$ are morphisms of $\D$ because $M$, $N$ and $M\t N$ 
are $\D$-monoids (i.e. their multiplication is a $\D$-bimorphism). Indeed, for 
any $m'\in\under{M}$ and $n'\in \under{N}$ we have
\begin{align*}
[p\c (\dash\bullet m))\t (q\c 
(\dash\bullet n))](m'\t n') &= p(m'\bullet m) \t q(n'\bullet n)\\
&= p\t q((m'\t n')\bullet (m\t n))\\
&= [(p\t q)\c (\dash \bullet (m\t n))](m'\t n').   
\end{align*}
In the first step we use the definition of the tensor product of two 
morphisms, see Remark \ref{rem:tensorproducts}. In the second step, we use the 
definition of the multiplication in $M\t N$, see Remark 
\ref{rem:tensormonoid}. The last step is obvious. Since the elements $m'\t n'$ 
generate $M\t N$, see \ref{app:tensorproducts}, this proves 
\eqref{eq:tensorformula}.
\end{rem}

\begin{proof}[Lemma \ref{lem:mastn}]
We prove the case where $\D$ is a variety of ordered algebras, the unordered 
case being analogous. Consider the preorder $\preceq$ on $M\t N$ defined 
$x\preceq y$ iff $\pi(x)\leq \pi(y)$. By \ref{app:congruences} we only need to 
show 
that $\preceq$ is a stable preorder of the $\D$-monoid $M\t N$ (cf. Remark 
\ref{rem:tensormonoid});  then, since $\pi$ is surjective,
 \[\pi(x)\bullet 
\pi(y) := \pi(x\bullet y)\quad (x,y\in \under{M\t N})\]
 gives a well-defined 
$\D$-monoid structure on $M\ast N$ making $\pi$ a $\D$-monoid morphism. In 
particular for $x=m\t n$ and $y=m'\t n'$, the multiplication is given by 
\begin{align*}
(m\ast n)\bullet (m'\ast n') &= \pi(m\t n)\bullet \pi(m'\t n')\\
& = \pi((m\t n)\bullet (m'\t n'))\\
& = \pi((mm')\t (nn'))\\
&=  (mm')\ast (nn')
\end{align*}
Thus let us show that $\preceq$ is indeed stable. Clearly $x\leq y$ in $M\t N$ 
implies $x\preceq y$ 
because $\pi$ is monotone. Also, $\preceq$ is stable w.r.t. all operations 
of 
$\D$, since $\pi$ is morphism of $\D$. It remains to show that $x\preceq y$ 
implies $x\bullet z\preceq y\bullet z$ and $z\bullet 
x\preceq z\bullet y$ (equivalently, $\pi(x)\leq 
\pi(y)$ 
implies $\pi(x\bullet z)\leq \pi(y\bullet z)$ and $\pi(z\bullet x)\leq 
\pi(z\bullet y)$) for all $x,y,z\in \under{M\t N}$.  We may assume that 
$z=m\t n$ for some $m\in\under{M}$ and $n\in \under{N}$; since $M\t N$ is 
generated by 
these elements, see \ref{app:tensorproducts}, and $\bullet$ is a $\D$-bimorphism, this implies the statement 
for all $z$. So suppose that $\pi(x)\leq \pi(y)$. By \eqref{eq:s1s2} this 
implies that
\[ \sigma\c (p\t q) (x) \leq \sigma\c (p\t q)(y)\quad\text{for all $p: M\to \S$ and $q: N\to \S$ in 
$\D$}. \]
In particular  we get, for all $p$ and $q$,
\[ \sigma\c (\,(p\c (\dash\bullet m))\t (q\c (\dash\bullet n))\,)(x)\leq \sigma\circ (\,(p\circ 
(\dash\bullet m))\t (q\c (\dash\bullet n))\,)(y) \]
using that $\dash\bullet m: M\to M$ and $\dash\bullet n: N\to N$ are morphisms 
of $\D$ since the multiplication of $M$ resp. $N$ is a bimorphism. 
Equivalently, by \eqref{eq:tensorformula},
\[ \sigma\c (p\t q)\c (\dash \bullet (m\t n))(x)\leq \sigma\c (p\t q)\c (\dash 
\bullet (m\t n))(y)\]
 Thus, since $z=m\t n$,
\[ \sigma\c (p\t q)(x \bullet z)\leq \sigma\c (p\t q) (y 
\bullet z).\]
for all $p$ and $q$.
 By the definition of $\pi$, this means precisely that $\pi(x\bullet z)\leq \pi(y\bullet z)$. The proof of  $\pi(z\bullet x)\leq \pi(z\bullet y)$ is symmetric.
\end{proof}

\begin{lemma}\label{lem:afg}
For any $\S$-algebra $A$ and any two $\D$-monoid morphisms $f: M \to \ol U A$ 
and 
$g: 
N\to \ol U A$, 
the product  
$M\times UA \times N$ in $\D$ carries a $\D$-monoid structure with unit $(1,0,1)$ and multiplication
\[ (m,a,n)(m',a',n') := (mm',\,f(m)\o a'+a\o g(n'),\,nn').  \]
Here the $0$ and the sum and product in the second component are taken in the $\S$-algebra $A$. Denoting this $\D$-monoid by $A^{f,g}$, the product projections $\pi_M: A^{f,g}\to M$ and $\pi_N: A^{f,g}\to N$ are $\D$-monoid morphisms.
\end{lemma}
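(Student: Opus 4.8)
The plan is to verify directly the two requirements of Definition~\ref{def:dmonoid} — that $(\under{M}\times\under{UA}\times\under{N},\,(1,0,1),\,\bullet)$ is a monoid and that $\bullet$ is a bimorphism of $\D$ — and then to check that $\pi_M$ and $\pi_N$ respect this structure. Throughout I use that, by Lemma~\ref{lem:monlifting}, the $\D$-monoid $\ol U A$ has underlying object $UA$ and that its multiplication, read on underlying elements, is the multiplication $\o$ of the $\S$-algebra $A$, with unit the algebra unit $1_A$; hence the hypothesis that $f$ and $g$ are $\D$-monoid morphisms says exactly that $f(mm')=f(m)\o f(m')$, $f(1)=1_A$, and likewise for $g$.

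\textbf{Bimorphism.} Fix $(m,a,n)\in\under{M}\times\under{UA}\times\under{N}$. The assignment $(m',a',n')\mapsto(mm',\,f(m)\o a'+a\o g(n'),\,nn')$ into the product $M\times UA\times N$ is a morphism of $\D$ iff each of its three components is. The outer components $m'\mapsto mm'$ and $n'\mapsto nn'$ are morphisms of $\D$ because $M$ and $N$ are $\D$-monoids, i.e.\ their multiplications are bimorphisms. For the middle component, $a'\mapsto f(m)\o a'$ is a morphism of $\D$ by Remark~\ref{rem:plusmorph}(3) (the algebra multiplication of $A$ is a bimorphism $UA,UA\to UA$, so fixing the left factor $f(m)$ yields a $\D$-morphism), $n'\mapsto a\o g(n')$ is the composite of the $\D$-morphism $g\colon N\to UA$ with the $\D$-morphism $a''\mapsto a\o a''$, and the two are combined by the $\D$-morphism ${+}\colon UA\times UA\to UA$ of Remark~\ref{rem:plusmorph}(3). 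Fixing instead the right argument $(m',a',n')$ one argues symmetrically, now precomposing $f$ with $a''\mapsto a''\o a'$. Hence $\bullet$ is a bimorphism of $\D$.

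\textbf{Monoid axioms.} For associativity I would expand both sides: using that $f,g$ are monoid morphisms together with associativity and bilinearity of $\o$ in $A$, both $((m,a,n)(m',a',n'))(m'',a'',n'')$ and $(m,a,n)((m',a',n')(m'',a'',n''))$ equal
\[
\big(mm'm'',\ f(m)\o f(m')\o a'' + f(m)\o a'\o g(n'') + a\o g(n')\o g(n''),\ nn'n''\big).
\]
For the unit laws, $(1,0,1)(m,a,n)=(m,\,f(1)\o a + 0\o g(n),\,n)=(m,\,1_A\o a,\,n)=(m,a,n)$, and symmetrically $(m,a,n)(1,0,1)=(m,\,f(m)\o 0 + a\o g(1),\,n)=(m,a,n)$. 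Thus $A^{f,g}$ is a $\D$-monoid.

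\textbf{Projections.} The product projections $\pi_M$ and $\pi_N$ are morphisms of $\D$ by construction, send $(1,0,1)$ to $1$, and preserve products since $\pi_M((m,a,n)(m',a',n'))=mm'=\pi_M(m,a,n)\,\pi_M(m',a',n')$ (and likewise for $\pi_N$); hence they are $\D$-monoid morphisms. The only slightly delicate point is the bimorphism verification, where one must keep careful track of which maps built from $\o$, $+$, $f$, $g$ are genuinely morphisms of $\D$; this is exactly what Remark~\ref{rem:plusmorph}(2)--(3) supplies, so no real obstacle arises.
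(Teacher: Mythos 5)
Your proposal is correct and follows essentially the same route as the paper: direct verification of the unit and associativity laws using that $f,g$ are monoid morphisms and that $\o$ is bilinear, plus the bimorphism check via Remark~\ref{rem:plusmorph}, and the obvious computation for the projections. The only cosmetic difference is that you establish the bimorphism property by exhibiting each fixed-argument map as a composite of $\D$-morphisms, whereas the paper checks preservation of an arbitrary operation $\gamma$ elementwise; these are equivalent.
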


\begin{proof}
That $(1,0,1)$ is the unit is clear since $f(1)=1$ and $g(1)=1$. For associativity, we 
compute
\begin{align*}
&[(m,a,n)(m',a',n')](m'',a'',n'')\\
&= (mm',\,f(m)\o a'+a\o g(n'),\,nn')(m'',a'',n'') \\
&= ((mm')m'',\,f(mm')\o a'' + [f(m)\o a'+a\o g(n')]\o g(n''),\,(nn')n'')\\
&= (m(m'm''),\,f(m)\o [f(m')\o a''+a'\o g(n'')] + a\o g(n'n''),\,n(n'n''))\\
&= (m,a,n)(m'm'',\,f(m')\o a''+a'\o g(n''),\,n'n'')\\
&= (m,a,n)[(m',a',n')(m'',a'',n'')]
\end{align*}
It remains to verify that the multiplication of $A^{f,g}$ is a 
$\D$-bimorphism. For simplicity, let us just prove that for any binary operation $\gamma$ in the signature of $\D$, the multiplication preserves $\gamma$ in the right component. Indeed, we have
\begin{align*}
& (m,a,n)[\gamma((m',a',n'),(m'',a'',n''))] & \\
&= (m,a,n)(\gamma(m',m''),\,\gamma(a',a''),\,\gamma(n',n'')) & (1)\\
&= (m\gamma(m',m''),\,f(m)\o\gamma(a',a'')+a\o g(\gamma(n',n'')),\,n\gamma(n',n'')) &(2)\\
&= (\gamma(mm',mm''),\,\gamma(f(m)\o a',f(m)\o a'')+ 
a\o \gamma(g(n'),g(n'')),\,\gamma(nn',nn''))& (3)\\
&= (\gamma(mm',mm''),\,\gamma(f(m)\o a',f(m)\o a'')+ 
\gamma(a\o g(n'),a\o g(n'')),\,\gamma(nn',nn'')) & (4)\\
&= (\gamma(mm',mm''),\,\gamma(f(m)\o a'+a\o g(n'),f(m)\o a''+a\o g(n'')),\,\gamma(nn',nn'')) &(5)\\
&= \gamma(\,(mm',f(m)\o a'+a\o g(n'), nn'),\,(mm'', 
f(m)\o a''+a\o g(n''), nn'')\,) & (6)\\
&= \gamma((m,a,n)(m',a',n'), (m,a,n)(m'',a'',n'')) & (7)
\end{align*}
Explanation of the individual steps:
(1) Definition of the operation $\gamma$ in the product $M\times UA\times N$ 
in $\D$.
(2) Definition of the multiplication in $A^{f,g}$.
(3) In the first and third component we use that the multiplication of $M$ and 
$N$ is a $\D$-bimorphism; in the second component we use that the 
multiplication of $A$ is a $\D$-bimorphism on $UA$,  see Remark 
\ref{rem:plusmorph}, and moreover that $g$ is a $\D$-morphism.
(4) Again we  use that the multiplication of $A$ is a bimorphism on $UA$.
(5) $+$ is a morphism of $\D$, see Remark \ref{rem:plusmorph}.
(6) Definition of the operation $\gamma$ in the product $M\times UA\times N$ 
in $\D$.
(7) Definition of the multiplication in $A^{f,g}$.\\
That $\pi_{M}$ is a $\D$-monoid morphism follows from the
computation
\begin{align*}
\pi_{M}((m,a,n)(m',a',n')) &= \pi_{M}(mm', 
f(m)\o a'+a\o g(n'),nn')\\
&= mm'\\
&= \pi_{M}(m,a,n)\pi_{M}(m',a',n').
\end{align*}
Analogously for $\pi_{N}$.
\end{proof}

\begin{proof}[Lemma \ref{lem:schuetzwelldef}]
Consider the following morphisms in $\D$:
\begin{align*}
f &\equiv (\, M \xra{\rho} M\t \one_\D \xra{M\t \iota_N} M\t N  \xra{\pi} 
{M\ast N} \xra{\eta} U  F({M\ast N})\, )\\
 g&\equiv(\, N \xra{\lambda} \one_\D\t N \xra{\iota_M\t N} M\t N  \xra{\pi} 
{M\ast N} \xra{\eta}  U F({M\ast N})\, )
\end{align*}
Note that $\pi$ is a $\D$-monoid morphism by Lemma \ref{lem:mastn}, 
$\ol\eta=\eta: M\ast N\to \ol U\,\ol F(M\ast N)$ is a $\D$-monoid morphism by 
Lemma \ref{lem:monlifting}, and that $\rho$, $\lambda$, $M\t \iota_N$ and 
$\iota_M\t N$ are $\D$-monoid morphisms follows easily from the definition of 
the monoid structure on tensor products, see Remark \ref{rem:tensormonoid}. 
Thus $f$ and $g$ and $\D$-monoid morphisms. Applying Lemma \ref{lem:afg} to 
the $\S$-algebra $A=\ol F(M\ast N)$ and the morphisms $f$ and $g$ yields the 
Sch\"utzenberger product $M\diamond N = [\ol F(M\ast N)]^{f,g}$.
\end{proof}

\begin{notation}
We denote the product projections in $\D$ by
\[
\xymatrix{
& M\diamond N \ar[dl]_{\pi_{M}} \ar[d]^{\pi_{MN}} \ar[dr]^{\pi_{N}} &\\
M & UF(M\ast N) & N
}
\]
For any $\D$-monoid morphism $f: \Psi\Sigma^* \to M\diamond N$, we put 
\[ f_M := \pi_M\c f: \Psi\Sigma^*\to M,\quad f_N:= \pi_N\c f: \Psi\Sigma^*\to N,\]
and
\[ f_{MN} := \pi_{MN}\c f: \Psi\Sigma^*\to UF(M\ast N).\]
Recall that we put $m\ast n := \pi(m\t n)$ for $m\in\under{M}$ and $n\in\under{N}$, and that $\eta: M\ast N\to UF(M\ast N)$ denotes the universal map.
\end{notation}
The following lemma appears in \cite{reutenauer79} for the case $\D=\Set$:

\begin{lemma}\label{lem:reutformula}
For any $\D$-monoid morphism $f: \Psi\Sigma^* \to M\diamond N$ and 
$u\in\Sigma^*$ we have
\[ f_{MN}(u) = \sum_{u=vaw} \eta(f_M(v)\ast 1)\o f_{MN}(a)\o \eta(1\ast 
f_N(w)),\]
where the sum ranges over all factorizations $u=vaw$ with $a\in \Sigma$ 
and $v,w\in \Sigma^*$.
\end{lemma}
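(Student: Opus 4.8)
The plan is to prove the formula by induction on the length of the word $u \in \Sigma^*$, exploiting that $f\colon \Psi\Sigma^*\to M\diamond N$ is a $\D$-monoid morphism and that $\Psi\Sigma^*$ is the \emph{free} $\D$-monoid on $\Sigma$, so it is generated (as a $\D$-monoid, hence as an algebra of $\D$) by the words over $\Sigma$, and every word factors as a product of single letters. Because all three projections $\pi_M$, $\pi_N$, $\pi_{MN}$ and the maps $f_M$, $f_N$, $f_{MN}$ are determined on $\Psi\Sigma^*$ by their values on words, and $f$ is multiplicative, it suffices to establish the identity for all $u$ of the form $u = a_1 a_2 \cdots a_k$ with $a_i \in \Sigma$, and this is exactly what induction on $k$ does.

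First I would record the base cases. For $u = \epsilon$ the empty word, $f(\epsilon)$ is the unit of $M\diamond N$, i.e. the matrix $\left(\begin{smallmatrix} 1 & 0\\ 0 & 1\end{smallmatrix}\right)$, so $f_{MN}(\epsilon) = 0$, which matches the empty sum on the right-hand side (there is no factorization $\epsilon = vaw$ with $a\in\Sigma$). For $u = a$ a single letter, the right-hand side has the unique factorization $v = \epsilon$, $w = \epsilon$, giving $\eta(f_M(\epsilon)\ast 1)\o f_{MN}(a)\o \eta(1\ast f_N(\epsilon)) = \eta(1\ast 1)\o f_{MN}(a)\o \eta(1\ast 1)$; since $\eta(1\ast 1) = \eta(1_{M\ast N})$ is the unit of the $\S$-algebra $\ol F(M\ast N)$ (as $\eta$ is a $\D$-monoid morphism by Lemma \ref{lem:monlifting}), this equals $f_{MN}(a)$, as required.

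For the inductive step, write $u = u' a$ with $a\in\Sigma$ and $|u'|$ shorter, so $f(u) = f(u')\bullet f(a)$ in $M\diamond N$. Using the explicit multiplication of the Sch\"utzenberger product (equivalently, of $A^{f,g}$ from Lemma \ref{lem:afg} with $A = \ol F(M\ast N)$), the upper-right component of the product of $f(u')$ and $f(a)$ is
\[
 f_{MN}(u) \;=\; f\bigl(f_M(u')\bigr)\o f_{MN}(a) \;+\; f_{MN}(u')\o g\bigl(f_N(a)\bigr),
\]
where $f,g$ here are the specific $\D$-monoid morphisms $M\to UF(M\ast N)$ and $N\to UF(M\ast N)$ of Lemma \ref{lem:schuetzwelldef}, satisfying $f(m) = \eta(m\ast 1)$ and $g(n) = \eta(1\ast n)$ by their definitions via $\rho$, $M\t\iota_N$, $\pi$, $\eta$ (resp. $\lambda$, $\iota_M\t N$, $\pi$, $\eta$). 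Substituting the inductive hypothesis for $f_{MN}(u')$ into the second summand and simplifying the first via $f_M(u') = f_M(u')$, $f_N(a)$, one gets a sum over all factorizations $u' = v a' w$ shifted by $a$, namely terms $\eta(f_M(v)\ast 1)\o f_{MN}(a')\o \eta(1\ast f_N(w))\o\eta(1\ast f_N(a)) = \eta(f_M(v)\ast 1)\o f_{MN}(a')\o \eta(1\ast f_N(wa))$ (using multiplicativity of $\eta$ and of $f_N$), plus the single extra term $\eta(f_M(u')\ast 1)\o f_{MN}(a)\o \eta(1\ast 1)$ coming from the new rightmost letter. These are precisely the factorizations $u = vaw$ of the longer word, so the claim follows.

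The main obstacle I expect is \emph{bookkeeping the indices carefully} — confirming that the two summands in the product formula reassemble exactly into the full set of factorizations $u = vaw$ with no double-counting and no missing terms, and that the algebra identities $f_M(va) = f_M(v)\bullet f_M(a)$, $f_N(wa) = f_N(w)\bullet f_N(a)$, $\eta(m\ast 1)\o\eta(m'\ast 1) = \eta((mm')\ast 1)$ etc. are invoked correctly. A secondary technical point is justifying that it suffices to verify the identity on words: this follows because both sides, as functions $\under{\Psi\Sigma^*}\to\under{UF(M\ast N)}$, need only agree on a generating set, but one should note that the right-hand side is \emph{a priori} only defined on words and one takes the formula as the definition there, the content of the lemma being that the left-hand side (which is globally defined) agrees with it.
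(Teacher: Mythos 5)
Your proof is correct and takes essentially the same route as the paper's: induction on the length of $u$, peeling off the last letter, applying the multiplication formula of $M\diamond N$, the induction hypothesis, distributivity in $\ol F(M\ast N)$, and multiplicativity of $\eta$ and $f_N$ to reassemble the terms into the factorizations of the longer word. The extra single-letter base case and the closing remark about verifying the identity on generators are harmless but unnecessary, since the statement only concerns words $u\in\Sigma^*$.
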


\begin{proof}
The proof is by induction on the length of $u$. For $u=\epsilon$, we have 
$f(\epsilon)=(1,0,1)$ since $f$ is a $\D$-monoid morphism, and thus 
$f_{MN}(\epsilon)=0$ (the empty sum). Now suppose that the formula holds for some $u\in\Sigma^*$, and 
consider a word $ub$ with $b\in\Sigma$. Then
\begin{align*}
f_{MN}(ub) &= \pi_{MN}(f (ub)) & (1) \\
&= \pi_{MN}(f(u)f(b)) & (2)\\
&= \eta(f_M(u)\ast 1)\o f_{MN}(b) + f_{MN}(u)\o \eta(1\ast f_N(b)) & (3)\\
&= \eta(f_M(u)\ast 1)\o f_{MN}(b) &  \\
&~~~+\left[\sum_{u=vaw} \eta(f_M(v)\ast 1)\o f_{MN}(a)\o \eta(1\ast f_N(w))\right]\o  \eta(1\ast f_N(b)) & (4)\\
&= \eta(f_M(u)\ast 1)\o f_{MN}(b) &  \\
&~~~+\sum_{u=vaw} \eta(f_M(v)\ast 1)\o f_{MN}(a)\o \eta(1\ast f_N(w))\o  
\eta(1\ast f_N(b))& (5)\\
&= \eta(f_M(u)\ast 1)\o f_{MN}(b)\o \eta(1\ast f_N(\epsilon))\\
&~~~+\sum_{u=vaw} \eta(f_M(v)\ast 1)\o f_{MN}(a)\o \eta(1\ast f_N(wb)) & (6)\\
&= \sum_{ub=vaw'} \eta(f_M(v)\ast 1)\o f_{MN}(a)\o \eta(1\ast f_N(w')) & (7)\\
\end{align*}
Explanation of the individual steps:
(1) Definition of $f_{MN}$.
(2) $f$ is a $\D$-monoid morphism.
(3) Definition of the multiplication in $M\diamond N$.
(4) Induction hypothesis.
(5) Distributive law in the $\K$-algebra $\ol F(M\ast N)$.
(6) Definition of the multiplication in $M\ast N$, $\ol\eta=\eta: M\ast N \to 
\ol U\ol F(M\ast N)$ is a $\D$-monoid morphism (see \ref{lem:monlifting}, and 
$f_N$ is a $\D$-monoid morphism.
(7) Any factorization $ub=vaw'$ has either $w'=wb$ for some $w$, or $a=b$ and 
$w'=\epsilon$. 
\end{proof}

\begin{notation}
\begin{enumerate}[(1)]
\item For any two morphisms $p: M\to \S$ and $q: N\to \S$ in $\D$, we  denote 
by 
$\ol{p\ast q}: 
F(M\ast N) \to \S$ the adjoint transpose of 
$p\ast q: M\ast N \to \S$. 
\item Recall from Definition \ref{def:recog} the morphism $L_\D: 
\Psi\Sigma^*\to \S$ in $\D$ corresponding to a language $L: \Sigma^*\to S$. 
Since $L$ and $L_\D$ agree on $\Sigma^*$,  we 
usually drop the index and write $L$ for $L_\D$.
\end{enumerate}
\end{notation}

\begin{proof}[Theorem \ref{thm:schurec}]
Let $g: \Psi\Sigma^*\to M$ and $h: \Psi\Sigma^*\to N$ be $\D$-monoid 
morphisms recognizing $K$ resp. $L$. Thus there exist morphisms $p: M\to \S$ 
and $q: N\to \S$ in $\D$ with $K=p\c g$ and $L=q\c h$. Fix a letter 
$a\in\Sigma$. We define a $\D$-monoid morphism $f: \Psi\Sigma^*\to M\diamond 
N$ that recognizes the languages $K$, $L$ and $KaL$.
\begin{enumerate}[(1)]
\item Let $f: \Psi\Sigma^*\to M\diamond N$ be the unique $\D$-monoid 
morphism 
defined on letters $b\in \Sigma$ by
\[ f(b) = 
\begin{cases}
(g(b),1,h(b)), & b=a;\\
(g(b),0,h(b)), & b\neq a.
\end{cases} \] 
Then $f_M(b)=g(b)$ for all $b\in \Sigma$ and therefore $f_M = g$,
since any $\D$-monoid morphism with domain $\Psi\Sigma^*$ is determined by its 
values on the generators $\Sigma$. Similarly, we have $f_N = h$.
Since $f_{MN}(a)=1$ and $f_{MN}(b)=0$ for $b\neq a$, Lemma 
\ref{lem:reutformula} gives, for all $u\in 
\Sigma^*$,
\begin{equation}\label{eq:e12}
 f_{MN}(u) = \sum_{u=vaw}  \eta(g(v)\ast h(w)).  
\end{equation}
\item $f$ recognizes the language $K$ via the morphism $p\c \pi_M$, since
\[ K = p\c g = p\c f_M = (p\c \pi_M) \c f. \]
Analogously, $f$ recognizes $L$ via $q\c\pi_N$.
\item We show that $f$ recognizes the language $KaL$ via the morphism 
\[s \ \equiv \ (\ M\diamond N \xra{\pi_{MN}} 
UF(M\ast N)\xra{U(\ol{p\ast q})} \S\ ).\] 
Indeed, for all $u\in\Sigma^*$ we have
\begin{align*}
s\c f(u) &= \ol{p\ast q}(f_{MN}(u)) &\text{(def. $s$ and 
$f_{MN}$)} \\
&= \ol{p\ast q}\left(\sum_{u=vaw}  \eta(g(v)\ast h(w))\right) 
&\text{\eqref{eq:e12}} \\
&=\sum_{u=vaw} \ol{p\ast q}(\,\eta(g(v)\ast h(w))\,)
&\text{($\ol{p\ast q}\in \Mod{\S}$)} \\
&=\sum_{u=vaw} [p\ast q](g(v)\ast h(w))
&\text{(def. $\ol{p\ast q}$)} \\
&=\sum_{u=vaw} [\sigma\c (p\t q)](g(v)\t h(w))
&\text{(def. $p\ast q$)} \\
&=\sum_{u=vaw} \sigma(\,p(g(v))\t q(h(w))\,)
&\text{(def. $p\t q$)} \\
&=\sum_{u=vaw} \sigma(\,K(v)\t L(w)\,)
&\text{(def. $p$, $q$)} \\
&=\sum_{u=vaw} K(v)\o L(w)
&\text{(def. $\sigma$)} \\
&= (KaL)(u) & \text{(def. $KaL$)}
\end{align*}
\end{enumerate}
\end{proof}
From now on, we suppose that the Assumptions \ref{asm:dual} hold.

\begin{rem} The Assumptions \ref{asm:dual}(i) and (ii) have the following  
consequences:
\begin{enumerate}[(1)]
\item $F$ preserves finite objects. Indeed, let $D$ be a finite object of 
$\D$, and express $D$ as a 
quotient $e: \Psi X\epito D$ for some finite set $X$. Since the left adjojnt 
$F$ preserves epimorphisms, see \ref{app:adjunctions}, the morphism $Fe: 
\Pow_f X = F\Psi X \epito FD$ is an 
epimorphism in $\Mod{\S}$ and therefore surjective by Assumption 
\ref{asm:dual}(ii). Since $\Pow_f X$ is finite, so is $FD$.
\item $U$ also preserves finite objects by Remark \ref{rem:plusmorph}.
\item Consequently the Schützenberger product of two finite 
$\D$-monoids $M$ 
and $N$ is finite. To see this, recall from \ref{app:tensorproducts} that the 
tensor 
product $M\t
N$ is 
generated by the finite set $\under{M}\times \under{N}$ and is thus 
finite by Assumption \ref{asm:dual}(i). Therefore the quotient $M\ast N$ of 
$M\t N$ is also finite. 
Since $F$ and $U$ preserve finite objects by (1) and (2),
$UF(M\ast N)$ is finite.  Thus $M\diamond N$ is finite, being carried 
by the finite object $M\times 
UF(M\ast N)\times N$.
\end{enumerate}
\end{rem}

\begin{defn}\label{def:langoperations}
\begin{enumerate}[(1)]
\item The set of languages over $\Sigma$ forms an $\S$-algebra w.r.t. to the 
\emph{sum}, \emph{scalar product} and \emph{Cauchy product} of languages, 
defined by
\[ (K+L)(u) = K(u)+L(u),\, (\lambda L)(u) = \lambda L(u),\, (KL)(u) = 
\sum_{u=vw} K(v)\o L(w) \]
for languages $K, L: \Sigma^*\to S$, $\lambda \in S$, and $u\in\Sigma^*$.
Identifying a letter $a\in\Sigma$ with the language $a:\Sigma^*\to S$ that 
sends $a$ to $1$ and all other words to $0$, the marked Cauchy product 
$KaL$ is thus the Cauchy product of the languages $K$, $a$ and $L$.
\item The \emph{derivatives} of a language $L: \Sigma^*\to S$ are 
the languages $a^{-1}L,\,La^{-1}\colon \Sigma^*\to S$ ($a\in\Sigma)$ 
with
\[a^{-1}L(u) = L(au) \quad\text{and}\quad La^{-1}(u) = L(ua).\] 
\end{enumerate}
\end{defn}

\begin{lemma}\label{lem:recclosed}
Let $f: \Psi\Sigma^* \to M$ be a $\D$-monoid morphism. Then the set of 
languages recognized by $f$ is closed under sum, scalar products, and 
derivatives.
\end{lemma}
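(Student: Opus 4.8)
The plan is to unfold the definition of recognition. By Definition~\ref{def:recog}(1) a language $L\colon\Sigma^*\to S$ is recognized by $f\colon\Psi\Sigma^*\to M$ precisely when there is a morphism $p\colon M\to\S$ of $\D$ with $L_\D = p\c f$; and since $\Psi\Sigma^*$ is the free algebra of $\D$ on the set $\Sigma^*$, a $\D$-morphism out of $\Psi\Sigma^*$ is determined by its restriction to $\Sigma^*$. As $L_\D$ itself restricts to $L$ on $\Sigma^*$, the condition ``$f$ recognizes $L$ via $p$'' is equivalent to the pointwise identity $p(f(w)) = L(w)$ for all $w\in\Sigma^*$. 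Thus in each of the three cases it suffices to exhibit a suitable morphism $M\to\S$ in $\D$ and to verify the corresponding pointwise identity on words; the uniqueness part, i.e.\ that the constructed map really equals the relevant $L_\D$, is then automatic. The only point that needs care is that the maps we build are genuine morphisms of $\D$, which is supplied by Remark~\ref{rem:plusmorph} and the bimorphism property of the multiplication of $M$.

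For sums and scalar products, suppose $K = p\c f$ and $L = q\c f$ on $\Sigma^*$ with $p,q\colon M\to\S$ in $\D$, and let $\lambda\in S$. By Remark~\ref{rem:plusmorph}(3) the addition $+\colon\S\times\S\to\S$ and the scalar multiplication $\lambda\o(\dash)\colon\S\to\S$ are morphisms of~$\D$, so the composites
\[
p+q \ \equiv\ (\,M\xra{\langle p,q\rangle}\S\times\S\xra{+}\S\,)
\qquad\text{and}\qquad
\lambda p \ \equiv\ (\,M\xra{p}\S\xra{\lambda\o(\dash)}\S\,)
\]
are again morphisms of $\D$. Evaluating on a word $w\in\Sigma^*$ yields $(p+q)(f(w)) = p(f(w))+q(f(w)) = K(w)+L(w) = (K+L)(w)$ and $(\lambda p)(f(w)) = \lambda\o p(f(w)) = (\lambda K)(w)$, so $K+L$ and $\lambda K$ are recognized by $f$ via $p+q$ and $\lambda p$.

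For derivatives, fix $a\in\Sigma$ and suppose $K = p\c f$ on $\Sigma^*$. Since the multiplication $\bullet$ of $M$ is a bimorphism, the translations $f(a)\bullet\dash\colon M\to M$ and $\dash\bullet f(a)\colon M\to M$ are morphisms of $\D$, hence so are $p\c(f(a)\bullet\dash)$ and $p\c(\dash\bullet f(a))$ as morphisms $M\to\S$. Because $f$ is a $\D$-monoid morphism and the multiplication of $\Psi\Sigma^*$ restricts to concatenation on $\Sigma^*$, we have $f(au) = f(a)\bullet f(u)$ and $f(ua) = f(u)\bullet f(a)$ for all $u\in\Sigma^*$; hence
\[
\big(p\c(f(a)\bullet\dash)\big)(f(u)) = p\big(f(a)\bullet f(u)\big) = p(f(au)) = K(au) = (a^{-1}K)(u),
\]
and symmetrically $\big(p\c(\dash\bullet f(a))\big)(f(u)) = K(ua) = (Ka^{-1})(u)$, so $a^{-1}K$ and $Ka^{-1}$ are recognized by $f$. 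I expect no genuine obstacle here: the argument is a routine unwinding of definitions, and the only subtleties --- that the pointwise-defined maps into $\S$ and the translations of $M$ are honest $\D$-morphisms, and that morphisms out of the free object $\Psi\Sigma^*$ are determined on $\Sigma^*$ --- are precisely the points recorded above.
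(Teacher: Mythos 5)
Your proof is correct, and the derivative case is essentially identical to the paper's argument (same translation morphisms $f(a)\bullet\dash$ and $\dash\bullet f(a)$, same justification via the bimorphism property of $\bullet$). For sums and scalar products you take a slightly different route: you stay entirely in $\D$, observing via Remark~\ref{rem:plusmorph}(3) that $+\colon\S\times\S\to\S$ and $\lambda\o(\dash)\colon\S\to\S$ are morphisms of $\D$, and then compose with the pairing $\langle p,q\rangle$ into the product. The paper instead passes to the adjoint transposes $\ol p,\ol q\colon FM\to\S$ in $\Mod{\S}$, forms $\ol p+\ol q$ there using that $\Mod{\S}$ is a commutative variety (so pointwise sums of morphisms are morphisms), and returns to $\D$ as $U(\ol p+\ol q)\c\eta$. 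Both arguments hinge on the same underlying fact --- that the $\S$-module structure of $\S$ is compatible with $\D$ through the concrete monoidal adjunction --- but yours is the more direct rendering, recognizing $K+L$ via a morphism out of $M$ itself rather than a morphism factoring through $\eta\colon M\to UFM$. Your preliminary reduction to the pointwise identity $p(f(w))=L(w)$ on words, justified by freeness of $\Psi\Sigma^*$, is also sound and matches how the paper implicitly computes.
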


\begin{proof}
\begin{enumerate}[(1)]
\item \emph{Closure under derivatives.} Suppose that $L: \Sigma^*\to S$ is a 
language recognized by $M$, i.e. there is a morphism $p: M\to \S$ in $\D$ with 
$L=p\c 
f$. We claim that $f$ recognizes the left derivative $a^{-1}L$ via the 
morphism $p' := p\c (f(a)\bullet \dash)$. (Note that $f(a)\bullet\dash: M\to 
M$ is a morphism of $\D$ because $\bullet$ is a $\D$-bimorphism.) Indeed, we 
have for all $u\in\Sigma^*$:
\[ p'(f(u)) = p(f(a)\bullet f(u)) = p(f(au)) = L(au) = (a^{-1}L)(u). \]
and thus $L=p'\c f$, as claimed. Analogously for right derivatives.
\item \emph{Closure under sums.} 
Let $K$ and $L$ be two languages recognized by $f$, i.e. $K = p\c f$ and 
$L=q\c f$ for morphisms $p, q: M\to \S$ in $\D$. Denote by $\ol p, \ol q: FM\to \S$ 
 the adjoint transposes of $p$ and $q$ in $\Mod{\S}$ (w.r.t. the adjunction 
$F\dashv U$). Since $\Mod{\S}$ forms a commutative variety, we have the morphism $\ol p + 
\ol q: FM\to \S$ in $\Mod{\S}$, defined by $[\ol p+\ol q](x) = \ol p(x) + \ol 
q(x)$. Then 
$K+L$ is recognized by the morphism $U(\ol{p}+\ol{q})\c \eta: M\to \S$, since 
for all $u\in\Sigma^*$,
\[ [\ol{p}+\ol{q}](\eta(f(u))) = \ol p(\eta(f(u))) +\ol q(\eta(f(u))) = 
p(f(u))+q(f(u)) = 
K(u)+L(u). \]
Thus $K+L = U(\ol p+\ol q)\c \eta\c f$, i.e. $f$ recognizes $K+L$.
\item \emph{Closure under scalar product.} Analogous to (2).
\end{enumerate}
\end{proof}

\begin{proof}[Theorem \ref{thm:universalprop}]
We first establish two preliminary 
technical results (steps (1) and (2)). 
\begin{enumerate}[(1)]
\item Consider the commutative diagram below, where $\ol t$ is the adjoint 
transpose of the universal bimorphism $t$; note that $S^{(\dash)} = F\Psi$ 
since 
the adjunction $F\dashv U: \Mod{\S}\to \D$ is concrete.
\[
\xymatrix{
\under{\S^{(\under{M}\times\under{N})}} \ar@{->>}[r]^{F\ol t} & \under{F(M\t 
N)} 
\ar@{->>}[r]^{F\pi} & \under{F(M\ast N)}\\
\under{\Psi(\under{M}\times\under{N})} \ar[u]^\eta \ar@{->>}[r]^{\ol t} & 
\under{M\t N} \ar[u]_\eta  \ar@{->>}[r]^\pi & \under{M\ast N} \ar[u]_\eta \\
\under{M}\times \under{N} \ar@{>->}[u] \ar[ur]_t &&
}
\]
Since $F\dashv U$ is a concrete monoidal adjunction, precomposing the unit 
$\eta: 
\under{\Psi(\under{M}\times\under{N})} \to 
\under{\S^{(\under{M}\times\under{N})}}$ 
with the unit $\under{M}\times\under{N}\monoto 
\under{\Psi(\under{M}\times\under{N})}$ of the adjunction $\Psi\dashv 
\under{\dash}_\D: \D\to \Set$ gives the unit
$\under{M}\times\under{N}\monoto \under{\S^{(\under{M}\times\under{N})}}$ of  
the adjunction
$\S^{(\dash)} \dashv\under{\dash}_\S: \Mod{\S}\to \Set$. 
Thus the $\S$-linear map $F\pi\c F\ol t$ maps a generator $(m,n)$ of 
$\S^{(\under{M}\times\under{N})}$ to $\eta(m\ast n)$. Since the left adjoint 
$F$ preserves epimorphisms, $F\pi$ and  $F\ol t$ are surjective by Assumption 
\ref{asm:dual}(ii). It 
follows that every element of the $\S$-module $F(M\ast N)$ can be expressed as 
a linear 
combination $\sum_{j=1}^n \lambda_j \eta(m_j\ast n_j)$ with $\lambda_j\in 
S$, $m_j\in \under{M}$ and $n_j\in\under{N}$.
\item Let $p: M\to \S$ and $q: N\to \S$ be two morphisms in $\D$, and let $L$ 
be 
the language recognized by $f$ via $U(\ol{p\ast q})\c \pi_{MN}$, i.e. 
\begin{equation}\label{eq:lpq} L := 
U(\ol{p\ast q})\c \pi_{MN}\c f = U(\ol{p\ast q})\c f_{MN}.\end{equation}
By (1), each element $f_{MN}(a)\in\under{F(M\ast N)}$ with $a\in\Sigma$ can be 
expressed as a linear combination
\begin{equation}\label{eq:lincomb} f_{MN}(a) = \sum_{j=1}^{n_a} 
\lambda_j^a\eta(m^{a}_j\ast n^{a}_j).  \end{equation}
with $\lambda_j^a\in S$, $m^{a}_j\in \under{M}$ and $n^a_j\in\under{N}$. For 
$a\in\Sigma$ and $j=1,\ldots, n_a$, put
\begin{equation}\label{eq:l1l2} L^{a,j}_M := p \c (\dash\bullet m^{a}_j) \c 
f_M \quad\text{and}\quad   L^{a,j}_N := q \c (n^{a}_j\bullet \dash) \c 
f_N,\end{equation}
where $\dash\bullet m^{a}_j: M\to M$ and $n^{a}_j\bullet \dash: N\to N$ are 
morphisms in $\D$ 
because the monoid multiplication of $M$ resp. $N$ is a $\D$-bimorphism. Then 
$L$ can be expressed as the following linear combination of 
languages (cf. Definition \ref{def:langoperations}):
\begin{equation}\label{eq:recformula}
L = \sum_{a\in\Sigma}\sum_{j=1}^{n_a} \lambda_j^a(L_M^{a,j}aL_N^{a,j}).
\end{equation}
To prove this, we compute for all $u\in\Sigma^*$:
\begin{align*}
L(u) &= \ol{p\ast q}(f_{MN}(u)) & (1)\\
&= \ol{p\ast q}\left(\sum_a \sum_{u=vaw} \eta(f_M(v)\ast 1)\o f_{MN}(a)\o 
\eta(1\ast f_N(w))\right) & (2)\\
&= \ol{p\ast q}\left(\sum_a \sum_{u=vaw} \eta(f_M(v)\ast 
1)\o\left(\sum_{j=1}^{n_a} 
\lambda_j^a\eta(m^{a}_j\ast n^{a}_j)\right)\o 
\eta(1\ast f_N(w))\right) & (3)\\
&= \ol{p\ast q}\left(\sum_a \sum_{u=vaw}
\sum_{j=1}^{n_a} \lambda_j^a \eta(f_M(v)\ast 
1)\o 
\eta(m^{a}_j\ast n^{a}_j)\o 
\eta(1\ast f_N(w))\right) & (4)\\
&= \ol{p\ast q}\left(\sum_a \sum_{u=vaw}
\sum_{j=1}^{n_a} \lambda_j^a \eta(\,(f_M(v)m^a_j)\ast (n^a_j 
f_N(w)\,)\,)\right) & (5)\\
&= \ol{p\ast q}\left(\sum_a \sum_{j=1}^{n_a} \lambda_j^a \left(\sum_{u=vaw} 
\eta(\,(f_M(v)m^a_j)\ast (n^a_j f_N(w)\,)\,)\right)\right) & (6)\\
&= \sum_a \sum_{j=1}^{n_a} \lambda_j^a \left(\sum_{u=vaw} 
\ol{p\ast q}(\,\eta(\,(f_M(v)m^a_j)\ast (n^a_j f_N(w)\,)\,)\,)\right) & (7)\\
&= \sum_a \sum_{j=1}^{n_a} \lambda_j^a \left(\sum_{u=vaw} 
[p\ast q](\,(f_M(v)m^a_j)\ast (n^a_j f_N(w)\,)\,)\right) & (8)\\
&= \sum_a \sum_{j=1}^{n_a} \lambda_j^a \left(\sum_{u=vaw} 
[\sigma\c (p\t q)](\,(f_M(v)m^a_j)\t (n^a_j f_N(w)\,)\,)\right) & (9)\\
&= \sum_a \sum_{j=1}^{n_a} \lambda_j^a \left(\sum_{u=vaw} 
p(f_M(v)m^a_j))\o q(n^a_j f_N(w))\right) & (10)\\
&=\sum_a \sum_{j=1}^{n_a}  \lambda_j^a \left(\sum_{u=vaw} L_M^{a,j}(v) 
L_N^{a,j}(w)\right) & (11) \\
&=\sum_a \sum_{j=1}^{n_a}  \lambda_j^a (L_M^{a,j}aL_N^{a,j})(u)  & (12) 
\end{align*}
Explanation of the individual steps: 
(1) Definition of $L$.
(2) Lemma \ref{lem:reutformula}.
(3) Apply equation \eqref{eq:lincomb}.
(4) The multiplication in the $\S$-algebra $\ol F(M\ast N)$ distributes over 
sum and 
scalar product.
(5) Use the definition of the multiplication in $M\ast N$, and the fact that 
$\eta=\ol\eta: \ol U \ol F(M\ast N)$ is a $\D$-monoid morphism by Lemma 
\ref{lem:monlifting}.
(6) Interchange sums, and use that the scalar product distributes over sums.
(7) $\ol{p\ast q}$ is an $\S$-linear map. 
(8) Definition of $\ol{p\ast q}$.
(9) Definition of $p\ast q$.
(10) Definition of $\sigma$ and $p\t q$.
(11) Definition of $L_M^{a,j}$ and $L_N^{a,j}$.
(12) Definition of marked Cauchy product and scalar product of languages.
\item We are prepared to prove the theorem. Suppose that $e: 
\Psi\Sigma^*\epito P$ is a surjective $\D$-monoid morphism satisfying the 
assumptions of the theorem. For any $p: M\to\S$ in $\D$, the language 
$(\Sigma^*\monoto \Psi\Sigma^* \xra{f} M\ast N \xra{\pi_{M}} M \xra{p}\S)$ 
is recognized by $\pi_M\c f$  and hence, by 
assumption, also by $e$. Consequently there exists a morphism $h^p: P\to \S$ 
in $\D$  with $h^p\c e = p\c \pi_M\c f$. Analogously, for each $q: N\to \S$ 
there exists a morphism $h^q: P\to \S$ in $\D$ with $h^q\c e = q\c \pi_N\c f$.
Moreover, for any pair of morphisms $p: M\to \S$ and $q: N\to \S$ in $\D$, the 
language $L$ of \eqref{eq:lpq} is recognized by $e$. Indeed, by definition the 
languages $L^{a,j}_M$ and $L^{a,j}_M$ of \eqref{eq:l1l2} are recognized by 
$f_M=\pi_M\c f$ 
resp. $f_N = \pi_N\c f$. Thus, by the assumptions on $e$, the marked 
product 
$L^{a,j}_M a L^{a,j}_M$ is recognized by $e$ for every $a\in \Sigma$ and 
$j=1,\ldots, n_a$. 
Therefore, by Lemma 
\ref{lem:recclosed},
$e$ recognizes $L$. That is, there exists a morphism $h^{p,q}: P\to \S$ in 
$\D$ with $h^{p,q}\c e = L$ ($= U(\ol{p\ast q})\c \pi_{MN}\c f$).

To summarize, we have the following commutative diagrams for all $p: M\to \S$ 
and $q: N\to \S$:
\[
\xymatrix{
\Psi\Sigma^* \ar[d]_f  \ar@{->>}[rr]^e && P \ar[d]^{h^p}\\
M\diamond N \ar[r]_{\pi_M} & M \ar[r]_{p} & \S
}\quad 
\xymatrix{
\Psi\Sigma^* \ar[d]_f  \ar@{->>}[rr]^e && P \ar[d]^{h^q}\\
M\diamond N \ar[r]_{\pi_N} & N \ar[r]_{q} & \S
}
\]
\[
\xymatrix{
\Psi\Sigma^* \ar[d]_f  \ar@{->>}[rr]^e && P \ar[d]^{h^{p,q}}\\
M\diamond N \ar[r]_<<<<{\pi_{MN}} & UF(M\ast N) \ar[r]_<<<<<{U(\ol{p\ast q})} 
& \S
}
\]
Since the family $\{\pi_M,\pi_N,\pi_{MN}\}$ of product projections is 
separating, and using Assumption 
\ref{asm:dual}(iii), the combined family
\begin{align*} 
& \{\,M\diamond N\xra{\pi_M} M\xra{p} \S\,\}_{p: M\to \S}\\
&\cup \{\,M\diamond 
N\xra{\pi_N} N\xra{q} \S\,\}_{q: N\to \S}\\
& \cup \{\,M\diamond 
N\xra{\pi_{MN}} UF(M\ast N)\xra{U(\ol{p\ast q})} \S\,\}_{p: M\to \S,\,q: N\to \S}
\end{align*}
is  separating, see \ref{app:jointlyinj}. Thus diagonal fill-in, see 
\ref{app:factsystems}, gives a unique $h: P\to M\diamond N$ in $\D$ with $h\c 
e = f$. Moreover, since both $e$ and $f$ are $\D$-monoid morphisms and $e$ is 
surjective, $h$ is a $\D$-monoid morphism.
\end{enumerate}
\end{proof}

\begin{defn}\label{rem:recclosed}
A language is called \emph{recognizable} if it is recognized by some 
finite $\D$-monoid. Denote by $\Rec{\Sigma}$ the set of $\D$-recognizable 
languages over the alphabet $\Sigma$. 
\end{defn}

\begin{rem}
Recognizable languages are precisely the regular languages $L: \Sigma^*\to S$, 
i.e. languages accepted by some finite Moore automaton with output set $S$; 
see \cite{uacm16}[Lemma G.1].
\end{rem}

In \cite{ammu14} we established (working with a subset of our present 
Assumptions \ref{asm:dual}) the following results:

\begin{theorem}[Ad\'amek, Milius, Myers, Urbat 
\cite{ammu14}]\label{thm:rec_properties}
 $\Rec{\Sigma}$ forms an algebra in the variety $\C$ w.r.t to the 
 $\C$-algebraic operations on languages. Moreover, $\Rec{\Sigma}$ is closed 
 under derivatives, and the 
maps $a^{-1}(\dash)$ and $(\dash)a^{-1}$ on $\Rec{\Sigma}$ are morphisms 
of $\C$, i.e. the $\C$-algebraic operations preserve derivatives.
\end{theorem}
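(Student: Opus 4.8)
The plan is to prove this via the Stone-type duality of Assumption \ref{asm:dual}, reproducing the argument of \cite{ammu14} in the present setting; the hypotheses on $\S$ and $\C$ are tailored precisely to make it work. The crux is a \emph{local} analysis: fix a finite $\D$-monoid $M$ and a \emph{surjective} $\D$-monoid morphism $f\colon\Psi\Sigma^*\epito M$, and let $\mathrm{Rec}(f)$ denote the set of languages recognized by $f$.

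\emph{Step 1 (a local bijection).} Since $\Psi\Sigma^*$ is free over the \emph{set} $\Sigma^*$ in $\D$, a $\D$-morphism out of it is determined by its values on $\Sigma^*$; combined with surjectivity of $f$, this makes the assignment $p\mapsto(w\mapsto p(f(w)))$ a bijection $\D(M,\S)\cong\mathrm{Rec}(f)$. Composing with the duality isomorphisms $\D(M,\S)\cong\C(E(\S),E(M))\cong\C(\one_\C,E(M))\cong\under{E(M)}$ — which use $E(\S)\cong\one_\C$ from Assumption \ref{asm:dual}(v) and finiteness of $M$ — I obtain a bijection $\beta_f\colon\under{E(M)}\cong\mathrm{Rec}(f)$.

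\emph{Step 2 (the bijection is a $\C$-morphism).} Under $i\colon S\cong\under{O_\C}$ the set of \emph{all} languages becomes the $\C$-algebra $O_\C^{\Sigma^*}$, whose operations are exactly the (pointwise) $\C$-algebraic operations. I would show $\beta_f\colon\under{E(M)}\to O_\C^{\Sigma^*}$ is a $\C$-homomorphism by checking, for each word $w\in\Sigma^*$, that the composite $\ev_w\circ\beta_f\colon\under{E(M)}\to\under{O_\C}$ with evaluation at $w$ is a $\C$-morphism — this suffices since $O_\C^{\Sigma^*}$ has pointwise operations. Unwinding definitions, $\ev_w\circ\beta_f$ amounts to precomposing morphisms $M\to\S$ with the element $\ol{f(w)}\colon\one_\D\to M$ of $\under{M}$, which under the duality turns into postcomposition with the $\C$-morphism $E(\ol{f(w)})\colon E(M)\to E(\one_\D)=O_\C$; hence it is the underlying map of a $\C$-morphism. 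Consequently $\mathrm{Rec}(f)=\mathrm{im}(\beta_f)$ is a finite $\C$-subalgebra of $O_\C^{\Sigma^*}$, isomorphic to $E(M)$. I expect this to be the main obstacle: it is the single point where the abstract $\C$-structure on $E(M)$ must be reconciled with the concrete pointwise operations on languages, and it rests on a Yoneda-style unwinding of the definition of $i$ together with the functoriality of $E$.

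\emph{Step 3 (globalisation and derivatives).} Every recognizable language lies in some $\mathrm{Rec}(f)$ with $f$ surjective — factor a recognizing morphism through its image, which is finite since $\D$ is locally finite. These subalgebras form a \emph{directed} family: $\mathrm{Rec}(f)\subseteq\mathrm{Rec}(f')$ whenever $f$ factors through $f'$, and any two surjective $f,f'$ with finite codomains are dominated by the corestriction of $\langle f,f'\rangle$ to its (finite) image. Hence $\Rec{\Sigma}=\bigcup_f\mathrm{Rec}(f)$ is a directed union of $\C$-subalgebras of $O_\C^{\Sigma^*}$, and is therefore itself a $\C$-subalgebra — any $\C$-algebraic combination of finitely many recognizable languages lies in a common $\mathrm{Rec}(f)$. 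Finally, Lemma \ref{lem:recclosed} gives $a^{-1}L,La^{-1}\in\mathrm{Rec}(f)$ whenever $L\in\mathrm{Rec}(f)$; since under $\beta_f$ the map $a^{-1}(\dash)$ corresponds to precomposition with $f(a)\bullet\dash\colon M\to M$, i.e.\ to the underlying map of the $\C$-morphism $E(f(a)\bullet\dash)$, the derivative operations are $\C$-morphisms on each $\mathrm{Rec}(f)$ and hence on the directed union $\Rec{\Sigma}$ — equivalently, the $\C$-algebraic operations preserve derivatives, which completes the proof.
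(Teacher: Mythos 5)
The paper does not prove this statement: Theorem~\ref{thm:rec_properties} is imported verbatim from \cite{ammu14} ("In \cite{ammu14} we established \dots"), so there is no in-paper proof to compare against. Your reconstruction is correct and is exactly the duality argument the framework is built for: the local bijection $\D(M,\S)\cong\mathrm{Rec}(f)$ for surjective $f$ (surjectivity giving injectivity of $p\mapsto p\circ f$), its identification with $\under{E(M)}$ via $E(\S)\cong\one_\C$, the pointwise reduction through $\ev_w$ to postcomposition with $E(\ol{f(w)})\colon E(M)\to O_\C$, and the passage to the directed union over surjective quotients (using the image of $\langle f,f'\rangle$, finite by local finiteness) are all sound, as is the treatment of derivatives as $E(f(a)\bullet\dash)$ combined with Lemma~\ref{lem:recclosed}. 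The only points worth making explicit if you wrote this out in full are that $\one_\D=\Psi 1$ is finite (so $O_\C=E(\one_\D)$ is defined, by Assumption~\ref{asm:dual}(i)) and that each $\mathrm{Rec}(f)$, being the image of the $\C$-morphism $\beta_f$, already contains the constants, so the nonempty directed union is a genuine subalgebra of $O_\C^{\Sigma^*}$; both are immediate in your setup.
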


\begin{theorem}[Ad\'amek, Milius, Myers, Urbat 
\cite{ammu14}]\label{thm:locvar_properties}
For any finite set $\V_\Sigma$ of recognizable languages over $\Sigma$, the 
following statements are equivalent:
\begin{enumerate}[(i)]
\item $\V_\Sigma$ is closed under the $\C$-algebraic operations and 
derivatives. 
\item There exists a finite $\D$-monoid $P$ and a surjective $\D$-monoid morphism $e: \Psi\Sigma^*\epito P$ 
 such that $e$ recognizes precisely the 
languages in $\V_\Sigma$.
\end{enumerate}
\end{theorem}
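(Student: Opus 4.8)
The plan is to prove the two implications separately, in both cases leaning on the finite duality $E\colon\D_f^{op}\simeq\C_f$ of Assumption~\ref{asm:dual}(iv), the identification $E(\S)\cong\one_\C$, and the bijection $i\colon S\cong\under{O_\C}$ with $O_\C=E(\one_\D)$. Throughout I write $\mathcal{R}(e)$ for the set of languages recognized by a $\D$-monoid morphism $e$. The guiding idea is that, for a \emph{finite} $\D$-monoid $P$ and a surjection $e\colon\Psi\Sigma^*\epito P$, the recognized languages $\mathcal{R}(e)$ are nothing but a copy of the $\C$-algebra $E(P)$ sitting inside $\Rec{\Sigma}$, so that the two statements both become assertions about $E(P)$.

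For (ii)$\Rightarrow$(i), assume $e\colon\Psi\Sigma^*\epito P$ recognizes precisely $\V_\Sigma$. Closure of $\V_\Sigma$ under derivatives is immediate from Lemma~\ref{lem:recclosed}. For closure under the $\C$-algebraic operations I would read these operations off the $\C$-structure of $E(P)$. Each word $u$ gives an element $e(u)\in\under{P}$, i.e.\ a morphism $\one_\D\to P$, whose $E$-image is a $\C$-morphism $\ev_u\colon E(P)\to O_\C$. A short chase through the duality shows that if $L=p\c e$ is recognized via $p\colon P\to\S$ and $x_p\in\under{E(P)}\cong\D(P,\S)$ is the element dual to $p$, then $i(Lu)=\ev_u(x_p)$ for all $u$. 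Since $\ev_u$ is a $\C$-morphism, for any $n$-ary $\gamma$ of $\C$ and languages $L_k=p_k\c e$ one computes $i(\ol\gamma(L_1,\dots,L_n)(u))=\gamma^{O_\C}(\ev_u(x_{p_1}),\dots,\ev_u(x_{p_n}))=\ev_u(\gamma^{E(P)}(x_{p_1},\dots,x_{p_n}))=i(L'u)$, where $L'=p'\c e$ is recognized via the $p'$ dual to $\gamma^{E(P)}(x_{p_1},\dots,x_{p_n})$. Hence $\ol\gamma(L_1,\dots,L_n)=L'\in\mathcal{R}(e)=\V_\Sigma$.

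For (i)$\Rightarrow$(ii) I would build the monoid explicitly. Let $e\colon\Psi\Sigma^*\epito P$ be the surjective part of the $\D$-morphism $\langle L_\D\rangle_{L\in\V_\Sigma}\colon\Psi\Sigma^*\to\S^{\V_\Sigma}$, so $P$ is a finite $\D$-algebra (finite since $\S$ is finite, and $\V_\Sigma$ is finite). The first task is to show the $\D$-congruence $\ker e$ is a \emph{monoid} congruence, making $P$ a $\D$-monoid and $e$ a $\D$-monoid morphism. This is exactly where closure under derivatives enters: on generators one has $L_\D\c(u\bullet\dash)=(u^{-1}L)_\D$ (both are $\D$-morphisms agreeing on $\Sigma^*$), and since $u^{-1}L\in\V_\Sigma$ this gives $e(s)=e(s')\Rightarrow e(u\bullet s)=e(u\bullet s')$; the set of $t$ for which left multiplication preserves $\ker e$ contains $\Sigma^*$ and is closed under the operations of $\D$ (using that $\bullet$ is a bimorphism), hence is all of $\Psi\Sigma^*$, and symmetrically on the right. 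Combining the two one-sided statements yields that $\ker e$ respects $\bullet$.

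The second task, which I expect to be the main obstacle, is to show $e$ recognizes \emph{exactly} $\V_\Sigma$. The inclusion $\V_\Sigma\subseteq\mathcal{R}(e)$ is clear via the coordinate projections $q_L\colon P\to\S$ with $q_L\c e=L_\D$. For the reverse inclusion I would argue dually. Since $e$ is surjective, $p\mapsto p\c e$ is an injection $\D(P,\S)\hookrightarrow\Rec{\Sigma}$ with image $\mathcal{R}(e)$, and by the same pointwise computation as above it is a morphism of $\C$-algebras from $\under{E(P)}\cong\D(P,\S)$ into $\Rec{\Sigma}$, the latter being a $\C$-algebra by Theorem~\ref{thm:rec_properties}. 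The $q_L$ form a separating family, so dualizing the mono $P\monoto\S^{\V_\Sigma}$ (where $E$ sends the product to a coproduct of copies of $\one_\C$ and the mono to a $\C$-quotient thereof) shows that the duals of the $q_L$ generate $E(P)$; under the $\C$-morphism $p\mapsto p\c e$ these generators map to the languages $L\in\V_\Sigma$. Thus $\mathcal{R}(e)$ is the $\C$-subalgebra of $\Rec{\Sigma}$ generated by $\V_\Sigma$, and since $\V_\Sigma$ is already a $\C$-subalgebra by hypothesis (i), $\mathcal{R}(e)=\V_\Sigma$. The delicate points here are verifying that the coordinate maps generate $E(P)$ and that $p\mapsto p\c e$ respects the $\C$-algebraic operations; both rely essentially on the finiteness of $P$ (Assumption~\ref{asm:dual}(i)) so that $E$ applies, together with $E(\S)\cong\one_\C$, and care is needed to keep the handedness of derivatives matched to left versus right multiplication.
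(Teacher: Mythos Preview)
The paper does not actually prove this theorem: it is quoted verbatim as a result of Ad\'amek, Milius, Myers and Urbat \cite{ammu14} and is only \emph{used} (in the proof of Theorem~\ref{thm:schurec2}), so there is no ``paper's own proof'' to compare your attempt against. What you have written is therefore a self-contained reconstruction of the cited Local Variety Theorem rather than a comparison target.

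That said, your reconstruction is along the right lines and essentially matches how the result is obtained in \cite{ammu14}. The guiding identification $\mathcal{R}(e)\cong\under{E(P)}$ via $p\mapsto p\c e$ is exactly the point, and your use of the dual equivalence to turn the mono $P\monoto\S^{\V_\Sigma}$ into a $\C$-quotient of the free $\C$-algebra on $\V_\Sigma$ (so that the $q_L$ generate $E(P)$) is the standard argument. Two small things to tighten: first, in the ordered case you should speak of the stable preorder induced by $e$ rather than ``$\ker e$'', since subalgebras in $\D$ are represented by order-reflecting maps (cf.\ \ref{app:congruences}); the argument that derivative-closure makes it stable under $\bullet$ is unchanged. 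Second, the step ``$\ev_u$ is a $\C$-morphism, hence $i(\ol\gamma(L_1,\dots,L_n)(u))=\ev_u(\gamma^{E(P)}(\dots))$'' silently uses that the bijection $\under{E(P)}\cong\D(P,\S)$ is natural in $P$ and that $E$ takes the element $e(u)\colon\one_\D\to P$ to a $\C$-morphism $E(P)\to O_\C$ whose underlying map is evaluation at $p\mapsto p(e(u))$; this is true but deserves one explicit sentence, since it is precisely Assumption~\ref{asm:dual}(v) ($E(\S)\cong\one_\C$) that makes the identification $\under{E(-)}\cong\D(-,\S)$ work.
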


\begin{proof}[Theorem \ref{thm:schurec2}]
Let $\V_\Sigma$ be the closure of the set $\L_{M,N}(f)$ under $\C$-algebraic 
operations and derivatives. We first show that $\V_\Sigma$ is finite. Indeed, 
since $\L_{M,N}(f)$ is a finite set, and every recognizable language has 
only finitely many derivatives by Lemma \ref{lem:recclosed}, the closure 
$\V_\Sigma'$ of $\L_{M,N}(f)$ under derivatives is a finite set. By Theorem 
\ref{thm:rec_properties}, the closure of $\V_\Sigma'$ under $\C$-algebraic 
operations is again closed under derivatives, and thus equal to $\V_\Sigma$. 
Since 
$\C$ is a locally finite variety by Assumption \ref{asm:dual}(iv), this shows 
that $\V_\Sigma$ is finite. Therefore, by 
Theorem 
\ref{thm:locvar_properties}, there exists a finite $\D$-monoid $P$ and a 
surjective $\D$-monoid morphism 
$e: \Psi\Sigma^*\epito P$ such that $e$ 
recognizes 
precisely the languages of $\V_\Sigma\supseteq \L_{M,N}(f)$. Theorem 
\ref{thm:universalprop} 
gives a $\D$-monoid morphism $h: P\to M\diamond N$ with $h\c e = f$. 
Then every language recognized by $f$ (say via the morphism $s: M\diamond 
N\to 
\S$ in $\D$) 
is also recognized by $e$ (via the morphism $s\c h$), and therefore lies in 
$\V_\Sigma$. 
\end{proof}

\begin{rem}
For the categories $\C/\D$ of Example \ref{ex:predualcats}) one can drop 
the 
closure under derivatives in Theorem \ref{thm:schurec2}: in each 
case, the closure $\V_\Sigma$ of $\L_{M,N}(f)$ under $\C$-algebraic 
operations is already closed under derivatives. To see this, note that by 
Theorem 
\ref{thm:rec_properties} it suffices to show that every derivative of a 
language in 
$\L_{M,N}(f)$ lies in $\V_\Sigma$. The latter is clear for the languages $K$ and 
$L$ recognized by $\pi_M\c f$ resp. $\pi_N\c f$: by Lemma \ref{lem:recclosed} 
their derivatives are even elements of $\L_{M,N}(f)$. 
Now consider the languages of the form $KaL$ in $\L_{M,N}(f)$. One easily 
verifies that
\[
b^{-1}(KaL) =
\begin{cases}
(b^{-1}K)aL,& b\neq a;\\
(a^{-1}K)aL + K(\epsilon)L, & b=a,
\end{cases}
\]
and analogously for right derivatives. In the case $\D=\Vect{\K}$, this 
shows that any derivative of $KaL$ is a linear combination (i.e. a 
$\Vect{\K}$-algebraic combination) of languages in 
$\L_{M,N}(f)$, and thus lies in $\V_\Sigma$. For the other examples 
($\D=\Set,\Pos,\JSL$ with $\S=\{0,1\}$), the above case $b=a$ 
states that 
\[a^{-1}(KaL) = \begin{cases}
(a^{-1}K)aL & \epsilon\not\in L;\\
(a^{-1}K)aL \cup L & \epsilon\in L.
\end{cases}
\]
Thus every derivative of $KaL$ is a finite union of languages in $\L_{M,N}(f)$ 
and 
therefore lies in $\V_\Sigma$, since the union is part of the $\C$-algebraic 
operations 
for $\D=\Set,\Pos,\JSL$. 
\end{rem}

\section{Details for the Examples}
\textbf{Details for Example \ref{ex:monoidalfunc}.}
\begin{enumerate}[(1)]
\item See Example \ref{ex:upsimonoidal}.
\item This is a special case of Example \ref{ex:upsimonoidal} with 
$\D=\JSL$ 
and $\Psi = \Pow_f$ (cf. Example \ref{ex:psi}). Explicitly, $\Pow_f$ is 
monoidal 
w.r.t. the isomorphism $\theta_{X,Y}: \Pow_f X\t \Pow_f Y \cong \Pow_f(X\times 
Y)$ 
whose inverse $\theta_{X,Y}^{-1}$ maps $\{(x_1,y_1),\ldots, 
(x_n,y_n)\}\seq X\times
Y$ to $(\vee_{j=1}^n x_j\t y_j) \in X\t Y$, and the 
identity morphism $\theta_1 = \id: \one_\JSL \to \Pow_f 1$.
\item Let us first verify that $\Dn_f$ is a left adjoint to $U$.
 The unit is given by the monotone map $\eta_X: X\to U\Dn_f(X)$ with $\eta(x)=\dnarrow \{x\}$. 
Given a monotone map $h: X\to UA$ into a semilattice $A$, let $\ol 
h: \Dn_f(X)\to A$ be the function that maps a
finitely generated down-set $S=\dnarrow S_0$ of $X$ to the finite join 
$\bigvee S = \bigvee S_0\in \under{A}$. Clearly $\ol h$ is a semilattice morphism and satisfied $U(\ol h)\c \eta_X = h$. Moreover, since every finitely generated down-set is a finite union of one-generated downsets $\dnarrow \{x\}$, $\ol h$ is uniquely determined by this property.

To show that $U$ is monoidal, observe that for any two semilattices $A$ and 
$B$, the universal bimorphism $t_{A,B}: \under{A}\times\under{B}\to \under{A\t 
B}$ is monotone, since it preserves joins in each component (and is thus 
monotone in each component). Thus we can view $t_{A,B}$ as a morphism 
$t_{A,B}^*: UA\times UB \to U(A\t B)$ in $\Pos$, and in complete analogy to 
Example \ref{ex:upsimonoidal}(1) one can show that $U$ is monoidal w.r.t. the 
maps $t_{A,B}^*$ and the unit $\eta: 1\mapsto UF1 = U\one_\JSL$, where 
$1=\one_\Pos$ is the one-element poset.  Similarly, 
the proof that $\Dn_f$ is monoidal is analogous to Example 
\ref{ex:upsimonoidal}(2), with $\Set$ replaced by $\Pos$ and $\under{\dash}$, 
$\Psi$ by $U$, $\Dn_f$.  Explicitly, $\Dn_f$ is 
monoidal 
w.r.t. the isomorphism $\theta^*_{X,Y}: \Dn_f X \t \Dn_f Y \cong \Dn_f(X\times 
Y)$ 
whose inverse $(\theta_{X,Y}^*)^{-1}$ maps a down-set $\{(x_1,y_1),\ldots, 
(x_n,y_n)\}\seq X\times
Y$ to $(\vee_{j=1}^n x_j\t y_j) \in X\t Y$, and the 
identity morphism $\theta_1 = \id: \one_\JSL \to \Dn_f 1$.
\end{enumerate}
~\\
\noindent\textbf{Details for Example \ref{ex:monoidaladj}.}
The adjunction $\Psi\dashv \under{\dash}: \D\to\Set$ is monoidal by Example 
\ref{ex:upsimonoidal}. The proof given in that example also works for $U\dashv 
\Dn_f: \JSL\to \Pos$, replacing $\D$ by $\JSL$ and $\Set$ by $\Pos$. That 
$\Id\dashv\Id: \D\to\D$ is a monoidal adjunction is trivial. \\

\noindent\textbf{Details for Example \ref{ex:monoidaladj}.}
(3), (4), (5) are trivially concrete monoidal adjunctions. So is $\Pow_f\dashv 
\under{\dash}: \JSL\to \Set$, since the monoidal adjunction of $\Set$ is the 
identity adjunction $\Id\dashv \Id: \Set\to\Set$. Concerning $\Dn_f\dashv U: 
\JSL\to \Pos$, we clearly have $\under{\dash}_\JSL = \under{\dash}_\Pos \c U$ 
and $\Pow_f = \Dn_f \c \Psi_\Pos$. Since also the units and counits of these 
three adjunctions compose accordingly, the composite of the adjunction $\Dn_f 
\dashv U$ with 
$\Psi_\Pos \dashv \under{\dash}_\D$ is the adjunction $\Pow_f \dashv 
\under{\dash}_\JSL$.\\

\noindent\textbf{Details for Example \ref{ex:jointlym}.}
\begin{enumerate}[(1)]
\item For $\D=\Set$ or $\Pos$ and $\S=\{0,1\}$, the family $\{\,M\times N 
\xra{p\times q} \{0,1\}\times \{0,1\} \xra{\sigma} \{0,1\}\,\}_{p,q}$ is 
separating, where $\sigma(m,n)=m\o n$ is the multiplication of $S$. We prove this only for $\D=\Pos$, the argument 
for $\D=\Set$ being analogous. Let $(m,n)$ and $(m',n')$ be two 
elements of $M\times N$ with $(m,n)\not\leq(m',n')$, say 
$m\not\leq m'$. Choose $p: M\to \{0,1\}$ to be the monotone map with 
$p(x)=0$ iff $x\leq m'$, and $q: N\to\{0,1\}$ to be the constant map 
on $1$. Then
\[\sigma\c (p\times q)(m,n) = p(m)\o q(n) = 1\o 1 = 1\] and \[\sigma\c 
(p\times q)(m',n') = p(m')\o q(n') = 0\o 1 = 0,\] so $\sigma\c 
(p\times q)(m,n) \not\leq \sigma\c(p\times q)(m',n')$. This shows 
that
the family $(\sigma\c (p\times q))_{p,q}$ is separating.
\item Similarly, for $\D=\Vect{\K}$ the family $\{\,M\t N 
\xra{p\times q} \K\t \K \xra{\sigma} \K\,\}_{p,q}$ is separating. To see this, 
recall that if $M$ and $N$ are a vector spaces with bases $\{b_i\}_{i\in I}$ 
resp. $\{c_j\}_{j\in J}$, then 
the 
tensor product $M\t 
N$ has the basis $\{b_i\t c_j\}_{i\in I, j\in J}$. It suffices to 
show that, for any element $x\in 
\under{M\t 
N}$ with $x\neq 0$, there are linear maps $p: M\to \K$ and $q: N\to \K$ with 
$\sigma\c 
(p\t q)(x)\neq 0$. Suppose that $x=\sum_{i\in I, j\in J} 
\lambda_{i,j} (b_{i}\t c_j)$ with $\lambda_{i,j}\in \K$. 
Since $x\neq 0$, there exist $i_0\in I$ and $j_0\in J$ with 
$\lambda_{i_0,j_0}\neq 0$. Let $p: M\to \K$ and $q:N\to\K$ be the 
linear maps defined by
\[ p(b_i) =
\begin{cases}
1,&i=i_0;\\
0,&i\neq i_0,
\end{cases}
\quad\text{and}\quad
p(c_j) =
\begin{cases}
1,&j=j_0;\\
0,&j\neq j_0.
\end{cases}
 \]
Then
\begin{align*}
\sigma\c (p\t q)(x) = \sum_{i,j} \lambda_{i,j} [\sigma\c 
 (p\t q)(b_i\t c_j)]
 = \sum_{i,j} \lambda_{i,j} 
 (p(b_i)\o q(c_j))
 = \lambda_{i_0,j_0}
 \neq 0.
\end{align*}
Here we use the linearity of $\sigma$ and $p\t q$ in the first step, the definition of $\sigma$ and $p\t q$ in the second step, and the definition of $p$ and $q$ in the last step.
\item  If $\D=\JSL$ and $M$ and $N$ are finite idempotent semilattices, we can 
describe $M\ast N$ as 
follows. Consider the surjective semilattice morphism
\[ e\,\equiv\,(\, \Pow_f(\under{M}\times\under{N}) \xra{\ol t} M\t N 
\xra{\pi} M\ast N \,), \]
where $\ol t$ is the adjoint transpose of the universal bimorphism $t: 
\under{M}\times \under{N}\to \under{M\t N}$; cf. \ref{app:tensorproducts}. Let $\equiv\seq  
\Pow_f(\under{M}\times\under{N})\times  
\Pow_f(\under{M}\times\under{N})$ be the kernel of $e$, i.e. $X\equiv Y$ 
iff $\pi\c \ol t (X) = \pi\c \ol t (Y)$. By the definition of $\pi$, the latter means precisely that $\sigma\c 
(p\t q)\c \ol t(X) = 
\sigma\c
(p\t q)\c \ol t(Y)$ for all semilattice morphisms $p: M\to \{0,1\}$ 
and $q: 
N\to \{0,1\}$. Since such morphisms correspond to ideals, see Example 
\ref{ex:monrec}(3), we have $X\equiv Y$ 
iff, for all ideals $I\seq M$ and $J\seq N$,
\[ \exists (m,n)\in X: m\not\in I \wedge n\not\in 
J\quad\Lra\quad \exists (m',n')\in Y: m'\not\in I \wedge 
n'\not\in J.\]
Since $\equiv$ is a semilattice congruence on 
$\Pow_f(\under{M}\times\under{N})$, being the kernel of a semilattice 
morphism, $X\equiv Y$ and $X\equiv Z$ implies $X=X\cup X \equiv Y\cup Z$. 
Thus, for every $X\seq \under{M}\times\under{N}$ there exists a largest 
set $[X]\seq \under{M}\times\under{N}$ with $X\equiv [X]$, viz. 
the union of all $Y$ with $X\equiv Y$. It follows that $X\mapsto [X]$ defines 
a 
closure operator on $\Pow_f(\under{M}\times\under{N})$, and that every 
equivalence class of $\equiv$ contains a unique closed subset of 
$\under{M}\times\under{N}$ (viz. the union of all sets in the equivalence 
class). One easily 
verifies that $[X]$ consists of those elements $(x,y)\in 
\under{M}\times\under{N}$ such that, for all ideals $I\seq M$ and $J\seq N$,
\[ x\not\in I\wedge y\not\in J\quad\Ra\quad \exists(m,n)\in X: m\not\in I\wedge n\not\in J. \]
Since $\equiv$ is the kernel of $\pi\c \ol t$, we have
 $M\ast N 
\cong\Pow_f(\under{M}\times\under{N})/\mathord{\equiv}$. Identifying the 
equivalence classes of $\equiv$ with the closed subsets of $M\times N$, the 
join in the idempotent semiring $M\ast N$ is given by $[X]\vee [Y]=[X\cup Y]$, and the multiplication 
(see Lemma \ref{lem:mastn}) by $[X][Y]=[XY]$. 
\end{enumerate}
~\\
\noindent\textbf{Details for Example \ref{ex:predualcats}}
Clearly Assumption \ref{asm:dual}(i) holds for our examples
$\D=\Set$, $\Pos$, $\JSL$ and $\Vect{\K}$ ($\K$ a finite field). Also (ii) is 
well-known in all these cases. For $\Set$, $\Pos$ and $\Vect{\K}$, see e.g. 
\cite[Example 7.40]{adherstr}. For $\JSL$, see \cite{hornkimuara}.  Concerning 
(iii), that $\D(M,\S)$ 
and $\D(N,\S)$ are separating is easy to 
verify in all cases. Also, that $\{\,U(\ol{p\ast q}) \}_{p,q}$ forms a 
separating family is trivial for $\D=\JSL$ and 
$\Vect{\K}$, since here $U=\Id$ and $U(\ol{p\ast q}) = p\ast q$, and the 
morphisms $p\ast q$ are separating by definition. 

It remains to  consider the cases $\D=\Set$ and $\D=\Pos$. We only treat 
$\Pos$, the argument for $\Set$ being the discrete special case. We 
need to show 
that the family of monotone maps
\[\{\,\ol{p\times q}: \Dn_f (M\times N)\to \{0,1\}\,\}_{p: 
M\to\{0,1\},\,q:N\to\{0,1\}}\] is separating, where $\S=\{0,1\}$ 
(considered as a poset) is ordered by $0<1$.  
Note that $\ol{p\times q}$ maps a finitely generated down-set $X\seq M\times 
N$ to $1$ iff 
there exists a pair $(m,n)\in X$ with $p(m)=1$ and $q(n)=1$.
Let $X\not\seq Y$ be two finitely generated down-sets of $M\times N$. 
Choose an element $(m,n)\in 
X\setminus 
Y$, and define monotone maps $p: M\to \{0,1\}$ and $q: N\to \{0,1\}$ by
\[ \text{$p(x)=1$ iff $x\geq m$}\quad\text{resp.}\quad  \text{$q(y)=1$ iff 
$y\geq n$}.\]
Then we get
\[ \ol{p\times q}(X) = 1\not\leq 0 = \ol{p\times q}(Y), \]
i.e. $\ol{p\times q}$ separates $X$ and $Y$, as desired.


\end{document}